\newtheorem{theorem}{Theorem}[section]
\newtheorem*{nunumtheorem}{Theorem}
\newtheorem{lemma}[theorem]{Lemma}
\newtheorem{corollary}[theorem]{Corollary}
\newtheorem{definition}[theorem]{Definition}
\newtheorem{conjecture}[theorem]{Conjecture}
\newcommand{\fullgridgraph}{G^\mathrm{f}}\newcommand{\bindinggraph}{G^\mathrm{b}}
\title{Intrinsic universality in tile self-assembly \\ requires cooperation}
\author{
  Pierre-Etienne Meunier%
    \thanks{LAMA, Universit\'e de Savoie. \protect\url{pierreetienne.meunier@univ-savoie.fr}.}
\and
  Matthew J. Patitz%
    \thanks{Department of Computer Science and Computer Engineering, University of Arkansas,
      \protect\url{patitz@uark.edu} 
      Supported in part by National Science Foundation Grant CCF-1117672.}
\and
  Scott M. Summers%
    \thanks{Department
of Computer Science and Software Engineering, University of Wisconsin--Platteville, Platteville, WI 53818, USA.
\protect\url{summerss@uwplatt.edu}.}
\and
  Guillaume Theyssier%
\thanks{LAMA, CNRS and Universit\'e de Savoie, \protect\url{guillaume.theyssier@univ-savoie.fr} 
Supported in part by grant 'Agence Nationale de la Recherche ANR-09-BLAN-0164' }
\and
  Andrew Winslow%
  \thanks{Department of Computer Science, Tufts University, \protect\url{awinslow@cs.tufts.edu}. 
  Supported in part by National Science Foundation grants CCF-0830734 and CBET-0941538.}
\and
  Damien Woods%
    \thanks{Computer Science, California Institute of Technology, \protect\url{woods@caltech.edu}. 
    Supported by National Science Foundation grants 0832824 (The Molecular Programming Project), CCF-1219274, and CCF-1162589.}
}
\date{}
\begin{document}

\maketitle
\vspace{-6ex}
\begin{abstract}
We prove a negative result on the power of a model of algorithmic self-assembly for which it has been notoriously difficult to find general techniques and results. Specifically, we prove that Winfree's abstract Tile Assembly Model, when restricted to use noncooperative tile binding, is not intrinsically universal. This stands in stark contrast to the recent result that, via cooperative binding, the abstract Tile Assembly Model is indeed intrinsically universal. Noncooperative self-assembly, also known as ``temperature 1'', is where tiles bind to each other if they match on one or more sides, whereas cooperative binding requires binding on multiple sides. Our result shows that the change from single- to multi-sided binding qualitatively improves the kinds of dynamics and behavior that these models of nanoscale self-assembly are capable of. Our lower bound on simulation power holds in both two and three dimensions; the latter being quite surprising given that three-dimensional noncooperative tile assembly systems simulate Turing machines.
On the positive side, we exhibit a three-dimensional noncooperative self-assembly tile set capable of simulating any two-dimensional \emph{noncooperative} self-assembly system.

Our negative result can be interpreted to mean that Turing universal algorithmic behavior in self-assembly does not imply the ability to simulate arbitrary algorithmic self-assembly processes.
\end{abstract}

%

\section{Introduction}
\label{sec:intro}

Self-assembly is the process through which unorganized, simple, components automatically coalesce according to simple local rules to form some kind of target structure. It sounds simple, but the end result can be extraordinary. For example, researchers have been able to self-assemble a wide variety of structures experimentally at the nanoscale, such as regular arrays~\cite{WinLiuWenSee98}, fractal structures~\cite{RoPaWi04,FujHarParWinMur07}, smiling faces~\cite{rothemund2006folding,wei2012complex}, DNA tweezers~\cite{yurke2000dna}, logic circuits~\cite{seelig2006enzyme,qian2011scaling}, neural networks~\cite{qian2011neural}, and molecular robots\cite{DNARobotNature2010}. These examples are fundamental because they demonstrate that self-assembly can, in principle, be used to manufacture specialized geometrical, mechanical and computational objects at the nanoscale. Potential future applications of nanoscale self-assembly include the production of smaller, more efficient microprocessors and medical technologies that are capable of diagnosing and even treating disease at the cellular level.

Controlling nanoscale self-assembly for the purposes of manufacturing atomically precise components will require a bottom-up, hands-off strategy. In other words, the self-assembling units themselves will have to be ``programmed'' to direct themselves to do the right thing--efficiently and correctly.  Molecular self-assembly is rapidly becoming  a ubiquitous engineering paradigm, and we need to develop a theory to inform
us of its algorithmic capabilities and ultimate limitations. %

In 1998, Erik Winfree \cite{Winf98} introduced the abstract Tile Assembly Model (aTAM), an over-simplified discrete mathematical model of algorithmic DNA nanoscale self-assembly pioneered by Seeman \cite{Seem82}. The aTAM is an  asynchronous  nondeterministic cellular automaton, that models crystal growth processes. Put another way, the aTAM essentially augments classical Wang tiling \cite{Wang61} with a mechanism for sequential ``growth'' of a tiling (in Wang tiling, only the existence of a valid, mismatch-free tiling is considered and not the order of tile placement). In the aTAM, the fundamental components are un-rotatable, but translatable square or cube ``tile types'' whose sides are labeled with %
``glue'' colors and integer ``strengths''. Two tiles that are placed next to each other \emph{interact} if the glue colors on their abutting sides match, and they \emph{bind} if the strengths on their abutting sides match and sum to at least a certain (integer) ``temperature''. Self-assembly starts from a ``seed'' tile type and proceeds nondeterministically and asynchronously as tiles bind to the seed-containing-assembly. Despite its deliberate over-simplification, the aTAM is a computationally expressive model. For example, by using  cooperative binding (that is, binding of tiles on two or more sides), Winfree \cite{Winf98} proved that it is Turing universal, which implies that self-assembly can be directed by a computer program. Here, we study  noncooperative binding.

Tile self-assembly in which tiles may be placed in a {\em noncooperative} fashion is colloquially referred to as ``temperature-1 self-assembly''.  Despite the esoteric name, this is a fundamental and ubiquitous form of growth: it refers to growth from {\em growing and branching tips} in  %
Euclidian space where each new tile is added if it can match on at least {\em one side}.
It has been known for some time that a more general form of growth where some of the tiles must match on two or more sides,  i.e.\  {\em cooperative} growth, leads to highly non-trivial behavior: arbitrary Turing machine simulation~\cite{RotWin00, jCCSA}, efficient production of $n \times n$ squares and other simple shapes using $\Theta(\log n/\log \log n)$ tile types~\cite{AdChGoHu01}, efficient production of arbitrary finite connected shapes using a number of tile types that is within a log factor of the Kolmogorov complexity of the shape~\cite{SolWin07}, and even intrinsic universality: the existence of a single tile set that simulates arbitrary tile assembly systems~\cite{IUSA}.
Until now, it was not known whether or not two-dimensional noncooperative binding has these capabilities without possibility of error, although in all cases the answer has been conjectured to be negative~\cite{RotWin00, Doty-2011, Reif-2012, Cook-2011, Manuch-2010, Patitz-2011}.  %
Our main result is such a negative result. %
Simply put, there is no noncooperative tile set that simulates all other tile assembly systems.

The topic of intrinsic universality, with its tight notion of simulation, has given rise to a rich theory in the field of cellular automata~\cite{bulkingI,bulkingII,arrighi2012intrinsic,Ollinger08}, and indeed has also been studied in Wang tiling~\cite{LafitteW07,LafitteW08,LafitteW09} and  tile self-assembly~\cite{USA, IUSA, 2HAMIU}.  Recently, the aTAM has been shown to be intrinsically universal~\cite{IUSA}, meaning that there is a single set of tiles~$U$ that works at temperature 2, and when appropriately initialized, is capable of simulating the behavior of an arbitrary aTAM tile assembly system. Modulo rescaling, this single tile set $U$ represents the full power and expressivity of the entire aTAM model, at any temperature. Indeed, Demaine et al~\cite{Demaine-2012}  apply this to show that there is a single (rotatable, translatable)  polygonal tile that can simulate any tile assembly system or Wang plane tiling system.   The restricted ``locally consistent'' aTAM  also exhibits intrinsic universality~\cite{USA}.
More recently, it has been shown that  the two-handed model of self-assembly (where large assemblies of tiles may come together in a single step) is not intrinsically universal~\cite{2HAMIU}. However, the same paper shows that  for each ``temperature'' $\tau \in \{ 2,3,4, \ldots \}$ there is a tileset  that is intrinsically universal for the class of two-handed systems that work at temperature $\tau$~\cite{2HAMIU}, and that there is an  infinite hierarchy of classes of systems with each level strictly more powerful than the one below.  As has been done for cellular automata, intrinsic universality in self-assembly, with its well-defined and powerful notion of simulation, is becoming  a new  tool by which we can tease apart the computational power of self-assembly systems.

\subsection{Results}
We give an overview of our results, although a number of terms have not yet been formally defined. For %
definitions, see  Section~\ref{sec:prelims}.
Our main result states that in the standard   noncooperative model  (i.e.\ temperature-1 aTAM in 2D) there is no intrinsically universal tile set. The proof is contained in Section~\ref{sec:2D_temp1_not_aTAM_universal}.

\begin{theorem}\label{thm:not-iu}
There is no tile set $U$ such that $U$ is intrinsically universal at temperature 1 for the class of all aTAM tile assembly systems.
\end{theorem}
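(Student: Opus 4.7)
The plan is a proof by contradiction. Assume that some tile set $U$ is intrinsically universal at temperature~$1$ for the entire class of aTAM tile assembly systems, and derive a contradiction by exhibiting a target aTAM system $T$ whose behavior cannot be faithfully simulated by $U$ under any macro-block scheme.

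\emph{Step 1: a pumping lemma for temperature-$1$ assemblies in 2D.} In any temperature-$1$ assembly, each non-seed tile binds to the existing assembly via a single glue, so the ``attachment-parent'' relation forms a forest rooted at the seed tiles. Consider a producible assembly containing a root-to-frontier path whose length greatly exceeds $|U|^2$. By pigeonhole, two tiles along the path share the same tile type and the same incoming-glue direction; the translation vector between them is a ``period'' that can in principle be iterated. Planarity of the ambient $\mathbb{Z}^2$ grid plays a central role here: an infinite pumped branch is a periodic discrete path in the plane, which either escapes to infinity or wraps geometrically, and in either case imposes a Jordan-curve-style obstruction on later growth of the assembly. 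This step crucially fails in three dimensions, which matches the asymmetry in the paper between the 2D negative result and the 3D positive result.

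\emph{Step 2: lift the pumping lemma to the simulation setting.} A candidate simulation of $T$ by $U$ represents each tile type of $T$ by an $m \times m$ macro-block of $U$-tiles, and the dynamics of $T$ are tracked by the growth of these blocks. Information distinguishing different simulated glues must propagate along $U$-paths within or across macro-blocks. By Step~1, any such path that is long relative to $|U|$ must contain a pumpable segment. Iterating the pump either produces an infinite $U$-assembly (contradicting the finiteness of the $T$-assembly it is supposed to represent) or collides with another portion of the growing simulation in a way that prevents the simulator from ever completing a macro-block whose corresponding $T$-tile is supposed to be placed there.

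\emph{Step 3: construct the adversarial target system $T$.} Choose $T$ at temperature~$2$ whose correct simulation provably forces the simulator's internal signal paths to exceed the length threshold of Step~1, while also forcing those paths to terminate at precise locations. Good candidates are families of cooperative systems with two competing growth fronts that meet only after arbitrarily long noncooperative propagation, or systems whose producible set contains many distinct finite terminal assemblies sharing a long common prefix that must nonetheless be distinguished. For a sufficiently large instance of such a $T$, the pumping conclusion of Step~2 contradicts either the finiteness or the geometric correctness of the simulated terminal assemblies, yielding the desired contradiction.

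\emph{Main obstacle.} The delicate point is robustness of Step~1 under adversarial macro-block design. A would-be simulator can try to block pumping by surrounding candidate paths with obstacles, by introducing spurious decoy tiles, or by exploiting nondeterminism to reorder growth so that the ``would-be pumper'' never actually attaches. The heart of the proof must show that any such workaround either creates a fresh pumpable path elsewhere, or forces the simulator to miss a legitimate $T$-producible assembly. I expect this to require a careful case analysis of how temperature-$1$ paths can interact in the plane (again exploiting planarity), together with a rigid design of $T$ that leaves no geometric room for the simulator to hide the pumping behavior.
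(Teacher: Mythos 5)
There is a genuine gap, and it sits exactly where you flag your ``main obstacle.'' Your Step~1 --- repeat a tile type (with the same incoming-glue direction) along a long temperature-1 path and iterate the translation as a period --- is essentially the ``pumpability'' conjecture of Doty, Patitz and Summers, which the paper explicitly notes is unproven and which is known to fail naively: one can build directed temperature-1 systems containing sub-paths that begin and end with the same tile type yet cannot be pumped, because the translated copies collide with previously placed tiles. Your Step~2 then inherits this gap, and your closing paragraph concedes that handling adversarial blocking ``must'' be done but does not say how; that unhandled case analysis \emph{is} the theorem. The paper avoids path-pumping altogether. Its key tool is a ``window movie lemma'': take a cut (window) of the plane and record the entire time-ordered sequence of glue placements along it; if two assembly sequences have translated windows with identical (bond-forming) movies, then the left part of one assembly can be spliced to the right part of the other and the result is still producible --- including splicing \emph{out} material (``pumping down''), which is what defeats blocking arguments, since no new territory needs to be traversed. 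Pigeonhole is applied not to tile types on a path but to the finitely many glue-sequences possible on vertical cuts through the (height-at-most-$3m$) simulated arm, over an arm of length roughly $((g+1)^{6m}(6m)!+1)\cdot 3$.

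The second missing idea is where temperature~1 is actually used. The target system $\mathcal{T}$ is a fixed temperature-2 system with two nondeterministically long arms whose fingers cooperatively place a keystone only when the arm lengths match. One stops the simulation one step before the first tile that either completes a path through the keystone region or lands in the flagpole block; at that moment no bonds cross the keystone region, so the window can be routed through it without affecting the bond-forming movie. Splicing two matching cuts of the bottom arm yields a producible simulator assembly with \emph{unequal} arms, and then --- because at temperature~1 the critical tile $t$ needs only one glue on one side, a glue still present after the splice --- $t$ still attaches, producing an assembly that maps to no producible assembly of $\mathcal{T}$. This single-glue observation is the precise point where noncooperativity is exploited, and it has no counterpart in your outline. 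Finally, note that your reliance on planarity misreads the results: the paper's impossibility holds in 3D as well (the positive 3D result only concerns simulating 2D temperature-1 systems), and indeed the window-movie argument is dimension-independent, whereas a Jordan-curve-based Step~1 could never yield the 3D statement.
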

Our main result stands in stark contrast to the fact that if we permit cooperative binding (that is, temperature 2) then there is a universal tile set for the aTAM: %
\begin{nunumtheorem}[Doty, Lutz, Patitz, Schweller, Summers, Woods $\!\!\!\,$\cite{IUSA}] \label{thm:IUSA}
There is a tile set $U$ such that $U$  is intrinsically universal at temperature 2 for the class of all aTAM tile assembly
systems.
\end{nunumtheorem}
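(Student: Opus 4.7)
The plan is to explicitly construct a single tile set $U$ that, when given an appropriately chosen seed assembly, simulates (in the intrinsic sense defined in Section~\ref{sec:prelims}) an arbitrary aTAM tile assembly system $\calT = (T, \sigma, \tau)$ at temperature $2$. The simulation will use the standard macro-tile paradigm: fix a scale factor $c = c(\calT)$ polynomial in $|T|$ and $\log|T|$, and represent each cell $(x,y) \in \Z^2$ of $\calT$'s assembly by a $c \times c$ block of $U$-tiles (a macro-tile). A representation function $R$ sends a macro-tile to the tile type of $\calT$ it encodes (or to the ``empty'' symbol if the macro-tile is not yet completed). The target is to verify the two directions required by intrinsic universality: every producible assembly of $\calT$ is represented by some $U$-assembly, and every $U$-assembly projects under $R$ to a producible assembly of $\calT$.

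The design of $U$ proceeds in modules. First, I would lay out the \emph{seed macro-block}: a hard-coded encoding of the seed $\sigma$ together with a binary encoding of the entire tile set $T$ (each tile type listed as a tuple of four (label, strength) pairs). Second, each macro-tile has four \emph{glue regions} along its boundary; incoming cooperative binding from a neighboring macro-tile deposits a binary word encoding a glue label and strength on the corresponding side. Third, an internal \emph{compute region} uses the Turing-universal power of temperature-$2$ cooperative binding (Winfree's construction) to: (a) collect whichever input glue words have arrived, (b) scan the encoded tile library to nondeterministically choose a tile type $t \in T$ that binds with total strength $\geq \tau = 2$ to the currently arrived inputs, and (c) broadcast the corresponding output glue words to the three remaining sides. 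Because $\tau = 2$ in $\calT$, binding in $\calT$ fires either on one strength-$2$ glue or on two strength-$1$ glues; the macro-tile must therefore be able to fire from a single strong input or wait for two weak inputs, and the compute region is wired to block output until a valid binding is certified. The seed macro-block additionally sends initial ``already present'' signals along its outer boundary to mark the seed tile's glues.

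The key steps in order are: (1) specify macro-tile geometry, glue regions, and the representation function $R$; (2) implement the internal lookup/nondeterministic-choice computation using known temperature-$2$ gadgets (binary counters, comparators, and a ``pick one of several matching rows'' selector) and bound $c$ by the size of this circuit; (3) design the \emph{growth protocol} so that output glues are committed to exactly one side at a time in a fixed order, and so that a macro-tile corresponding to cell $(x,y)$ commits its east output glue only after certifying that a tile of $\calT$ would place at $(x,y)$ from the current inputs; (4) prove by induction on the assembly sequence of $\calT$ that every producible $\alpha \in \calA[\calT]$ has a corresponding $U$-assembly $\alpha'$ with $R(\alpha') = \alpha$, and conversely that any partial $U$-assembly either extends to represent a producible assembly of $\calT$ or corresponds (under $R$) to a producible assembly where the incomplete macro-tile is interpreted as ``empty''; (5) conclude that terminal assemblies of $U$ correspond exactly to terminal assemblies of $\calT$, which is the definition of intrinsic simulation.

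The main obstacle will be step (3), the growth protocol: I need to guarantee that the macro-tile machinery never ``cheats'' by committing output glues before its input side has truly received all the cooperatively-binding inputs it will ever receive in the simulated dynamics. If a macro-tile commits outputs assuming only one input arrived, but a second cooperative input later arrives in $\calT$, the simulation desynchronizes. The standard fix, which I would adopt, is to make outputs themselves require a cooperative strength-$2$ acknowledgement from an internal ``ready'' line that is asserted only after the macro-tile has raced its input-arrival process sufficiently to match the arrival order in $\calT$; formalizing this so that the induced order of macro-tile completions in $U$ is a faithful refinement of some valid assembly sequence of $\calT$, for every such sequence, is the subtle part of the argument and consumes most of the verification work.
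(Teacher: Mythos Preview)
This theorem is not proved in the present paper; it is quoted from \cite{IUSA} and used only as a contrast to the paper's negative results. There is therefore no proof here to compare your proposal against directly.

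That said, your outline is broadly along the right lines (macro-tiles, an encoded tile table, a per-block lookup, four output supersides), and you correctly identify step~(3) as the crux. The gap is that your proposed fix for step~(3) is not a mechanism: an ``internal ready line asserted only after the macro-tile has raced its input-arrival process sufficiently'' presumes the supertile can locally decide when it has seen all the cooperating inputs it will ever see, which it cannot. A macro-tile that has received one strength-$1$ input has no way to distinguish ``a second strength-$1$ partner is on its way'' from ``no partner will ever arrive''; any acknowledgement line you wire up either fires too early (breaking equivalent productions) or may never fire (breaking the modelling direction of the dynamics).

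The construction in \cite{IUSA} (whose probe/competition idea is reused in Section~\ref{sec:3D_temp1_simulates_2D_temp1} of this paper) resolves this differently. Each input superside grows a \emph{probe} toward a designated central location of the supertile; probes carry their glue data, and a geometric race at the center decides which side(s) supply the binding event. Only after that competition is resolved is a tile type nondeterministically selected (from those consistent with the winning inputs) and only then are output supersides laid down. This replaces your unimplementable ``wait for all relevant inputs'' with a concrete, purely local race whose outcome is always some legal $\calT$-attachment. Without substituting something like this probe/competition gadget for your ready-line idea, the inductive correspondence you promise in step~(4) cannot be established.
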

This proves that noncooperative  systems can not simulate  cooperative systems, and shows that temperature 1 systems are provably weaker than temperature 2 systems in terms of their ability to simulate structure and dynamics. The same proof from Section~\ref{sec:2D_temp1_not_aTAM_universal}  also works in 3D: %
\begin{theorem}\label{thm:3d-not-iu}
There is no 3D tile set $U$ such that $U$ is intrinsically universal at temperature 1 for the class of all  aTAM tile assembly systems.
\end{theorem}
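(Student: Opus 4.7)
The plan is to show that the proof of Theorem~\ref{thm:not-iu} transfers essentially verbatim to three dimensions. The first observation is that the 2D argument establishing noncooperative non-universality cannot rely on any computational weakness of the simulating system---this is fortunate, because 3D temperature-1 systems can simulate Turing machines, as noted in the introduction, so any purely information-theoretic obstruction would immediately fail. Instead, the argument must exploit geometric and combinatorial facts about how a \emph{fixed} finite tile set $U$ encodes and reproduces the assembly dynamics of an arbitrary target system. My job is to isolate which of those facts survive the jump to 3D.

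First I would audit the 2D proof to locate every place dimension enters. The definitions of macrotile, scale factor, the local-determinism conditions of the simulation relation, and the hypothesis that $|U|$ is a constant, are all dimension-agnostic. The 2D proof presumably identifies a family $\{\mathcal{T}_n\}$ of target aTAM systems whose dynamics force any hypothetical universal $U$ to carry at least $\Omega(f(n))$ bits of state across certain growth interfaces, then derives a contradiction with the finiteness of $U$ via a pumping or window argument on noncooperatively produced paths. The core of that pumping argument is that in any temperature-1 assembly, a single matching side suffices for binding, so any growing tip behaves like a finite-state machine walking through space; long growth segments necessarily repeat local context, and between two repeats the intermediate tiles can be excised or iterated without disturbing anything outside a small window. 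This statement is not inherently two-dimensional.

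The key technical step is therefore to re-prove the 2D window/pumping lemma in 3D. In 2D the window separating two parts of an assembly is a finite 1D cut, crossed by only $O(1)$ tile interactions. In 3D the analog is a finite closed 2D surface (for instance the boundary of an axis-aligned box), and the pumping argument must verify that the total information flow across such a surface, restricted to the bounded portion of the assembly contained inside, is still bounded by a function of $|U|$ and of the surface's bounded complexity. Once established, iterating the pumping/shortcutting operation produces, from any supposed simulator, infinitely many distinct systems $\mathcal{T}_n$ that force $U$ to contain infinitely many distinguishable macrotile behaviors, contradicting $|U| < \infty$.

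The hardest step will be the 3D separator construction. In 2D the existence of a finite separating cut is essentially trivial because a finite simply-connected region of the square lattice has a finite boundary cycle, and cutting along that cycle disconnects inside from outside. In 3D the same topological fact holds for axis-aligned boxes, but one must be careful that the combinatorial bookkeeping---how many tile types of $U$ can appear on the surface, how many ways the interior can causally depend on that surface, and how a pumping or excision preserves the simulation relation with $\mathcal{T}_n$---still yields a bound independent of the target system. After the separator version of the window lemma is secured, the remainder of the proof (the explicit family $\mathcal{T}_n$, the diagonalization against $U$, and the verification that the simulation definition is violated) should go through by the same argument as in Section~\ref{sec:2D_temp1_not_aTAM_universal}, with every lattice $\mathbb{Z}^2$ replaced by $\mathbb{Z}^3$ and every 1D window replaced by a 2D surface.
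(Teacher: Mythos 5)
There is a genuine gap: your plan defers essentially all of the content to a presumed free-standing 2D proof and then reconstructs that proof incorrectly. In the paper the logic actually runs the other way (one argument is given that applies verbatim to 2D and 3D simulators, and the 2D theorem falls out as a corollary), and more importantly the argument is not of the form ``a family $\{\mathcal{T}_n\}$ forces unboundedly many bits across interfaces, contradicting $|U|<\infty$.'' That style of contradiction is unavailable here: intrinsic universality lets the scale factor $m$ and the seed assembly grow with the simulated system, so a fixed finite $U$ can realize arbitrarily many distinct $m$-block (macrotile) behaviors as $m$ grows, and finiteness of $U$ by itself bounds nothing. What the paper actually does is fix one very simple temperature-2 system $\mathcal{T}$ with a single cooperative event: two arms of nondeterministic length grow fingers, and only if the arms have equal length can a keystone tile attach cooperatively, followed by a flagpole and flag. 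One then takes a single simulated production of $\mathcal{T}$ with both arms of one carefully chosen huge length, applies pigeonhole \emph{within that one simulation} (over the at most $g+1$ glue choices at the $O(m)$ edge positions of a vertical cut through one arm, together with their temporal orderings) to find two translated windows with identical bond-forming window movies, and invokes Lemma~\ref{lem:windowmovie} (via Corollary~\ref{cor:windowmovie}) to splice out the intervening segment, obtaining a \emph{producible} simulator assembly whose arms have unequal length. A single further tile placement---the tile that completes a path through the keystone region, or a tile in the flagpole block---then produces an assembly mapping to no producible assembly of $\mathcal{T}$, violating Definitions~\ref{def-equiv-prod} and~\ref{def-t-follows-s}. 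None of this (the choice of $\mathcal{T}$, the role of the unique cooperative step, the splice-then-one-more-tile contradiction) appears in your proposal, and it is the heart of the proof.

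Two further points. Your pumping intuition---``a growing tip is a finite-state machine, so repeated local context can be excised or iterated without disturbing anything outside a small window''---is precisely the trap the paper warns against: at temperature 1 a repeated context along a path cannot in general be pumped, because later growth can be blocked by earlier portions of the assembly. The window movie lemma is engineered to avoid this: it concerns an entire edge cut-set of the grid graph, records the temporal order of \emph{all} glues appearing across it (restricted to bond-forming ones in the corollary), holds at every temperature, and in the application the windows must additionally be routed through the keystone/fuzz region so that they cross no bonds there. Finally, the 3D ``separator'' issue you flag as the hardest step is a non-issue in this framework: a window is simply a cut-set of the infinite grid graph, a dimension-independent notion, and what makes the pigeonhole work is that the fuzz restriction confines the simulated arm's cross-section to $O(m)$ positions, in 3D just as in 2D.
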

The latter negative result is interesting in how it stands in contrast to the known result that 3D temperature 1 can indeed simulate arbitrary algorithms:
\begin{nunumtheorem}[Cook, Fu, Schweller \cite{Cook-2011}] \label{thm:3d-t1-computation}
For each Turing machine $M$ and input~$x$ there exists a 3D temperature 1 tile assembly system $\mathcal{T}_{M,x}$ that simulates  the computation of  $M$ on $x$.
\end{nunumtheorem}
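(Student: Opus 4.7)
The plan is to simulate $M$ on $x$ by a 3D temperature-$1$ tile assembly system whose growth mimics the standard zigzag tape diagram for a Turing machine, using the extra spatial dimension purely as a ``bridge'' that routes head/state information around tape-symbol information so that cooperative binding is never needed.

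First I would lay out a base layer, occupying the $z=0$ plane, that represents the simulated tape along the $x$-axis, one tile per cell. The seed encodes the input $x$ and the initial head/state in this first row, with each tape-cell tile carrying the current symbol on its north glue, ready to be passed upward into the next simulated configuration. Second, I would handle the transition from row $t$ to row $t+1$. Every cell not under the head is easy: a single ``copy'' tile attaches using the lone symbol glue of the cell below and propagates the symbol one step north. The difficult cell is the one about to receive the head, because choosing the correct new tile requires two pieces of information (the incoming state and the symbol already written there), and at temperature~$1$ these cannot be combined in one planar attachment. The key move is to route the incoming head/state information upward into $z>0$, across in the $x$-direction over the intervening tape cells, and then back down onto the destination cell. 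Each tile along this 3D path attaches via a single matching glue to its predecessor, so the binding is genuinely noncooperative; by the time the state information arrives at $z=0$ on the target cell, both ``inputs'' to the TM transition have been delivered to a single lattice location and can be consumed by one transition tile via a single glue match.

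Third, I would iterate this scheme, alternating the direction of growth row by row so that successive TM steps are simulated in a predictable zigzag pattern, with a halting tile blocking further growth as soon as $M$ reaches a halting state; the terminal assembly then encodes the full computation history of $M$ on $x$.

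The main obstacle I expect is not designing the bridges themselves but ruling out spurious attachments. Because temperature~$1$ allows any single matching glue to nucleate an attachment anywhere on the frontier, every glue label in the tileset is a potential site for unintended growth, and a single stray bond could corrupt the simulation. The countermeasure is both combinatorial and geometric: partition the glue alphabet by role (tape symbol, state/direction, rise-up, cross-over, come-down, row-advance, halt) so that distinct subroutines never share glue labels, and place the 3D bridges on dedicated $z$-levels chosen so that no two bridges or two simulated configurations ever occupy, or grow adjacent to, the same lattice site. Establishing this no-collision invariant is the technical heart of the proof; once it is in hand, correctness of $\mathcal{T}_{M,x}$ follows by induction on the number of simulated steps of $M$.
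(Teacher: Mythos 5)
There is a genuine gap at the very step you call the ``key move.'' At temperature~$1$ a tile attachment is licensed by a \emph{single} matching glue, and the attaching tile's identity is determined by that one glue alone; it is never forced to be consistent with any other glue on its other faces. So when your 3D bridge brings the state information down onto the destination cell, the tile placed there binds via the glue from the bridge and therefore ``knows'' only the state, not the symbol stored in the tape cell beneath it (and symmetrically, a copy tile attaching from below knows only the symbol and, worse, can occupy the destination site before the bridge ever arrives, since your non-head columns copy upward independently and asynchronously). Delivering two pieces of information to one lattice location does not let one noncooperative attachment consume both; that is precisely the cooperation you are not allowed to use. The actual mechanism in Cook, Fu, and Schweller's construction (and the one this paper reuses in its read--write gadgets of Section~\ref{sec:3D_temp1_simulates_2D_temp1}) is different: one of the two inputs is encoded \emph{geometrically}, as a bump or blocking tile placed earlier at a dedicated location, and the path carrying the other input then \emph{branches} into candidate continuations, all but one of which are blocked; the surviving branch's glues encode the combination of both inputs. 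Your proposal uses blocking only for halting and for collision avoidance, not for reading, so the central transition step fails.

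Relatedly, your parallel row-by-row copying breaks the write-before-read ordering that the blocking mechanism needs. As the paper notes, the construction in \cite{Cook-2011} is ``one single non-blocked path, with many tiny blocked branches'': a single one-tile-wide zig-zag path sweeps each configuration sequentially, carrying the state in its glues, reading each tape symbol through blocked branches, and writing the next row's symbols as geometric bumps before anything can attempt to read them. Without serializing growth into such a path (or otherwise proving that every blocking tile is placed before the branch that must read it can form), the no-collision and correctness invariants you invoke cannot be established by induction. Your glue-partitioning and level-separation ideas are sensible hygiene, but they do not substitute for the branch-and-block read gadget, which is the heart of the theorem.
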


So, the process of tile assembly can be simulated by a (Turing machine) algorithm, and  3D temperature-1 can simulate arbitrary algorithms, yet 3D temperature-1 can not simulate self-assembly in a way that preserves structure and dynamics. This result essentially says that in a noncooperative growth-based setting, the ability to simulate arbitrary algorithms does not confer the ability to simulate arbitrary algorithmic tile-based growth dynamics.

\subsubsection{Positive results}
Our negative results should be contrasted with our positive result, which is proved in Section~\ref{sec:3D_temp1_simulates_2D_temp1}. We find that 3D noncooperative tile assembly can in fact simulate 2D noncooperative tile assembly, in other words, 3D temperature-1 simulates 2D temperature-1.

\begin{theorem}\label{thm:3D-sim-2D-simple-statement}
There is a 3D tile set $U$ such that $U$ is intrinsically universal at temperature 1 for the class of all 2D aTAM tile assembly systems.
\end{theorem}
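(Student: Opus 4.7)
I will construct a single 3D tile set $U$ at temperature~1 which, for every 2D aTAM tile assembly system $\mathcal{T}=(T,\sigma,\tau)$ with arbitrary temperature $\tau\geq 1$, simulates $\mathcal{T}$ at some scale $c=c(\mathcal{T})$. Each tile position of the simulated 2D plane (identified with $z=0$) is represented by a 3D \emph{macrotile}: a $c\times c\times c$ block whose lowest layer records which simulated tile was placed and whose upper layers ($z>0$) are used as scratch space for the cooperative-binding check. Between adjacent macrotiles, glue information is carried by \emph{ribbons} --- thin strips of tiles along a shared vertical face whose label sequence uniquely encodes one glue of $T$. When a macrotile commits to simulating a tile $t\in T$, it grows four outgoing ribbons (one for each of $t$'s glue labels), thereby triggering its neighbors.

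The \textbf{cooperative-check gadget} is instantiated once for every pair $(t,B)$ where $t\in T$ and $B\subseteq\{N,E,S,W\}$ is a binding configuration realizing $t$ at temperature $\tau$, i.e.\ $\sum_{d\in B}\mathrm{str}_d(t)\geq\tau$. For each such $(t,B)$, $U$ contains a family of tile types whose labels are uniquely tagged by $(t,B)$ and a step counter, forming a 3D \emph{check route} routed through the layers $z>0$. The route nucleates when an input ribbon arrives on some side $d_0\in B$ carrying the glue $t$ expects on $d_0$, then threads through the macrotile so as to successively visit every other side in $B$. At each visit it traverses a \emph{reading segment} laid alongside the neighbor's ribbon for $\Theta(\log|T|)$ tile positions, using the labels of the ribbon to determine its own continuation tile at each step. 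The tile set is designed so that for each step of a reading segment, the continuation tile that advances the route is uniquely determined by the combination of the previous route tile's outgoing label and the label of the ribbon tile currently adjacent, and if the ribbon at any such position disagrees with the sequence $t$ expects on that side, the only tile of $U$ whose label matches the previous route tile is a dead-end with no further extensions; thus a single wrong glue aborts the route.

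The \textbf{main obstacle} is that temperature~1 only requires one matching glue for placement, so a naive continuation tile can attach based solely on the previous route tile's label, regardless of what the ribbon actually says, allowing spurious commits to a tile whose cooperative binding is not realized in $\mathcal{T}$. Overcoming this is where the 3D freedom is essential. I will adapt the obstacle-routing machinery of Cook--Fu--Schweller\cite{Cook-2011} to force each step of a reading segment to pass through a 3D ``keyhole'' whose only geometrically available continuation lies flush against the ribbon, and to introduce one fresh alias of each previous-route-label per possible ribbon label, so that the set of tiles with matching internal glue is partitioned cleanly into one continuation tile per ribbon label; when the actually-present ribbon label is not the one $t$ expects, the uniquely placeable tile is one of the dead-end types and growth halts before any outgoing ribbon is emitted. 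Verifying that the required partitioning is realizable with a finite $U$ independent of $\mathcal{T}$, and choosing $c(\mathcal{T})$ large enough to accommodate all $(t,B)$-routes without their reading segments interfering, are the main quantitative points to check.

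If every reading segment of an $(t,B)$-route completes successfully, the route reaches a central \emph{commit column} that records $t$ in the layer $z=0$ and from which the four outgoing ribbons encoding $t$'s output glues are grown. Correctness is then established by two standard simulation lemmas with respect to a representation function $R$ that reads the commit column of each macrotile: every producible 3D assembly over $U$ projects under $R$ to a producible 2D assembly of $\mathcal{T}$, and conversely, every producible 2D assembly of $\mathcal{T}$ together with any legal single-tile extension is realized by a producible 3D assembly of $U$ obtained by nondeterministically sequencing ribbon growths and the corresponding check routes. Matching the asynchronous/nondeterministic 2D and 3D growth orderings is routine, carried out essentially as in\cite{IUSA}.
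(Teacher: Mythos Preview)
Your proposal has a fatal scope error that renders the entire construction impossible. You set out to build a 3D temperature-1 simulator for \emph{every} 2D aTAM system $\mathcal{T}=(T,\sigma,\tau)$ with arbitrary $\tau\geq 1$. But that is precisely what Theorem~\ref{thm:3d-not-iu} of this paper proves cannot be done: the counterexample system there is a specific 2D temperature-2 TAS, and the window movie lemma shows that any 3D temperature-1 simulator of it must produce invalid assemblies. Theorem~\ref{thm:3D-sim-2D-simple-statement}, despite the terse phrasing, is proved in Section~\ref{sec:3D_temp1_simulates_2D_temp1} only for 2D \emph{temperature-1} systems $\mathcal{T}=(T,\sigma,1)$; the section title and the opening sentence of the proof make this explicit. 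So your target statement is strictly stronger than the theorem and is in fact false.

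The place your construction breaks is exactly where you try to simulate cooperative binding. Your ``cooperative-check gadget'' for a pair $(t,B)$ with $|B|\geq 2$ nucleates from one side $d_0\in B$ and then attempts to read the ribbon on each other side $d\in B$. You propose that the continuation tile at each reading step is determined by the previous route tile's glue together with the adjacent ribbon tile, enforced geometrically via a Cook--Fu--Schweller keyhole. But Cook--Fu--Schweller blocking only works when the blocking tile is guaranteed to be present \emph{before} the reader arrives, and in their construction that is ensured because the blocker lies on an earlier segment of the \emph{same} path. Here the blocker is the neighbor's ribbon, which is produced by an independent arm and may simply not have arrived yet. When the ribbon position is empty, no branch is geometrically blocked; at temperature~1 either continuation tile can attach on the strength of the route glue alone. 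If the dead-end tile attaches, the route is permanently killed even though the correct ribbon may arrive later (so valid $\mathcal{T}$-attachments are missed, violating $\mathcal{S}\models_R\mathcal{T}$). If the advancing tile attaches, the route commits to $t$ without the cooperating glue ever being present (so $\mathcal{T}\dashv_R\mathcal{S}$ fails). There is no temperature-1 fix for this; it is the content of the negative result.

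If you restrict yourself to $\tau=1$, every legal $B$ is a singleton and the cooperative-check gadget collapses to nothing, so most of your machinery is unnecessary. What remains to be handled, and what your proposal does not address, is the competition between several independent input sides trying to commit different tile types to the same macrotile: you need a mechanism that lets exactly one input side win and place the commit, and that lets the losers terminate without corrupting the output. The paper's construction handles this with probes racing to a single central tile position, with the winner then growing back along itself to read the input superside and lay out the outputs; your sketch has no analogue of this arbitration step.
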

Finally, we conjecture the following:
\begin{conjecture}\label{conj:no-2d-sim-2d}
There is no 2D tile set $U$ such that $U$ is intrinsically universal at temperature 1 for the class of all 2D aTAM temperature 1 tile assembly systems.
\end{conjecture}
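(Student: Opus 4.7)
The plan is to adapt the negative-result techniques used for Theorem~\ref{thm:not-iu} so that the obstruction applies within the class of 2D temperature-1 systems themselves. In the proof of Theorem~\ref{thm:not-iu} one may exploit the presence of cooperative binding in the simulated systems to force dynamics that noncooperative growth cannot reproduce; here both the simulator and the simulated systems are noncooperative, so the obstruction must instead come from a planarity-specific bandwidth limitation of 2D temperature-1 growth. The fact that the analogous statement fails in 3D (Theorem~\ref{thm:3D-sim-2D-simple-statement}) tells us that any such obstruction must genuinely use the plane: the argument must break when given a third dimension in which signals may be routed around one another.

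Concretely, I would first sharpen the structural machinery underlying Theorem~\ref{thm:not-iu}. Every producible assembly in a temperature-1 system is a union of paths from the seed along which consecutive tiles share a single matching glue, and a window-movie / pumping argument shows that sufficiently long ``straight'' subpaths drawn from a fixed tile set must admit repeatable segments. From this I would try to extract a \emph{bandwidth bound}: if $U$ is a candidate intrinsically universal tile set and the simulation uses macrotile scale $c$, then at most a constant number of independent unpumped information channels (depending on $|U|$ and~$c$, not on the simulated system) can cross any straight cut separating two adjacent macrotiles. Next I would design a family $\{\mathcal{T}_n\}_{n\in\N}$ of 2D temperature-1 systems whose correct simulation requires an unbounded number of independent planar signals to reach a common decision point before any of them has ``committed'' by pumping — a construction in the spirit of the 3D positive result, where the extra dimension is essential because in the plane the signals would be forced to cross. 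Combining the bandwidth bound with a window-movie interchange argument should then yield two distinct systems $\mathcal{T}_n,\mathcal{T}_{n'}$ whose $U$-simulated dynamics coincide across a common macrotile interface, contradicting faithful intrinsic simulation.

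The main obstacle will be the freedom granted by the definition of intrinsic simulation: the seed assembly for $\mathcal{T}_n$ may itself encode arbitrary system-dependent information, and the scaling factor $c$ is permitted to depend on the simulated system. A purely local pumping argument inside a single assembly is therefore insufficient; one must pump along paths that grow \emph{after} the seed is consumed, and the bandwidth/planarity argument must be made uniformly in~$c$ — for every fixed $U$ and every scaling permitted by the simulation definition, some $\mathcal{T}_n$ must exhaust the available channels. I expect the hardest step to be formalising the planarity-forced bandwidth limitation in a form robust enough to compose cleanly with the window-movie machinery of Theorem~\ref{thm:not-iu}, precisely because it is this limitation that must fail in three dimensions and hence cannot be a purely combinatorial statement about tile sets — it must invoke planarity of the underlying grid in an essential way.
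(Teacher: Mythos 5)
You should first note a mismatch of status: the statement you are proving is stated in the paper as Conjecture~\ref{conj:no-2d-sim-2d}, and the paper gives no proof of it (its negative results, Theorems~\ref{thm:not-iu} and~\ref{thm:3d-not-iu}, concern simulating a \emph{temperature-2} system). So there is no paper proof to compare against, and what you have written is a research plan rather than a proof; as it stands it does not establish the conjecture.

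The concrete gaps are these. First, your ``bandwidth bound'' --- that only a constant number of independent unpumped channels can cross a cut between macrotiles --- does not follow from Lemma~\ref{lem:windowmovie}. The window movie lemma lets you splice two producible assemblies whose (bond-forming) window movies coincide; it does not by itself let you conclude that long temperature-1 paths contain repeatable segments, because a repeated segment may be blocked by earlier parts of the same path. That stronger statement is essentially the unproven pumpability conjecture of Doty, Patitz and Summers cited in the paper, so your plan quietly assumes an open problem. Second, there is a quantifier-order problem in your family $\{\mathcal{T}_n\}$: under the definition of intrinsic universality, the scale factor $m$ and the seed assembly are chosen \emph{after} the simulated system, so any channel-count bound that depends on $|U|$ and $c=m$ can be evaded by letting the simulator choose a larger $m$ for larger $n$. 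The unbounded demand must instead be generated inside a single fixed simulated system (as the paper does with arbitrarily long, nondeterministically chosen arms in its keystone system), since within one system $m$ is fixed while the assemblies grow without bound. Third, the engine of the paper's proof is that the simulated system performs a genuinely cooperative step that the spliced simulator assembly cannot legitimately represent; when the simulated system is itself temperature 1, an assembly obtained by splicing the simulator typically maps under $R^*$ to something the simulated system could also produce (the simulated system can be ``pumped'' too), so you need a new invariant --- presumably the planar signal-crossing obstruction you gesture at, which is consistent with Theorem~\ref{thm:3D-sim-2D-simple-statement} failing to have a 2D analogue --- that is preserved by the simulated dynamics but violated by the spliced simulator. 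Identifying and proving such an invariant is precisely the open content of the conjecture, and your proposal leaves it unformalized.
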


\subsubsection{Other results}
The proof of Theorems~\ref{thm:not-iu} and \ref{thm:3d-not-iu}, also holds for the restricted class of ``locally consistent'' aTAM systems~\cite{USA}. In~\cite{USA} it was shown that there is a locally consistent tile set that is intrinsically universal at temperature 2 for all locally constant systems. Here we show that temperature 1 can not even simulate this restricted class of systems (proof: the TAS~$\mathcal{T}$ shown to be un-simulatable at temperature 1 in the proof of Theorem~\ref{thm:not-iu} is locally  consistent):
\begin{theorem}\label{thm:locally-consistent}
There is no tile set $U$ such that $U$ is intrinsically universal at temperature 1 for the class of all locally consistent aTAM tile assembly systems.
\end{theorem}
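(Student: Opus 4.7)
The plan is to leverage the proof of Theorem~\ref{thm:not-iu} almost directly, observing that it is in fact a stronger statement than the theorem asserts. The proof of Theorem~\ref{thm:not-iu} proceeds by exhibiting a single fixed tile assembly system $\mathcal{T}$ and showing that no temperature-1 tile set $U$ can simulate $\mathcal{T}$ (modulo rescaling). The universal quantification over aTAM systems is never used in the contradiction; only the existence of this one ``bad'' $\mathcal{T}$ is needed. Consequently, if we can show that $\mathcal{T}$ itself belongs to the class of locally consistent aTAM systems, then any candidate tile set $U$ claiming intrinsic universality for the locally consistent class must simulate $\mathcal{T}$, and the very same obstruction applies.

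So the first step is to inspect the construction of $\mathcal{T}$ from the proof of Theorem~\ref{thm:not-iu} and verify that it satisfies the definition of a locally consistent aTAM system. Recall that local consistency requires (roughly) that whenever a tile attaches to the producible assembly, every glue on its attached sides agrees with the glue of its neighbor and, moreover, the local neighborhood of each filled position uniquely determines the tile type placed there, leaving no ``room'' for glue mismatches. The system $\mathcal{T}$ used to refute intrinsic universality is a temperature-2 cooperative construction whose combinatorial structure is chosen precisely to exhibit dynamics (such as long-range cooperative choices) that temperature-1 growth cannot reproduce; by design it is built from carefully labeled glues with no competing tile types per glue context.

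Next, I would spell out the reduction. Let $U$ be any tile set claimed to be intrinsically universal at temperature~1 for the class of locally consistent aTAM systems. Since $\mathcal{T}$ is in this class by the previous step, $U$ must intrinsically simulate $\mathcal{T}$ (under some scale factor and representation function). But the proof of Theorem~\ref{thm:not-iu} derives a contradiction from exactly this hypothesis: no temperature-1 tile set can simulate $\mathcal{T}$ in the sense of intrinsic simulation. This contradiction establishes Theorem~\ref{thm:locally-consistent}.

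The main obstacle — and really the only nontrivial content beyond Theorem~\ref{thm:not-iu} — is the verification of local consistency of $\mathcal{T}$. This reduces to a finite, syntactic check on the tile set of $\mathcal{T}$: for each glue label appearing in the system, confirm that the tile type exposing that glue on a given side is uniquely determined by its neighboring glues, and that no producible assembly of $\mathcal{T}$ ever exposes mismatched abutting glues. I expect this check to be straightforward given the modular, mismatch-free design of $\mathcal{T}$, so the bulk of the work is the already-established Theorem~\ref{thm:not-iu}; Theorem~\ref{thm:locally-consistent} is, in essence, a free corollary once one observes the class containment.
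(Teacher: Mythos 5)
Your proposal matches the paper's own argument: the paper proves Theorem~\ref{thm:locally-consistent} exactly by noting that the specific temperature-2 system $\mathcal{T}$ (arms, fingers, keystone, flagpole, flag) used to defeat any temperature-1 simulator in the proof of Theorems~\ref{thm:not-iu} and~\ref{thm:3d-not-iu} is itself locally consistent (each tile attaches either by a single strength-2 glue or, for the keystone, by two cooperating strength-1 glues summing to exactly the temperature, with no mismatches), so the class containment plus the un-simulability of $\mathcal{T}$ gives the result immediately. This is essentially the same approach, with the only caveat being that your paraphrase of ``locally consistent'' should be replaced by the precise definition from~\cite{USA} when performing the finite check on $\mathcal{T}$.
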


Intrinisic universality uses a strong notion of simulation where the simulator is a single tile set that simulates all tile assembly systems from some class. A weaker form of simulation is where for each tile assembly system $\mathcal{T}$ from some class, there exists a simulator  tile assembly system $\mathcal{T}'$ (from another class), that simulates  $\mathcal{T}$ (see, e.g., \cite{AKKRS09, versus, Demaine-2012}). Our proof shows that even this weaker form of simulation of temperature-2 is impossible at temperature 1:
\begin{theorem}\label{thm:no-forall-exists-sim}
There is a 2D temperature-2 tile assembly system $\mathcal{T}$ that can not be simulated by any 2D, nor any 3D, temperature 1 tile assembly system.
\end{theorem}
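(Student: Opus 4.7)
The plan is to observe that Theorem~\ref{thm:no-forall-exists-sim} is essentially a direct corollary of the way Theorems~\ref{thm:not-iu} and~\ref{thm:3d-not-iu} are established in Section~\ref{sec:2D_temp1_not_aTAM_universal}. The standard strategy for ruling out intrinsic universality is to fix a single ``witness'' temperature-2 system $\mathcal{T}$ in the simulated class and then show that \emph{no} temperature-1 simulator can reproduce it; crucially, such an argument does not distinguish between simulators drawn from a fixed universal tile set $U$ and simulators using an arbitrary tile set. So the first step is to locate in the proof of Theorem~\ref{thm:not-iu} the explicit 2D temperature-2 aTAM system $\mathcal{T}$ that serves as the witness, and verify that the contradiction there is quantified ``for all temperature-1 tile sets $T'$ and all seeds $\sigma'$'' rather than ``for the fixed universal $U$ and all seeds.''

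Next I would trace through the key combinatorial input. In a temperature-1 setting the typical tool is a pumping/windmill lemma: any sufficiently long noncooperatively grown path in $\mathbb{Z}^2$ (or $\mathbb{Z}^3$) contains a repeated tile, and the segment between two occurrences can be translated and iterated to produce an infinite periodic assembly; the repetition length depends only on $|T'|$, not on any ``universality'' property of $T'$. Given a hypothetical $(T',\sigma',1)$ simulating $\mathcal{T}$ at some scale $m$, one applies this lemma to the macrotiles implementing the cooperative behavior of $\mathcal{T}$, derives a forced translational symmetry incompatible with the structure $\mathcal{T}$ is supposed to produce, and concludes that $(T',\sigma',1)$ cannot simulate $\mathcal{T}$ after all. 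Since the argument is uniform in $T'$, the \emph{same} $\mathcal{T}$ fails to be simulated by every 2D temperature-1 system, and, by the remark that the proof of Theorem~\ref{thm:3d-not-iu} uses the same $\mathcal{T}$ together with a pumping statement in $\mathbb{Z}^3$, every 3D temperature-1 system as well.

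Combining these two observations delivers Theorem~\ref{thm:no-forall-exists-sim}: exhibit the single $\mathcal{T}$ used in the proofs of Theorems~\ref{thm:not-iu} and~\ref{thm:3d-not-iu}, and invoke the uniform (in $T'$) contradiction to conclude that no 2D or 3D temperature-1 tile assembly system simulates $\mathcal{T}$. The main obstacle is a careful-reading step rather than a new idea: one must confirm that nowhere in the proof of Theorem~\ref{thm:not-iu} or~\ref{thm:3d-not-iu} is a bound on $|U|$ (or a choice of scale tied to $U$) used to diagonalise over a growing family of simulated systems. I expect this verification to go through immediately, since the natural shape of the argument is to fix $\mathcal{T}$ first and only then choose $m$ and apply pumping to the alleged simulator $T'$; if any step were secretly $U$-specific, the remedy would be to reorder the quantifiers so that the pumping bound is applied to $T'$ after $T'$ is fixed, which does not change the structure of the proof.
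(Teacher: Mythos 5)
Your top-level argument is exactly the paper's proof of Theorem~\ref{thm:no-forall-exists-sim}: the paper proves it simply by observing that the argument in Section~\ref{sec:2D_temp1_not_aTAM_universal} fixes one concrete 2D temperature-2 system $\mathcal{T}$ (the arms/fingers/keystone/flagpole/flag system) and derives a contradiction from an \emph{arbitrary} claimed temperature-1 simulator $(U,\sigma_\mathcal{T},1)$ at an arbitrary scale $m$; the quantities used there (the number of glues $g$ of the simulator, the scale $m$, the arm length $((g+1)^{6m}\cdot(6m)!+1)\cdot 3+6$) are chosen only after the simulator is fixed, and no universality property of $U$ is ever invoked, so the same argument rules out every 2D and 3D temperature-1 simulator of this single $\mathcal{T}$. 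Where your write-up goes astray is the middle paragraph describing the ``key combinatorial input'': the paper does \emph{not} use a repeated-tile pumping/windmill lemma of the form ``a sufficiently long temperature-1 path contains a repeated tile type, and the segment between the two occurrences can be translated and iterated to an infinite periodic assembly.'' That is precisely what is unavailable at temperature 1 --- a repeated segment can be blocked by earlier portions of the path, and this ``pumpability'' statement is an open conjecture discussed in the paper's prior-work section. The actual tool is the window movie lemma (Lemma~\ref{lem:windowmovie} and Corollary~\ref{cor:windowmovie}): one pigeonholes over entire window movies, i.e.\ the ordered sequences of glue placements along vertical cuts of the simulated bottom arm (at most $(g+1)^{6m}\cdot(6m)!$ possibilities), finds two cuts with identical bond-forming submovies, and splices the two producible assemblies together --- here ``pumping down'' to obtain a producible assembly with arms of unequal length --- after which one further tile placement near the keystone/flagpole region yields a violation of Definitions~\ref{def-equiv-prod} and~\ref{def-t-follows-s}. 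Since the uniformity-in-the-simulator property you set out to verify does hold for this actual argument, your conclusion stands; but if your verification step were carried out with the naive pumping lemma as you describe it, that step would fail.
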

The proof of this is the same as the proofs of Theorem~\ref{thm:3d-not-iu},  and given in Section~\ref{sec:2D_temp1_not_aTAM_universal}.\footnote{To see that the same proof applies, note that Section~\ref{sec:2D_temp1_not_aTAM_universal} defines a specific temperature 2 tile assembly system $\mathcal{T}$, and shows  that there is no  temperature-1 simulator for $\mathcal{T}$.}

\subsection{Key technical ideas and methods}
One of the main challenges with proving negative results about 2D temperature~1 self-assembly comes from the intuition that, although the assemblies produced at temperature-1 often look ``obviously simple'' (they are a collection of simple paths, possibly with repeating tile types), it seems extremely difficult to prove this.
This is because it is easy to overlook geometry and quickly become seduced into believing that, as a result of the noncooperative nature of temperature-1 self-assembly, it must always be possible to indefinitely repeat (or ``pump'') sub-paths of tiles that begin and end with the same tile type. However, it is easy to construct an example of a 2D temperature-1 self-assembly system that uniquely produces a final structure, which contains at least one sub-path that begins and ends with the same tile type but the sub-path can not be pumped indefinitely because it gets ``blocked'' by previous portions of the path.
Could a long growth path that blocks itself, but branches just before doing so, simulate meaningful computation? Surprisingly, both the  the 2D low-error, and 3D no-error, temperature-1 Turing machine simulations in~\cite{Cook-2011} iterate exactly this idea, over and over, along with some clever geometric tricks. Our result here shows that neither this, nor any trick, will suffice to show that 2D nor 3D temperature-1 simulates aTAM tile self-assembly. %

To show this limitation on temperature-1, we first prove Lemma~\ref{lem:windowmovie} that gives a sufficient condition for taking any two assemblies, at any temperature $\geq 1$, and ``splicing'' them together to create a new valid assembly.  This gives a kind of strong pumping lemma for self-assembly. This lemma generalizes Theorem 3.1 of~\cite{Aggarwal-2005}, which was (a) proven for a more  restrictive scenario where the assemblies are contained in long and thin rectangles, and (b) works only for pumping a positive number of times---ours works for negative pumping (i.e. shrinking/splicing out) also.%

Armed with this lemma, we then give an example, very simple, temperature 2 tile assembly system $\mathcal{T}$ that uses cooperative binding (binding on 2 sides) in exactly one tile position, with all other bonds being noncooperative. We show that any claimed  temperature-1 simulation of this system must fail, and the place it fails is at the location where it should simulate cooperative binding. Any claimed simulator tile set is free to choose to use arbitrary scaling and a complicated-looking seed assembly, and may have a large (but constant) number of tile types; nevertheless we can use our pumping lemma to splice out parts of the simulation and trick it into exposing its inability to simulate cooperation. The proof is given in Section~\ref{sec:2D_temp1_not_aTAM_universal}: it works in both 2D and 3D which  gives Theorems~\ref{thm:not-iu} and~\ref{thm:3d-not-iu}. In the proof, since $\mathcal{T}$ is locally consistent, we get also Theorem~\ref{thm:locally-consistent}, and since we exhibited  a specific $\mathcal{T}$ that can not be simulated we get Theorem~\ref{thm:no-forall-exists-sim}.

In Section~\ref{sec:3D_temp1_simulates_2D_temp1} we show that the 3D temperature-1 aTAM can indeed simulate the 2D temperature-1 aTAM. The construction makes extensive use of the fact that in 3D, a closed curve does not necessarily partition the space into two parts.  It repeatedly  uses the third dimension as a means of sidestepping the limitations of planarity, and for ``stepping up and over'' locations reserved for future growth, then ``stepping down'' to place  blocking tiles which will later block specific paths, and then returning to continue growth  %
along a path which will eventually read this geometric blocking information. 
Similar blocking was used by %
Cook, Fu, and Schweller~\cite{Cook-2011}.  However, their construction consists of one single non-blocked  path, with many tiny blocked branches. %
Our construction simulates the multiple, often  independent,  paths of the simulated system %
by using many paths, each of which has many tiny branches that all get blocked, except for one.  This forces the construction to correctly handle a variety of timing issues related to the growth of the assembly, always ensuring that any needed blocking tiles \emph{must} be placed before the path which will ``read'' them cane form, and also to correctly deal with all possible situations where divergent paths (i.e.\ those simulating the independent additions of separate tiles) may later converge on a location.  This is dealt with using a ``competition'' scheme similar to that in \cite{USA} and \cite{IUSA}.

\subsection{Prior work on noncooperative binding}
Many examples (referenced above) testify that cooperative binding in tile self-assembly is sufficient for the self-assembly of computationally and geometrically interesting shapes and patterns.
But is it necessary? In other words, is cooperative binding more powerful than noncooperative binding?

Unfortunately and frustratingly, few general techniques exist for proving lower bounds in 2D temperature-1 self-assembly. However, there are some nice examples that begin to expose its the limitations. For instance, Rothemund and Winfree~\cite{RotWin00} proved that the  number of unique tile types required to uniquely self-assemble a \emph{fully-connected} $n \times n$ square in 2D at temperature-1 is $\geq n^2$ and conjectured that, in general, $2n-1$ unique tile types are necessary to uniquely self-assemble $n \times n$ squares at temperature-1. Manuch et al.~\cite{Manuch-2010} proved that the minimum number of unique tile types required to uniquely self-assemble an $n \times n$ square in 2D, at temperature-1, with {\em no glue mismatches}, is $2n - 1$. Note that the latter result does not assume a fully-connected terminal structure, whereas the former does. Doty,  Patitz and Summers~\cite{Doty-2011} formalized a notion of ``pumpability'' in temperature-1 self-assembly: a 2D temperature-1 self-assembly system that uniquely produces an infinite structure is ``pumpable'', if for every sufficiently long path of tiles, it is always possible to find at least one infinitely repeatable sub-path of tiles along this path (although not every sub-path that begins and ends with the same tile type may be infinitely repeatable).  They conjecture that all 
2D temperature-1 tile systems that uniquely produce some final structure are pumpable, 
and under the assumption of pumpbility  they prove that %
 the shape or pattern it produced is necessarily ``simple'' in the sense of Presburger arithmetic~\cite{Presburger30}. However, their  conjecture %
remains unproven.

\subsection{Prior work on intrinsic universality}
Intrinsic universality uses a strict notion of simulation, where the simulator preserves the dynamics of the simulated system, modulo a constant-sized (block) rescaling. In particular, an intrinsically universal cellular automata is one where its space-time diagrams contain (via a representation function) those of {\em any} simulated cellular automaton: where (in 1D) a single cell in the simulated automaton is represented by an $m \times t$ block in the simulator. Despite this strong requirement, intrinsically universal cellular automata were shown to be very common in some natural classes of rules~\cite{BoyerT09} and there are examples with very small programs (rules)~\cite{ollinger-fourstates}.
The idea that intrinsic universality could facilitate the finding of lower bounds and negative results was conjectured, for example in~\cite{ollinger-fourstates}, and a general method was proposed in~\cite{goles-communicationcomplexity}. Since then, intrinsic universality, and in particular communication complexity theory, have been used as general tools to show negative results on cellular automata~\cite{goles-communicationcomplexity,NesmeT11,BricenoR13,ChaccMR11}.

The notion of simulation we use can be thought of as a reduction between systems,  however  it is stronger than usual reductions defined via
algorithmic resource constraints (time, space, even constant circuit depth, etc.). %
For computational models it is often difficult to prove negative results  separating computational power, however, our strict notion of simulation shifts the difficulty from proving hardness results to proving simulation (or
\emph{completeness}) results.
However, now that we have examples of intrinsically universal tile sets~$U$~\cite{IUSA, USA, 2HAMIU} we know that arbitrary  ``tile programs'' can be written, analyzed and compiled into such $U$; it captures everything (modulo rescaling).  Not only that,  we claim that  our notion of simulation is a powerful tool because %
we have gained the ability to prove lower bounds and impossibility results, as this paper shows

\section{Preliminaries}\label{sec:prelims}

\subsection{Informal description of the abstract Tile Assembly Model}
\label{sec-tam-informal}

This section gives a brief informal sketch of the abstract Tile Assembly Model (aTAM). %
See Section~\ref{sec-tam-formal} for a formal definition of the aTAM. In this section, we define the 2D aTAM, whereas in Section~\ref{sec-tam-formal} we formulate the $d$-dimensional aTAM. For purposes of notational convenience, throughout this paper we will use the term ``aTAM'' will refer to the 2D aTAM.

A \emph{tile type} is a unit square with four sides, each consisting of a \emph{glue label}, often represented as a finite string, and a nonnegative integer \emph{strength}. A glue~$g$ that appears on multiple tiles (or sides) always has the same strength~$s_g$. %
There are a finite set $T$ of tile types, but an infinite number of copies of each tile type, with each copy being referred to as a \emph{tile}. An \emph{assembly}
is a positioning of tiles on the integer lattice $\Z^2$, described  formally as a partial function $\alpha:\Z^2 \dashrightarrow T$. %
Let $\mathcal{A}^T$ denote the set of all assemblies of tiles from $T$, and let $\mathcal{A}^T_{< \infty}$ denote the set of finite assemblies of tiles from $T$.
We write $\alpha \sqsubseteq \beta$ to denote that $\alpha$ is a \emph{subassembly} of $\beta$, which means that $\dom\alpha \subseteq \dom\beta$ and $\alpha(p)=\beta(p)$ for all points $p\in\dom\alpha$.
Two adjacent tiles in an assembly \emph{interact}, or are \emph{attached}, if the glue labels on their abutting sides are equal and have positive strength. %
Each assembly induces a \emph{binding graph}, a grid graph whose vertices are tiles, with an edge between two tiles if they interact.
The assembly is \emph{$\tau$-stable} if every cut of its binding graph has strength at least~$\tau$, where the strength   of a cut is the sum of all of the individual glue strengths in the cut.

A \emph{tile assembly system} (TAS) is a triple $\calT = (T,\sigma,\tau)$, where $T$ is a finite set of tile types, $\sigma:\Z^2 \dashrightarrow T$ is a finite, $\tau$-stable \emph{seed assembly},
and $\tau$ is the \emph{temperature}.
An assembly $\alpha$ is \emph{producible} if either $\alpha = \sigma$ or if $\beta$ is a producible assembly and $\alpha$ can be obtained from $\beta$ by the stable binding of a single tile.
In this case we write $\beta\to_1^\calT \alpha$ (to mean~$\alpha$ is producible from $\beta$ by the attachment of one tile), and we write $\beta\to^\calT \alpha$ if $\beta \to_1^{\calT*} \alpha$ (to mean $\alpha$ is producible from $\beta$ by the attachment of zero or more tiles).
When $\calT$ is clear from context, we may write $\to_1$ and $\to$ instead.
We let $\prodasm{\calT}$ denote the set of producible assemblies of $\calT$.
An assembly is \emph{terminal} if no tile can be $\tau$-stably attached to it.
We let   $\termasm{\calT} \subseteq \prodasm{\calT}$ denote  the set of producible, terminal assemblies of $\calT$.
A TAS $\calT$ is \emph{directed} if $|\termasm{\calT}| = 1$. Hence, although a directed system may be nondeterministic in terms of the order of tile placements,  it is deterministic in the sense that exactly one terminal assembly is producible (this is analogous to the notion of {\em confluence} in rewriting systems).
Since the behavior of a TAS $\calT=(T,\sigma,\tau)$ is unchanged if every glue with strength greater than $\tau$ is changed to have strength exactly $\tau$, we assume  that all glue strengths are in the set $\{0, 1, \ldots , \tau\}$.

\subsection{Formal description of the abstract Tile Assembly Model}
\label{sec-tam-formal}

This section gives a formal definition of the abstract Tile Assembly Model (aTAM)~\cite{Winfree-1998}. For readers unfamiliar with the aTAM, Section~\ref{sec-tam-informal} contains a less formal overview and~\cite{RotWin00} gives an excellent introduction to the model.

Fix an alphabet $\Sigma$.
$\Sigma^*$ is the set of finite strings over $\Sigma$.
$\Z$, $\Z^+$, and $\N$ denote the set of integers, positive integers, and nonnegative integers, respectively.
Let $d \in \{2,3\}$.
Given $V \subseteq \Z^d$, the \emph{full grid graph} of $V$ is the undirected graph $\fullgridgraph_V=(V,E)$, %
and for all $\vec{x} = \left(x_0, \ldots, x_{d-1}\right), \vec{y} = \left(y_0, \ldots, y_{d-1}\right)\in V$, $\left\{\vec{x},\vec{y}\right\} \in E \iff \| \vec{x} - \vec{y}\| = 1$; i.e., if and only if $\vec{x}$ and $\vec{y}$ are adjacent on the $d$-dimensional integer Cartesian space.

A $d$-dimensional \emph{tile type} is a tuple $t \in (\Sigma^* \times \N)^{2d}$; e.g., a unit square (or cube) with four (or six) sides listed in some standardized order, each side having a \emph{glue} $g \in \Sigma^* \times \N$ consisting of a finite string \emph{label} and nonnegative integer \emph{strength}. From this point on, a \emph{tile} will refer to either a 2D square or 3D cube tile type.
We assume a finite set of tile types, but an infinite number of copies of each tile type, each copy referred to as a \emph{tile}. A $d$-dimensional tile set is a set of $d$-dimensional tile types and is written as $d$-$T$. A tile set~$T$ is a set of $d$-dimensional tile types for some $d \in \{2,3\}$.
A $d$-{\em configuration} is a (possibly empty) arrangement of tiles on the integer lattice $\Z^d$, i.e., a partial function $\alpha:\Z^d \dashrightarrow T$. A configuration $\alpha$ is a $d$-configuration for some $d \in \{2,3\}$.
A $d$-\emph{assembly} is a connected non-empty configuration, i.e., a partial function $\alpha:\Z^d \dashrightarrow T$ such that $\fullgridgraph_{\dom \alpha}$ is connected and $\dom \alpha \neq \emptyset$. An assembly is a $d$-assembly for some $d \in \{2,3\}$.

Let $\mathcal{A}^T$ denote the set of all assemblies of tiles from $T$, and let $\mathcal{A}^T_{< \infty}$ denote the set of finite assemblies of tiles from $T$.
The \emph{shape} $S_\alpha \subseteq \Z^d$ of $\alpha$ is $\dom \alpha$.
Two adjacent tiles in an assembly \emph{interact}, or are \emph{attached}, if the glues on their abutting sides are equal (in both label and strength) and have positive strength.
Each assembly $\alpha$ induces a \emph{binding graph} $\bindinggraph_\alpha$, a grid graph whose vertices are positions occupied by tiles, with an edge between two vertices if the tiles at those vertices interact.
Given $\tau\in\Z^+$, $\alpha$ is \emph{$\tau$-stable} if every cut of~$\bindinggraph_\alpha$ has weight at least $\tau$, where the weight of an edge is the strength of the glue it represents. %
When $\tau$ is clear from context, we say $\alpha$ is \emph{stable}.
Given two assemblies $\alpha,\beta$, we say $\alpha$ is a \emph{subassembly} of $\beta$, and we write $\alpha \sqsubseteq \beta$, if $S_\alpha \subseteq S_\beta$ and, for all points $p \in S_\alpha$, $\alpha(p) = \beta(p)$.

A $d$-dimensional \emph{tile assembly system} ($d$-TAS) is a triple $d$-$\mathcal{T} = (d\textrm{-}T,\sigma,\tau)$, where $d$-$T$ is a finite set of $d$-dimensional tile types, $\sigma:\Z^d \dashrightarrow T$ is the finite, $\tau$-stable, $d$-dimensional \emph{seed assembly}, and $\tau\in\Z^+$ is the \emph{temperature}. The triple $\mathcal{T} = (T,\sigma,\tau)$ is a TAS if it is is a $d$-TAS for some $d \in \{2,3\}$.
Given two $\tau$-stable assemblies $\alpha,\beta$, we write $\alpha \to_1^{\mathcal{T}} \beta$ if $\alpha \sqsubseteq \beta$ and $|S_\beta \setminus S_\alpha| = 1$. In this case we say $\alpha$ \emph{$\mathcal{T}$-produces $\beta$ in one step}. %
If $\alpha \to_1^{\mathcal{T}} \beta$, $ S_\beta \setminus S_\alpha=\{p\}$, and $t=\beta(p)$, we write $\beta = \alpha + (p \mapsto t)$.
The \emph{$\mathcal{T}$-frontier} of $\alpha$ is the set $\partial^\mathcal{T} \alpha = \bigcup_{\alpha \to_1^\mathcal{T} \beta} S_\beta \setminus S_\alpha$, the set of empty locations at which a tile could stably attach to $\alpha$. The \emph{$t$-frontier} $\partial_t \alpha \subseteq \partial \alpha$ of $\alpha$ is the set $\setr{p\in\partial \alpha}{\alpha \to_1^\mathcal{T} \beta \text{ and } \beta(p)=t}.$ %

A sequence of $k\in\Z^+ \cup \{\infty\}$ assemblies $\alpha_0,\alpha_1,\ldots$ over $\mathcal{A}^T$ is a \emph{$\mathcal{T}$-assembly sequence} if, for all $1 \leq i < k$, $\alpha_{i-1} \to_1^\mathcal{T} \alpha_{i}$.
The {\em result} of an assembly sequence is the unique limiting assembly (for a finite sequence, this is the final assembly in the sequence).

We write $\alpha \to^\mathcal{T} \beta$, and we say $\alpha$ \emph{$\mathcal{T}$-produces} $\beta$ (in 0 or more steps) if there is a $\mathcal{T}$-assembly sequence $\alpha_0,\alpha_1,\ldots$ of length $k = |S_\beta \setminus S_\alpha| + 1$ such that
\begin{enumerate}
\item $\alpha = \alpha_0$,
\item $S_\beta = \bigcup_{0 \leq i < k} S_{\alpha_i}$, and
\item for all $0 \leq i < k$, $\alpha_{i} \sqsubseteq \beta$.
\end{enumerate}
If $k$ is finite then it is routine to verify that $\beta = \alpha_{k-1}$. 
We say $\alpha$ is \emph{$\mathcal{T}$-producible} if $\sigma \to^\mathcal{T} \alpha$, and we write $\prodasm{\mathcal{T}}$ to denote the set of $\mathcal{T}$-producible assemblies. The relation $\to^\mathcal{T}$ is a partial order on $\prodasm{\mathcal{T}}$ \cite{Rothemund-2001,Lathrop-2009}.
An assembly $\alpha$ is \emph{$\mathcal{T}$-terminal} if $\alpha$ is $\tau$-stable and $\partial^\mathcal{T} \alpha=\emptyset$.
We write $\termasm{\mathcal{T}} \subseteq \prodasm{\mathcal{T}}$ to denote the set of $\mathcal{T}$-producible, $\mathcal{T}$-terminal assemblies. If $|\termasm{\mathcal{T}}| = 1$ then  $\mathcal{T}$ is said to be {\em directed}.

When $\mathcal{T}$ is clear from context, we may omit $\mathcal{T}$ from the notation above and instead write
$\to_1$,
$\to$,
$\partial \alpha$, %
\emph{assembly sequence},
\emph{produces},
\emph{producible}, and
\emph{terminal}.

\subsection{Simulation definition}
\label{sec:simulation_def}

To state our main result, we must formally define what it means for one TAS to ``simulate'' another.  %
The following definitions improve the  presentation of those in~\cite{IUSA}, and correct a subtle error there.\footnote{Roughly speaking, Definition~\ref{def-s-models-t} uses an existential  quantifier, whereas the version in~\cite{IUSA} used a universal quantifier. This correction still captures the intention in~\cite{IUSA}, and it 
 actually strengthens our main results (i.e.\ our negative results:  Theorems~\ref{thm:not-iu},~\ref{thm:3d-not-iu}, \ref{thm:locally-consistent}, and~\ref{thm:no-forall-exists-sim}) without invalidating the positive result here (Theorem~\ref{thm:3D-sim-2D-simple-statement}) nor  that  in~\cite{IUSA}.}

From this point on, let $T$ be a $d$-dimensional tile set, and let %
$m\in\Z^+$.
An \emph{$m$-block supertile} over $T$ is a partial function $\alpha : \Z_m^d \dashrightarrow T$, where $\Z_m = \{0,1,\ldots,m-1\}$. Note that the dimension of the $m$-block is implicitly defined by the dimension of $T$.
Let $B^T_m$ be the set of all $m$-block supertiles over $T$.
The $m$-block with no domain is said to be $\emph{empty}$.
For a general assembly $\alpha:\Z^d \dashrightarrow T$ and $(x_0,\ldots x_{d-1})\in\Z^d$, define $\alpha^m_{x_0,\ldots x_{d-1}}$ to be the $m$-block supertile defined by $\alpha^m_{x_0,\ldots, x_{d-1}}(i_0,\ldots, i_{d-1}) = \alpha(mx_0+i_0,\ldots, mx_{d-1}+i_{d-1})$ for $0 \leq i_0, \ldots, i_{d-1}< m$.
For some tile set $S$ of dimension $d' \geq d$, a partial function $R: B^{S}_m \dashrightarrow T$ is said to be a \emph{valid $m$-block supertile representation} from $S$ to $T$ if for any $\alpha,\beta \in B^{S}_m$ such that $\alpha \sqsubseteq \beta$ and $\alpha \in \dom R$, then $R(\alpha) = R(\beta)$.
Let  $d' \in \{ 2,3 \}$ and $d \in \{ d' -1, d' \} $.  
Let $f: \Z^{d'} \rightarrow \Z^{d}$, where $f(x_0,\ldots,x_{d'-1}) = (x_0,\ldots,x_{d'-1})$ if $d'=d$ and $f(x_0,\ldots,x_{d'-1}) = (x_0,\ldots,x_{d-1},0)$ if $d = d' -1$, and undefined otherwise.
For a given valid $m$-block supertile representation function $R$ from tile set~$S$ to tile set $T$, define the \emph{assembly representation function}\footnote{Note that $R^*$ is a total function since every assembly of $S$ represents \emph{some} assembly of~$T$; the functions $R$ and $\alpha$ are partial to allow undefined points to represent empty space.}  $R^*: \mathcal{A}^{S} \rightarrow \mathcal{A}^T$ such that $R^*(\alpha') = \alpha$ if and only if $\alpha(x_0,\ldots, x_{d-1}) = R\left(\alpha'^m_{x_0,\ldots, x_{d'-1}}\right)$ for all $(x_0,\ldots x_{d'-1}) \in \Z^{d'-1}$. 
For an assembly $\alpha' \in \mathcal{A}^{S}$ such that $R(\alpha') = \alpha$, $\alpha'$ is said to map \emph{cleanly} to $\alpha \in \mathcal{A}^T$ under $R^*$ if for all non empty blocks $\alpha'^m_{x_0,\ldots, x_{d'-1}}$, $(f(x_0,\ldots,x_{d'-1})+f(u_0,\ldots,u_{d'-1})) \in \dom \alpha$ for some $u_0,\ldots, u_{d'-1} \in \{-1,0,1\}$ such that $u_0^2 + \cdots + u_{d'-1}^2 \leq 1$, or if $\alpha'$ has at most one non-empty $m$-block $\alpha^m_{0, \ldots, 0}$.
In other words, $\alpha'$ may have tiles on supertile blocks representing empty space in $\alpha$, but only if that position is adjacent to a tile in $\alpha$.  We call such growth ``around the edges'' of $\alpha'$ \emph{fuzz} and thus restrict it to be adjacent to only valid supertiles, but not diagonally adjacent (i.e.\ we do not permit \emph{diagonal fuzz}).

In the following definitions, let $\mathcal{T} = \left(T,\sigma_T,\tau_T\right)$ be a $d$-TAS for $d \in \{2,3\}$, let $\mathcal{S} = \left(S,\sigma_S,\tau_S\right)$ be a $d'$-TAS for $d' \geq d$, and let $R$ be an $m$-block representation function $R:B^S_m \rightarrow T$.

\begin{definition}
\label{def-equiv-prod} We say that $\mathcal{S}$ and $\mathcal{T}$ have \emph{equivalent productions} (under $R$), and we write $\mathcal{S} \Leftrightarrow \mathcal{T}$ if the following conditions hold:
\begin{enumerate}
        \item $\left\{R^*(\alpha') | \alpha' \in \prodasm{\mathcal{S}}\right\} = \prodasm{\mathcal{T}}$.
        \item $\left\{R^*(\alpha') | \alpha' \in \termasm{\mathcal{S}}\right\} = \termasm{\mathcal{T}}$.
        \item For all $\alpha'\in \prodasm{\mathcal{S}}$, $\alpha'$ maps cleanly to $R^*(\alpha')$.
\end{enumerate}
\end{definition}

\begin{definition}
\label{def-t-follows-s} We say that $\mathcal{T}$ \emph{follows} $\mathcal{S}$ (under $R$), and we write $\mathcal{T} \dashv_R \mathcal{S}$ if $\alpha' \rightarrow^\mathcal{S} \beta'$, for some $\alpha',\beta' \in \prodasm{\mathcal{S}}$, implies that $R^*(\alpha') \to^\mathcal{T} R^*(\beta')$.
\end{definition}

\begin{definition}
\label{def-s-models-t} We say that $\mathcal{S}$ \emph{models} $\mathcal{T}$ (under $R$), and we write $\mathcal{S} \models_R \mathcal{T}$, if for every $\alpha \in \prodasm{\mathcal{T}}$, there exists $\Pi \subset \prodasm{\mathcal{S}}$ where $R^*(\alpha') = \alpha$ for all $\alpha' \in \Pi$, such that, for every $\beta \in \prodasm{\mathcal{T}}$ where $\alpha \rightarrow^\mathcal{T} \beta$, (1) for every $\alpha' \in \Pi$ there exists $\beta' \in \prodasm{\mathcal{S}}$ where $R^*(\beta') = \beta$ and $\alpha' \rightarrow^\mathcal{S} \beta'$, and (2) for every $\alpha'' \in \prodasm{\mathcal{S}}$ where $\alpha'' \rightarrow^\mathcal{S} \beta'$, $\beta' \in \prodasm{\mathcal{S}}$, $R^*(\alpha'') = \alpha$, and $R^*(\beta') = \beta$, there exists $\alpha' \in \Pi$ such that $\alpha' \rightarrow^\mathcal{S} \alpha''$.
\end{definition}

The previous definition essentially specifies that every time $\mathcal{S}$ simulates an assembly $\alpha \in \prodasm{\mathcal{T}}$, there must be at least one valid growth path in $\mathcal{S}$ for each of the possible next steps that $\mathcal{T}$ could make from $\alpha$ which results in an assembly in $\mathcal{S}$ that maps to that next step.

\begin{definition}
\label{def-s-simulates-t} We say that $\mathcal{S}$ \emph{simulates} $\mathcal{T}$ (under $R$) if $\mathcal{S} \Leftrightarrow_R \mathcal{T}$ (equivalent productions), $\mathcal{T} \dashv_R \mathcal{S}$ and $\mathcal{S} \models_R \mathcal{T}$ (equivalent dynamics).
\end{definition}

\newcommand{\REPL}{\mathsf{REPR}}
\newcommand{\frakC}{\mathfrak{C}}

Let $\REPL$ denote the set of all supertile representation functions (i.e., $m$-block supertile representation functions for some $m\in\Z^+$).
For some $d \in \{2,3\}$, let $\frakC$ be a class of $d$-dimensional tile assembly systems, and let $U$ be a $d'$-dimensional tile set for $d' \geq d$.
Note that every element of $\frakC$, $\REPL$, and $\mathcal{A}^U_{< \infty}$ is a finite object, hence can be represented in a suitable format for computation in some formal system such as Turing machines.
We say $U$ is \emph{intrinsically universal} for $\frakC$ \emph{at temperature} $\tau' \in \Z^+$ %
if there are computable functions $\mathcal{R}:\frakC \to \REPL$ and $S:\frakC \to \mathcal{A}^U_{< \infty}$ such that, for each $\mathcal{T} = (T,\sigma,\tau) \in \frakC$, there is a constant $m\in\N$ such that, letting $R = \mathcal{R}(\mathcal{T})$, $\sigma_\mathcal{T}=S(\mathcal{T})$, and $\mathcal{U}_\mathcal{T} = (U,\sigma_\mathcal{T},\tau')$, $\mathcal{U}_\mathcal{T}$ simulates $\mathcal{T}$ at scale $m$ and using supertile representation function $R$.
That is, $\mathcal{R}(\mathcal{T})$ outputs a representation function that interprets assemblies of $\mathcal{U}_\mathcal{T}$ as assemblies of $\mathcal{T}$, and $S(\mathcal{T})$ outputs the seed assembly used to program tiles from $U$ to represent the seed assembly of $\mathcal{T}$. We say that~$U$ is \emph{intrinsically universal} for $\frakC$ if it is intrinsically universal for $\frakC$ at some temperature $\tau'\in Z^+$. %

\section{Temperature 1 self-assembly is not intrinsically universal for the aTAM}
\label{sec:2D_temp1_not_aTAM_universal}

\newenvironment{theoremmynum}[2][Theorem]{\begin{trivlist}
\item[\hskip \labelsep {\bfseries #1}\hskip \labelsep {\bfseries #2}]}{\end{trivlist}}

 In this section we prove Theorem~\ref{thm:3d-not-iu} which is restated below.  The proof is for 3D systems, and so as an immediate corollary we get our main theorem, which is for standard 2D systems: Theorem~\ref{thm:not-iu}.  In the proof, our chosen temperature 2 tile assembly system $\mathcal{T}$ (that ``breaks'' any claimed simulator) is locally consistent, so we  also get Theorem~\ref{thm:locally-consistent}. Finally, since in the proof we exhibit  a specific $\mathcal{T}$ that can not be simulated, we  also get Theorem~\ref{thm:no-forall-exists-sim}.

\begin{theoremmynum}{\ref{thm:3d-not-iu}}%
There is no 3D tile set $U$ such that $U$ is intrinsically universal at temperature 1 for the class of all  aTAM tile assembly systems.
\end{theoremmynum}

\subsection{Proof overview of Theorem~\ref{thm:3d-not-iu}}

We prove Theorem~\ref{thm:3d-not-iu} by contradiction.
We suppose that there exists a universal tile set $U$  at temperature~$1$. We then choose a particular  temperature~$2$  tile assembly system~$\mathcal{T}$ and show that any simulation of~$\mathcal{T}$ by $U$ must build erroneous assemblies, failing to simulate both dynamics and production in Definition~\ref{def-s-simulates-t}.
The tile assembly system~$\mathcal{T}$ is illustrated in Figure~\ref{fig:construction}. A seed tile grows two ``arms'', each of arbitrary length, these arms each grow a ``finger'' and then try to {\em cooperatively} touch their fingers:  if they happened to choose arms of equal length the fingers  can cooperatively place a  keystone tile which leads to flagpole and flag tiles, if not growth stops. Clearly,~$\mathcal{T}$  is a very simple temperature~2 tile assembly system.\footnote{In fact,~$\mathcal{T}$ is locally consistent~\cite{USA}.}

Recall that given~$\mathcal{T}$, the simulator then gets to choose an arbitrary scale factor~$m \in \mathbb{N}$ and seed assembly~$\sigma_{\mathcal{T}}$ for the simulation.
Growing from the seed, the universal tile set~$U$ simulates a tile assembly system~$\mathcal{T}$ if and only if it simulates every possible sequence of tile additions producing a terminal assembly of $\mathcal{T}$ (at some $m$-scale blowup).
This includes all non-deterministic branches of assembly, such as the various lengths of the arms of $\mathcal{T}$.
Our approach is to take a valid simulation that simulates the placing of the keystone, and use it to show that the simulator must also produce another assembly that is invalid, i.e.\ it is not a simulation of $\mathcal{T}$ as defined in Definition~\ref{def-t-follows-s}.
In particular, when simulating the placing of the keystone, both arms should be the same length, however we show that $U$ must also construct keystone-placing assemblies that have arms of unequal lengths and so are not valid simulations.

In order to construct the invalid assembly, we prove a lemma (called the \emph{window movie lemma}, Lemma~\ref{lem:windowmovie}) that describes an operation for taking two producible assemblies, and combining them to create two new producible assemblies.
The lemma is rather general, and it applies to TASs of any temperature producing arbitrary (possibly infinite) assemblies.
The window movie lemma can be used as a pumping lemma (generalizing the technique used in the proof of Theorem 3.1 of~\cite{Aggarwal-2004,Aggarwal-2005}), or used to splice arbitrary assemblies together.

The proof finishes by invoking the fact that $U$ is a temperature 1 system at one key step: the placement of a specific tile by the simulator in (or near) the simulated keystone region.
At this point we apply the window movie lemma to the assembly sequence and splice together pieces of the valid assembly to produce a second, invalid assembly, essentially exposing the temperature 1 simulator as a charlatan that is (poorly) faking cooperation.
Our proof avoids the use of overly complicated case analyses that often arise when working with temperature 1 systems.

\subsection{Windows}
\label{sec:windows}

In order to prove $U$ produces invalid assemblies when simulating the aforementioned system, we develop a technique called \emph{window movies} for constructing additional producible assemblies of a tile set ($U$) and seed $\sigma$, given a some initial producible assembly.
Window movies share some similarities with the proof of Theorem 3.1 in~\cite{Aggarwal-2005}, which shows that thin rectangular assemblies can be ``pumped'' to create new producible assemblies of arbitrary length.
We strengthen the technique in~\cite{Aggarwal-2005} so that assemblies can also be ``pumped down'', generating producible assemblies \emph{smaller} than the original assembly.
Besides being useful for
Theorem~\ref{thm:not-iu}, this lemma gives a general method to combine assemblies together which might be useful elsewhere.

\begin{definition}
A window $w$ is a set of edges forming a cut-set in the infinite grid graph.
\end{definition}

Given a window $w$ and an assembly $\alpha$, a window that {\em intersects} $\alpha$ is a partioning of $\alpha$ into two  configurations (i.e.\ after being split into two parts, each part may or may not be  disconnected).
In this case we say that the window $w$ cuts the assembly $\alpha$ into two configurations $\alpha_L$ and~$\alpha_R$, where $\alpha = \alpha_L \cup \alpha_R$.
Given a window $w$, its translation by a vector $\vec{c}$,  written $ w + \vec{c}$ is simply the translation of each of $w$'s elements (edges) by~$\vec{c}$.
Examples of windows are shown in Figure~\ref{fig:window}.

\begin{figure}[ht]
\centering
\includegraphics[scale=1.0]{./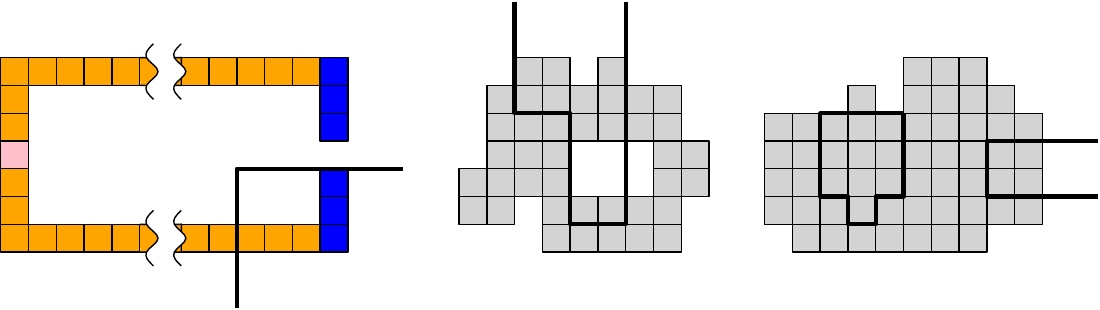}
\caption{Three examples of windows, shown as thick segments.
Each window partitions an assembly into two (not necessarily connected) configurations.}
\label{fig:window}
\end{figure}

For a window $w$ and an assembly sequence $\vec{\alpha}$, we define a window movie~$M$ to be the order of placement, position and glue type for each glue that appears along the window $w$ in an assembly sequence $\vec{\alpha}$.

\begin{definition}
Given an assembly sequence $\vec{\alpha}$ and a window $w$, the associated {\em window movie} is the maximal sequence $M_{\vec{\alpha},w} = (v_{0}, g_{0}) , (v_{1}, g_{1}), (v_{2}, g_{2}), \ldots$ of pairs of grid graph vertices $v_i$ and glues $g_i$, given by the order of the appearance of the glues along window $w$ in the assembly sequence $\vec{\alpha}$.
Furthermore, if $k$ glues appear along $w$ at the same instant (this happens upon placement of a tile which has multiple  sides  touching $w$) then these $k$ glues appear contiguously and are listed in lexicographical order of the unit vectors describing their orientation in $M_{\vec{\alpha},w}$.
\end{definition}

An example of a window movie is shown in Figure~\ref{fig:window_movie1}.

\begin{figure}[ht]
\centering
\includegraphics[scale=1.0]{./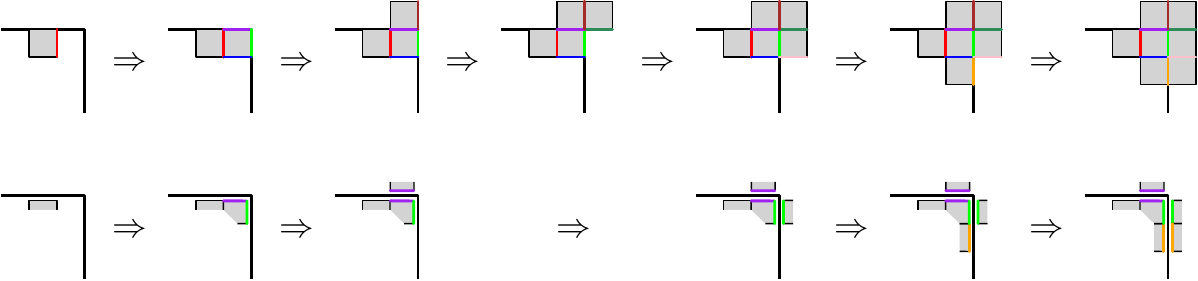}
\caption{Top: A window (thick line) and an assembly sequence along the window.
Bottom: The unique induced window movie.}
\label{fig:window_movie1}
\end{figure}

\pagebreak
\begin{lemma}[Window movie lemma]
\label{lem:windowmovie}
Let $\vec{\alpha} = (\alpha_i \mid 0 \leq i < l)$ and $\vec{\beta} = (\beta_i \mid 0 \leq i < m)$, with
$l,m\in\Z^+ \cup \{\infty\}$,
be assembly sequences in $\mathcal{T}$ with results $\alpha$ and $\beta$, respectively.
Let $w$ be a window that partitions~$\alpha$ into two configurations~$\alpha_L$ and $\alpha_R$, and $w' = w + \vec{c}$ be a translation of $w$ that partitions~$\beta$ into two configurations $\beta_L$ and $\beta_R$.
Furthermore, define $M_{\vec{\alpha},w}$, $M_{\vec{\beta},w'}$ to be the respective window movies for $\vec{\alpha},w$ and $\vec{\beta},w'$, and define $\alpha_L$, $\beta_L$ to be the subconfigurations of $\alpha$ and $\beta$ containing the seed tiles of $\alpha$ and $\beta$, respectively.
Then if $M_{\vec{\alpha},w} = M_{\vec{\beta},w'}$, it is the case that  the following two assemblies are also producible:
(1) the assembly $\alpha_L \beta'_R = \alpha_L \cup \beta'_R$ and
(2) the assembly $\beta'_L \alpha_R = \beta'_L \cup \alpha_R$, where $\beta'_L=\beta_L-\vec{c}$ and $\beta'_R=\beta_R-\vec{c}$.
\end{lemma}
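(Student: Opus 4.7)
The plan is to prove the lemma by explicitly constructing a valid $\mathcal{T}$-assembly sequence whose result is each of the two target spliced assemblies. I describe the construction producing $\alpha_L \cup \beta'_R$ in detail; the construction for $\beta'_L \cup \alpha_R$ is obtained by the same argument with the roles of $\alpha$ and $\beta$, and of $L$ and $R$, interchanged.

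First, I would decompose each original sequence relative to its window. From $\vec{\alpha}$ extract the subsequence of steps whose placed tile lies in $\alpha_L$, preserving the original order; call it $\vec{\alpha}^L$, and define $\vec{\alpha}^R$ analogously. Shift $\vec{\beta}$ by $-\vec{c}$ to obtain a sequence whose result is $\beta-\vec{c}$; its induced window along $w$ carries exactly the same positions, glue labels, strengths, and order as $\vec{\beta}$ did along $w'$, because translating window and sequence by the same vector is a rigid motion. From this shifted sequence extract the subsequence of steps landing in $\beta'_R$, call it $\vec{\beta}^R$. Now define the new sequence $\vec{\gamma}$ by walking through the window movie events $(v_0,g_0),(v_1,g_1),\ldots$ in order: starting from the seed, between event $k-1$ and event $k$ first replay every step of $\vec{\alpha}^L$ that occurs before the $k$-th event in $\vec{\alpha}$, then every step of $\vec{\beta}^R$ that occurs before the $k$-th event in the shifted $\vec{\beta}$, and schedule the tile placement that realizes event $k$ itself at its natural point in whichever side's playback it belongs to.

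Second, I would verify step-by-step validity. A tile placed in $\alpha_L$ by $\vec{\gamma}$ bonds only with tiles that are either (i) interior to $\alpha_L$ and already placed, which is ensured because $\vec{\alpha}^L$ is played in its original order, or (ii) directly across the cut $w$, whose contributions are precisely the glues recorded in $M_{\vec{\alpha},w}$. By the hypothesis $M_{\vec{\alpha},w}=M_{\vec{\beta},w'}$ together with the event-by-event interleaving, the required $\beta'_R$-side tiles have already been placed at the matching positions with the matching glue labels and strengths, so every bond that certified the placement's $\tau$-stability in $\vec{\alpha}$ is still in place. A symmetric argument handles placements in $\beta'_R$. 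Since $w$ is a cut in the grid graph, $\dom\alpha_L\cap\dom\beta'_R=\emptyset$, so no lattice position is claimed twice, and the result of $\vec{\gamma}$ is precisely $\alpha_L\cup\beta'_R$. The seed of $\mathcal{T}$ lies in $\alpha_L$ by definition, so $\vec{\gamma}$ begins at the $\mathcal{T}$-seed and genuinely $\mathcal{T}$-produces $\alpha_L\cup\beta'_R$.

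Two technical obstacles remain. The more delicate is the simultaneous multi-side attachment case: a single tile may generate several window glues in one step, and the interleaving must treat it as a single atomic insertion rather than splitting it across other placements. The lexicographic tiebreak embedded in the definition of a window movie already packages such co-placed glues as a contiguous block of events in $M_{\vec{\alpha},w}$, so the interleaving rule interprets the tile as one insertion point and no ordering ambiguity survives. The second obstacle is the case $l=\infty$ or $m=\infty$, when the window movie itself may be infinite: I would define $\vec{\gamma}$ as the concatenation over $k$ of the finite blocks described above, verify by induction on $k$ that each finite prefix is a valid assembly sequence whose result is the corresponding piece of $\alpha_L\cup\beta'_R$, and then pass to the limit, which yields the full target. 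The construction of $\beta'_L\cup\alpha_R$ then follows by the symmetric argument.
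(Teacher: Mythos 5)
Your proposal is correct and follows essentially the same route as the paper's proof: a ``lossy merge'' of the two assembly sequences, restricted to the $\alpha_L$- and $\beta'_R$-side placements and synchronized on the window-movie events, with validity of each placement argued from the identical movies (supplying the cross-window glues on time) and non-collision from the fact that the cut separates $\dom\alpha_L$ from $\dom\beta'_R$. Your explicit treatment of the atomic multi-glue placements and of infinite sequences only makes precise points the paper's algorithmic proof handles implicitly.
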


Before proceeding, we first define some notation that will be useful for this section of the paper.

For an assembly sequence $\vec{\alpha} = (\alpha_i \mid 0 \leq i < l)$, we write $\left| \vec{\alpha} \right| = l$ (note that if $\vec{\alpha}$ is infinite, then $l = \infty$). We write $\vec{\alpha}[i]$ to denote $\vec{x} \mapsto t$, where $\vec{x}$ and~$t$ are such that $\alpha_{i+1} = \alpha_i + \left(\vec{x} \mapsto t\right)$, i.e., $\vec{\alpha}[i]$ is the placement  of tile type $t$ at position~$\vec{x}$, assuming that $\vec{x} \in \partial_{t}\alpha_i$. We define $\vec{\alpha} = \vec{\alpha} + \left(\vec{x} \mapsto t\right) = (\alpha_i \mid 0 \leq i < k + 1)$, where $\alpha_{k} = \alpha_{k-1} + \left(\vec{x} \mapsto t\right)$ if $\vec{x} \in \partial_{t}^{\tau}\alpha_i$ and undefined otherwise, assuming $\left| \vec{\alpha} \right| > 0$. Otherwise, if $\left| \vec{\alpha} \right| = 0$, then $\vec{\alpha} = \vec{\alpha} + \left(\vec{x} \mapsto t \right) = (\alpha_0)$, where $\alpha_0$ is the assembly such that $\alpha_0\left(\vec{x}\right) = t$ and is undefined at all other positions. This is our notation for appending steps to the assembly sequence $\vec{\alpha}$: to do so, we must specify a tile type $t$ to be placed at a given location $\vec{x} \in \partial_t\alpha_{i-1}$. If $\alpha_{i+1} = \alpha_i + \left(\vec{x} \mapsto t\right)$, then we write $Pos\left(\vec{\alpha}[i]\right) = \vec{x}$ and $Tile\left(\vec{\alpha}[i]\right) = t$. For a movie window $M = (v_0,g_0), (v_1,g_1), \ldots$, we write $M[k]$ to be the pair $\left(v_{k-1},g_{k-1}\right)$ in the enumeration of $M$ and $Pos\left(M[k]\right) = v_{k-1}$, where $v_{k-1}$ is a vertex of a grid graph.

\begin{proof}
We give a constructive proof by giving an algorithm for constructing an assembly sequence yielding $\alpha_L \beta_R'$. Let $\vec{\alpha}$ and $\vec{\beta}$ be the assembly sequences of $\alpha$ and $\beta$, respectively.
Intuitively, the algorithm performs a lossy merge of $\vec{\alpha}$ and $\vec{\beta}$, ignoring assembly sequence steps of $\vec{\alpha}$ (respectively, $\vec{\beta}$) that place tiles in $\alpha_R$ ($\beta_L'$).
Without loss of generality, and for notational simplicity,  let~$w$ be a window such that $M_{\vec{\alpha},w} = M_{\vec{\beta},w}$. In other words,  the common window movie of $\vec{\alpha}$ and $\vec{\beta}$ occur at the same location in the plane, and thus since $\vec{c} = \vec{0}$, $\beta_L = \beta_L'$ and $\beta_R = \beta_R'$. %
Let~$M$ be the sequence of steps in the window movie $M_{\vec{\alpha},w}$.
The algorithm in Figure \ref{fig:algo-seq} describes how to produce a new valid assembly sequence $\vec{\gamma}$.

\begin{figure}[t]
\begin{algorithm}[H]
\SetAlgoLined
Initialize $i$, $j$, $k = 0$ and $\vec{\gamma}$ to be empty

\While{$i<|\vec{\alpha}|$  { \bf or }  $j<|\vec{\beta}|$}{  %
  \If{$Pos(M[k]) \in \dom{\alpha_L}$}{ %
    \While{$i < |\vec{\alpha}|$ and $Pos(\vec{\alpha}[i])\neq Pos(M[k])$}{
      \If{$Pos(\vec{\alpha}[i]) \in \dom{\alpha_L}$}{$\vec{\gamma} = \vec{\gamma} + \vec{\alpha}[i]$}
      $i = i + 1$
    }
    \If{$i<|\vec{\alpha}|$}{
      $\vec{\gamma} = \vec{\gamma} + \vec{\alpha}[i]$

      $i = i + 1$
    }
  }
  \ElseIf{$Pos(M[k]) \in \dom{\beta_R}$}{
    \While{$j<|\vec{\beta}|$ and $Pos(\vec{\beta}[j])\neq Pos(M[k])$}{
      \If{$Pos(\vec{\beta}[j]) \in \dom{\beta_R}$}{
        $\vec{\gamma} = \vec{\gamma} + \vec{\beta}[j]$}

        $j = j + 1$

    }
    \If{$j<|\vec{\beta}|$}{
      $\vec{\gamma} = \vec{\gamma} + \vec{\beta}[j]$

      $j = j + 1$
    }
  }
  \ElseIf{$k\geq |M|$}{
    \If{$i<|\vec{\alpha}|$}{
      $\vec{\gamma} = \vec{\gamma} + \vec{\alpha}[i]$

      $i = i + 1$
    }
    \If{$j<|\vec{\beta}|$}{
      $\vec{\gamma} = \vec{\gamma} + \vec{\beta}[j]$

      $j = j + 1$
    }
  }

  $k = k + 1$
}
\Return $\vec{\gamma}$

\end{algorithm}
\caption{The algorithm to produce a valid assembly sequence $\vec{\gamma}$.}
\label{fig:algo-seq}
\end{figure}

If we assume that  the assembly sequence  $\vec{\gamma}$ ultimately produced by the algorithm is valid, then the result of $\vec{\gamma}$  is indeed  $\alpha_L \beta_R$, since for every tile in~$\alpha_L$ and $\beta_R$, the algorithm adds a step to the sequence $\vec{\gamma}$ involving the addition of this tile to the assembly. However, we need to prove that  the assembly sequence $\vec{\gamma}$  is valid,  it may be the case that either: 1. there is insufficient bond strength between the tile to be placed and the existing neighboring tiles, or 2. a tile is already present at this location.
Case 2 is a non-issue, as locations in $\alpha_L$ and $\beta_L$ only have tiles from $\alpha_L$ placed in them, and locations in $\alpha_R$ and $\beta_R$ only have tiles from $\beta_R$ placed in them.
Case 1 is more difficult, and is where the remainder of the proof is spent.

Formally, we claim the following: at each step of the algorithm, the current version of $\vec{\gamma}$ at this step is a valid assembly sequence whose result is a producible subassembly of $\alpha_L \beta_R$.
Note that the outer loop of the algorithm iterates through all steps of $\vec{\alpha}$ and $\vec{\beta}$, such that at any point of adding $\vec{\alpha}[i]$ (or $\vec{\beta}[j]$) to $\vec{\gamma}$, all steps of the window movie occurring before $\vec{\alpha}[i]$ ($\vec{\beta}[j]$) in $\vec{\alpha}$ ($\vec{\beta}$) have occurred.
Similarly, all tiles in $\alpha_L$ (or $\beta_R$) added to $\alpha$ ($\beta$) before step $i$ ($j$) in the assembly sequence have occurred.

So if the $Tile\left(\vec{\alpha}[i]\right)$ that is added to the subassembly of $\alpha$ produced after $i-1$ steps, can bond at a location in $\alpha_L$ to form a $\tau$-stable assembly, the same tile added to the producible assembly of  $\vec{\gamma}$  must also bond to the same location in  $\vec{\gamma}$, as the neighboring glues consist of (i) an identical set of glues from tiles in the subassembly of $\alpha_L$ and (ii) glues on the side of the window movie containing~$\alpha_R$.  Similarly, the tiles of $\beta_R$ must also be able to bind.

So the assembly sequence of $\vec{\gamma}$ is valid, i.e.\ every addition to $\vec{\gamma}$ adds a tile to the assembly  to form a new producible assembly.
Since we have a valid assembly sequence, as argued above, the finished producible assembly is~$\alpha_L \beta_R$.
\end{proof}

In the proof, we used the two identical window movies to ensure each step in the constructed assembly sequence was valid, i.e.\ the proposed tile could attach at the specified location.
However, if a pair of incident glues in the window movie are not identical, then they are never used to ensure a proposed tile can attach.
Using this observation, we define a restricted form of  window movie, called a \emph{bond-forming submovie}, which consists  of only those steps of the window movie that place glues that eventually form positive-strength bonds in the assembly.
Every window movie $M$ has a unique bond-forming submovie ${\cal B}(M)$, and Lemma~\ref{lem:windowmovie} can be strengthened by relaxing the requirement that the window movies $M_{\vec{\alpha}, w} = M_{\vec{\beta}, w'}$ match:

\begin{corollary}
\label{cor:windowmovie}
The statement of Lemma~\ref{lem:windowmovie} holds if the window movies $M_{\vec{\alpha},w}$ and $M_{\vec{\beta},w'}$ are replaced by  their bond-forming submovies ${\cal B}\left(M_{\vec{\alpha},w}\right)$ and ${\cal B}\left(M_{\vec{\beta},w'}\right)$.
\end{corollary}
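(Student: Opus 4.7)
My plan is to reuse essentially the same proof as for Lemma~\ref{lem:windowmovie}, observing that the only place the hypothesis $M_{\vec{\alpha},w} = M_{\vec{\beta},w'}$ was invoked is in the validity argument: when a tile from $\alpha_L$ (respectively $\beta_R$) is appended to $\vec{\gamma}$, it must bind with total strength at least $\tau$ to its already-placed neighbors, and the glues across the window supplied by $\beta_R$ (respectively $\alpha_L$) must therefore agree with those supplied by $\alpha_R$ (respectively $\beta_L$) in the original sequence at that moment. But glues across $w$ that do not match in label, or have strength zero, contribute nothing to binding strength; only bond-forming glues matter. So the validity argument really only uses agreement of the bond-forming portions of the two movies.

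Concretely, I would rerun the algorithm of Figure~\ref{fig:algo-seq} with the common bond-forming submovie in place of the common window movie: the outer loop iterates through the entries of ${\cal B}(M_{\vec{\alpha},w}) = {\cal B}(M_{\vec{\beta},w'})$, and between consecutive bond-forming events we greedily flush all $\alpha_L$-tiles from $\vec{\alpha}$ and all $\beta_R$-tiles from $\vec{\beta}$ that occur before the next synchronization point. The resulting sequence $\vec{\gamma}$ still places exactly the tiles of $\alpha_L \cup \beta_R'$, and the same invariant as before holds at every step: (i) the bonds received by a newly placed $\alpha_L$-tile from its $\alpha_L$-neighbors are identical to those in $\vec{\alpha}$; (ii) every bond it received across $w$ in $\vec{\alpha}$ is present, in the same position and with the same label, at the corresponding step of $\vec{\gamma}$, because the bond-forming submovies coincide; and symmetrically for $\beta_R$-tiles.

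The one subtlety is that $\vec{\gamma}$ can create cross-window bonds that existed in neither $\vec{\alpha}$ nor $\vec{\beta}$: a glue on the $\alpha_L$ side that failed to bond in $\vec{\alpha}$ might coincidentally match a glue on the $\beta_R$ side that failed to bond in $\vec{\beta}$. I expect this to be the main place one could worry, but it is benign for exactly the reason that $\tau$-stability is monotone in the set of bonds: adding extra edges to the binding graph can only increase the weight of every cut, so each intermediate assembly of $\vec{\gamma}$ remains $\tau$-stable and every placement receives at least as much binding strength as in the source sequence. Hence $\vec{\gamma}$ is a valid assembly sequence producing $\alpha_L \cup \beta_R'$; applying the symmetric construction yields $\beta_L' \cup \alpha_R$, giving the corollary.
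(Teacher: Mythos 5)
Your proposal is correct and follows essentially the same route as the paper: rerun the splicing algorithm of Lemma~\ref{lem:windowmovie} synchronized on the common bond-forming submovie, observing that the movie equality was only ever used to guarantee attachment strength, and non-bond-forming glues contribute nothing to it. Your explicit remark that spurious new cross-window matches are harmless (since extra bonds only increase cut weights) is left implicit in the paper's proof but is the same argument.
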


\begin{proof}
The matching window movies $M_{\vec{\alpha},w}$ and $M_{\vec{\beta},w'}$ in the proof of Lemma~\ref{lem:windowmovie} are  used  only to prove that for each step (tile addition) of $\vec{\alpha}$ or $\vec{\beta}$ that is appended to the sequence $\vec{\gamma}$, the tile can attach at the new proposed location.
For each step of $M_{\vec{\alpha},w} = M_{\vec{\beta},w'}$, either the step is in ${\cal B}\left(M_{\vec{\alpha},w}\right) = {\cal B}\left(M_{\vec{\beta}, w'}\right)$ or not.
If so, the proof is unchanged.

Otherwise, if not, the tile will not form a bond with any glue (i.e.\ tile) on the other side of the window, since the step is not in ${\cal B}\left(M_{\vec{\alpha}, w}\right)$.
Furthermore, the set of glues incident to $Pos(\vec{\alpha}[i])$ (respectively, $Pos(\vec{\beta}[j])$) and forming positive strength bonds is identical to the set when $\vec{\alpha}[i]$ ($\vec{\beta}[j]$) is added to $\vec{\gamma}$ in the proof of Lemma~\ref{lem:windowmovie}, as all elements of $\vec{\alpha}$ ($\vec{\beta}$) preceeding $\vec{\alpha}[i]$ ($\vec{\beta}[j]$) have already been added to $\vec{\gamma}$.
\end{proof}

\subsection{The simulated tile set}
\label{sec:simulated tile set}

Here we describe the tile assembly system $\mathcal{T} = (T,\sigma,2)$ to be simulated by the claimed simulator tile set $U$.
The tile set $T$ consists of a small constant number of tile types as seen in Figure~\ref{fig:construction}:
the \emph{seed} $\sigma$,
eight \emph{arm} tiles,
six \emph{finger} tiles,
a \emph{keystone} tile,
a \emph{flagpole} tile,
and a \emph{flag} tile.
Of the infinite set of terminal assemblies formed, each assembly either contains both the keystone and flag tile types or does not (see Figure~\ref{fig:construction}).

\begin{figure}[ht!]
\centering
\includegraphics[scale=1.0]{./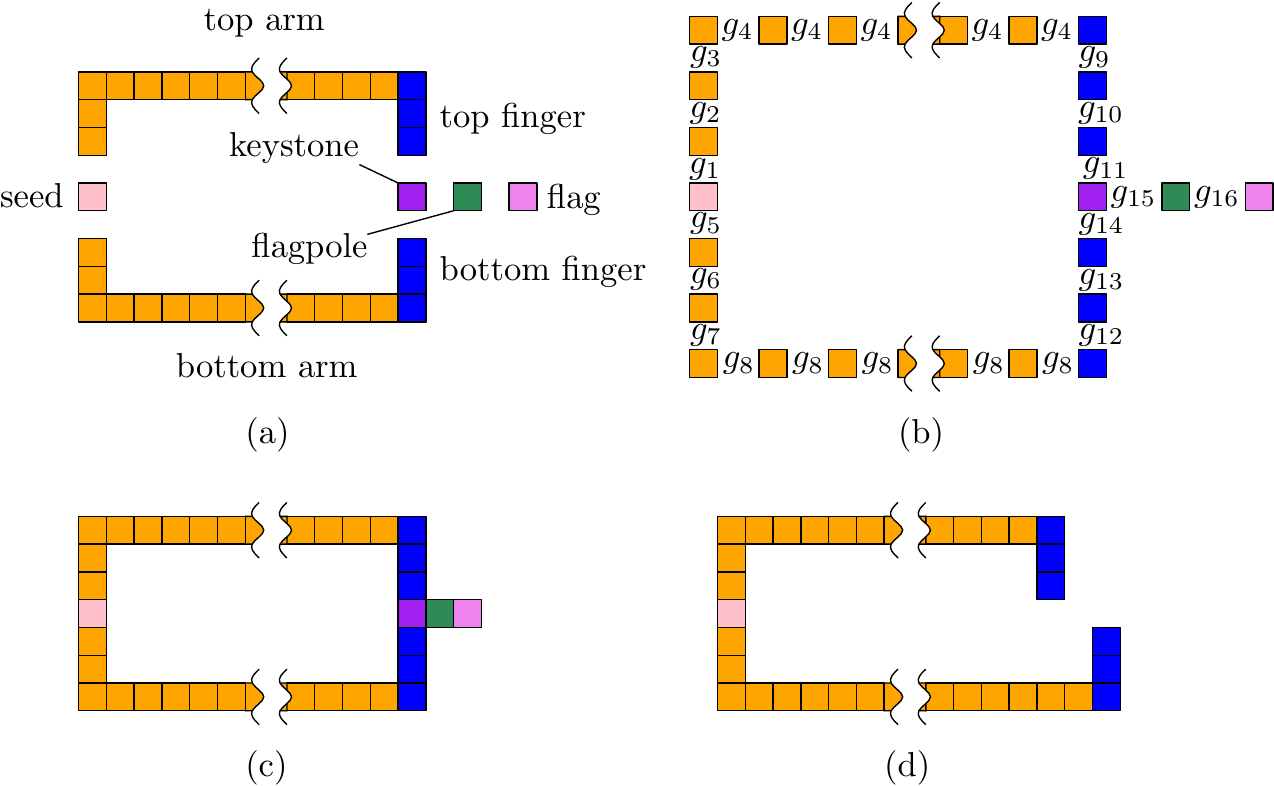}
\caption{(a) An overview of the tile assembly system $\mathcal{T} = (T,\sigma,2)$.~$\mathcal{T}$ runs at temperature 2 and its tile set $T$ consists of 18 tiles. (b) The glues used in the tileset $T$. Glues $g_{11}$ and $g_{14}$ are strength 1, all other glues are strength~2.  Thus the keystone tile binds with two ``cooperative'' strength~1 glues. Growth begins from the pink seed tile $\sigma$: the top and bottom arms are one tile wide and grow to arbitrary, nondeterministically chosen, lengths. Two blue figures grow as shown. (c) If the fingers happen to meet then the keystone, flagpole and flag tiles are placed, (d) if the fingers do not meet then growth terminates at the finger ``tips'':  the keystone, flagpole and flag tiles are not placed.}
\label{fig:construction}
\end{figure}

The glues in the various tiles are all unique with the exception of the common east-west glue type used within each arm to induce non-deterministic and independent arm lengths.
Glues are shown in part (b) of Figure~\ref{fig:construction}.
Note that cooperative binding happens at most once during growth, when attaching the keystone tile to two arms of identical length.
All other binding events are noncooperative and all glues are strength-2 except for $g_{11}, g_{14}$ which are strength-1.

Recall that a universal tile set $U$ simulating~$\mathcal{T}$ carries out the simulation by creating $m \times m$ supertiles that represent  the tiles of~$\mathcal{T}$, and that are placed with the same dynamics (i.e.\ tile placement ordering, modulo rescaling) as~$T$.
In particular, $U$ must simulate the creation of a terminal assembly with a flag by placing all of the supertiles in both arms first, then the keystone supertile, flagpole supertile, and finally flag supertile.
Though $U$ is permitted to place tiles in {\em fuzz} supertile regions (i.e.\ adjacent to supertile regions with a non-empty represented tile type), $U$ cannot put tiles in the flag supertile region before placing tiles that represent  the flagpole tile.
That is, any assembly sequence of $U$ placing a tile in the flag supertile region \emph{must} have already simulated an assembly sequence placing the flagpole tile, which in turn \emph{must} have already simulated an assembly sequence placing the keystone tile, and so on.

\subsection{Invalid simulation of $\mathcal{T}$}
\label{sec:invalid}

In this section we give the main proof argument for Theorem~\ref{thm:3d-not-iu} by showing that the  tile set $U$ does not simulate~$\mathcal{T}$.

Let $g$ be the number of glues in the tile set $U$ and let~$m$ be the scale factor chosen for~$\mathcal{T}$.
For the remainder of the proof, we only consider the simulation by $U$ of~$\mathcal{T}$ in the case that~$\mathcal{T}$ grows an assembly $\gamma$ with a pair of arms of identical horizontal length $((g+1)^{6m} \cdot (6m)! + 1) \cdot 3 + 6$.
This length is justified as follows.

By Definition~\ref{def-s-models-t}, there exists $\gamma' \in \mathcal{A}[\mathcal{U}]$ such that $R^*\left(\gamma'\right) = \gamma$, where~$\mathcal{U} = (U, \sigma_{\mathcal{T}}, 1)$ is the simulator tile assembly system using tile set $U$, seed assembly~$\sigma_{\mathcal{T}}$, and  temperature~1. The simulator uses scale $m$, therefore because the definition of \emph{cleanly maps to} (see Section~\ref{sec:simulation_def}) permit one-supertile wide ``fuzz'' (i.e. the placement of tiles in locations adjacent to supertiles but which don't map to a tile in $\mathcal{T}$), the vertical height of an arm  is at most $3m$.
Any window that cuts the bottom arm of the simulation $\alpha'$ vertically, has one of $(g+1)^{6m}$ sets of glues corresponding to $6m$ locations that glues can appear at and the $g+1$ distinct choices for each glue (including the null glue).
So any window movie that vertically cuts the bottom arm of the assembly has such a glue set, and one of at most $(6m)!$ possible orderings for these glues to appear in the movie.
Then by the pigeonhole principle, examining $((g+1)^{6m} \cdot (6m)! + 1)$ such vertical cuts ensures some set of $6m$ glues and their ordering occurs twice.
If the arm has length $((g+1)^{6m} \cdot (6m)! + 1) \cdot 3 + 6$, examining one vertical cut of the bottom arm in every third supertile of the simulation, ignoring the first and last three supertiles in the arm, also finds a set of $6m$ glues and their ordering that occurs twice.

We now show how to combine this fact with Corollary~\ref{cor:windowmovie} to construct an assembly, producible by the simulator, but  that is not a simulation of any assembly produced by $\mathcal{T}$. Let $\vec{\gamma}' = \left(\gamma'_i \mid 0 \leq i < k\right)$ be such that the result of $\vec{\gamma}' = \gamma'$. Consider the first step $i$ of the assembly sequence $\vec{\gamma}'$ that places a tile~$t$ at some location $\vec{x}$, i.e., $\gamma'_{i} = \gamma'_{i-1} + \left(\vec{x} \mapsto t\right)$, satisfying one of the following two conditions:

\begin{enumerate}
\item[(1)] the placement of tile $t$ completes a path between the top finger and bottom finger through the keystone supertile, and possibly also through the $m \times m$ region of fuzz immediately to the west of the keystone supertile;
\item[(2)] the placement of tile $t$ is in the flagpole supertile.
\end{enumerate}

\newcommand\assemblyatstep[1]{C_{#1}}

\begin{figure}[t] %
\centering
\includegraphics[scale=1.0]{./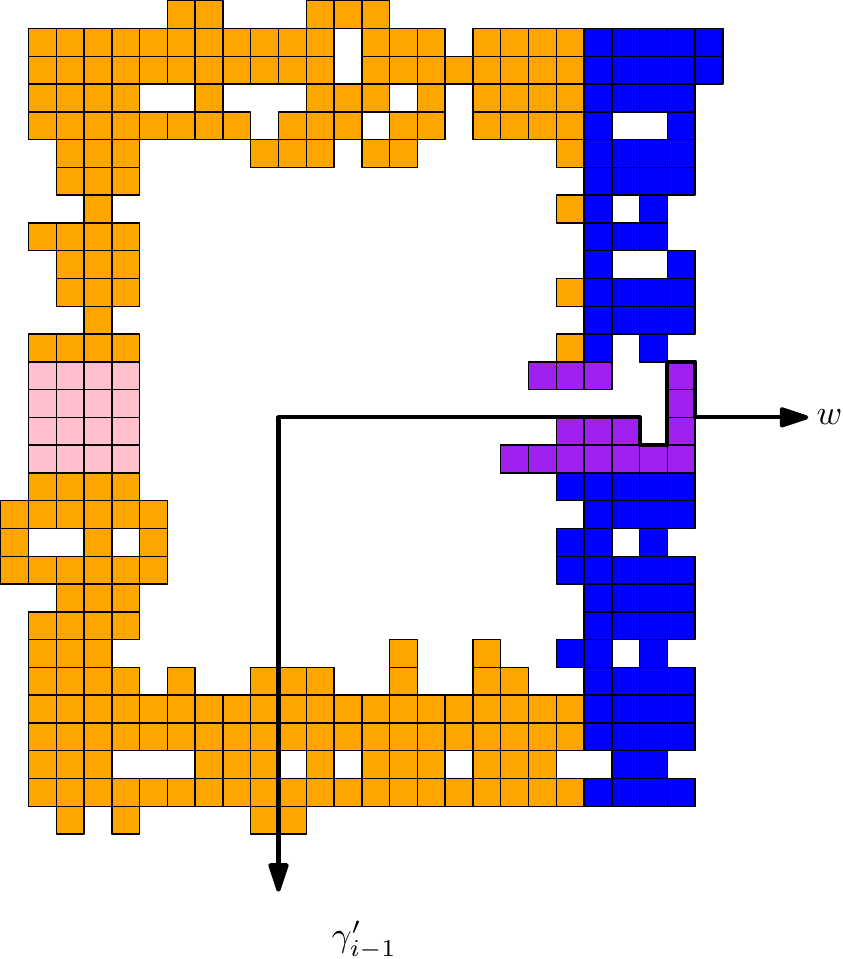}
\caption{The assembly $\gamma'_{i-1}$ and a window $w$ formed from (i) a vertical cut of the bottom arm and (ii) a path through the keystone region that does not cross any bond in the keystone region nor in the fuzz region to the west of the keystone region.
The bond-forming submovie ${\cal B}\left(M_{\gamma'_{i-1},w}\right)$ has no glues in the keystone region of $\gamma'_{i-1}$, since no path in $\gamma'_{i-1}$ from the top finger to the bottom finger through the keystone region exists.}
\label{fig:finger-window}
\end{figure}

Now, step backwards in the assembly process by one step and consider $\gamma'_{i-1}$, i.e., the assembly at step ${i-1}$ of $\vec{\gamma}'$.
Since condition (1) has not occurred, there exists a path $p$ along the edges of the grid graph starting from the $m\times m$ region that is distance $2m$ west from the keystone supertile, which travels eastward, threading through the $m\times m$ region west of the keystone supertile, then continues threading through the keystone supertile, and then past the east extent of $\gamma'_{i-1}$, such that no edge of $p$ crosses an edge shared by matching glues in $\gamma'_{i-1}$ (see Figure~\ref{fig:finger-window}).
So for any vertical cut of the bottom arm of $\gamma'_{i-1}$, one can extend the vertical cut into a window such that the bond-forming submovie of the window only has glues in the vertical cut of the bottom arm of $\gamma'_{i-1}$ (again, see Figure~\ref{fig:finger-window}).

\begin{figure}[t]
\centering
\includegraphics[width=1.0\columnwidth]{./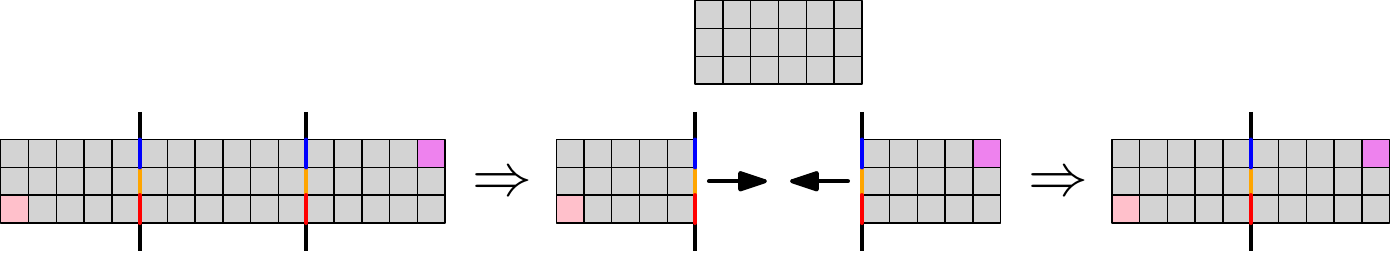}
\caption{Splicing two identical window movies together to produce another valid assembly sequence and terminal assembly by invoking Corollary~\ref{cor:windowmovie} with vertical windows.}
\label{fig:splicing}
\end{figure}

Then by the previous counting argument, one can find two such windows $w$, $w'$ with identical bond-forming submovies, as these windows only have glues forming bonds in the vertical cut of the bottom arm (see the left part of Figure~\ref{fig:splicing}).
Moreover, the two windows have vertical cuts separated horizontally by distance $d \geq 3m$ and not occurring in the first or last three supertiles of the arm.

This last property is key, as it follows $w$ and $w'$ can be modified to follow the same path through the keystone supertile or the fuzz immediately west by selecting a path through these supertiles and duplicating this path twice on both $w$ and $w'$.
The two occurrences of this subpath should be separated horizontally by distance $d$.
Then $w' = w + ( d, 0 )$.

\begin{figure}[t]%
\centering
\includegraphics[scale=1.0]{./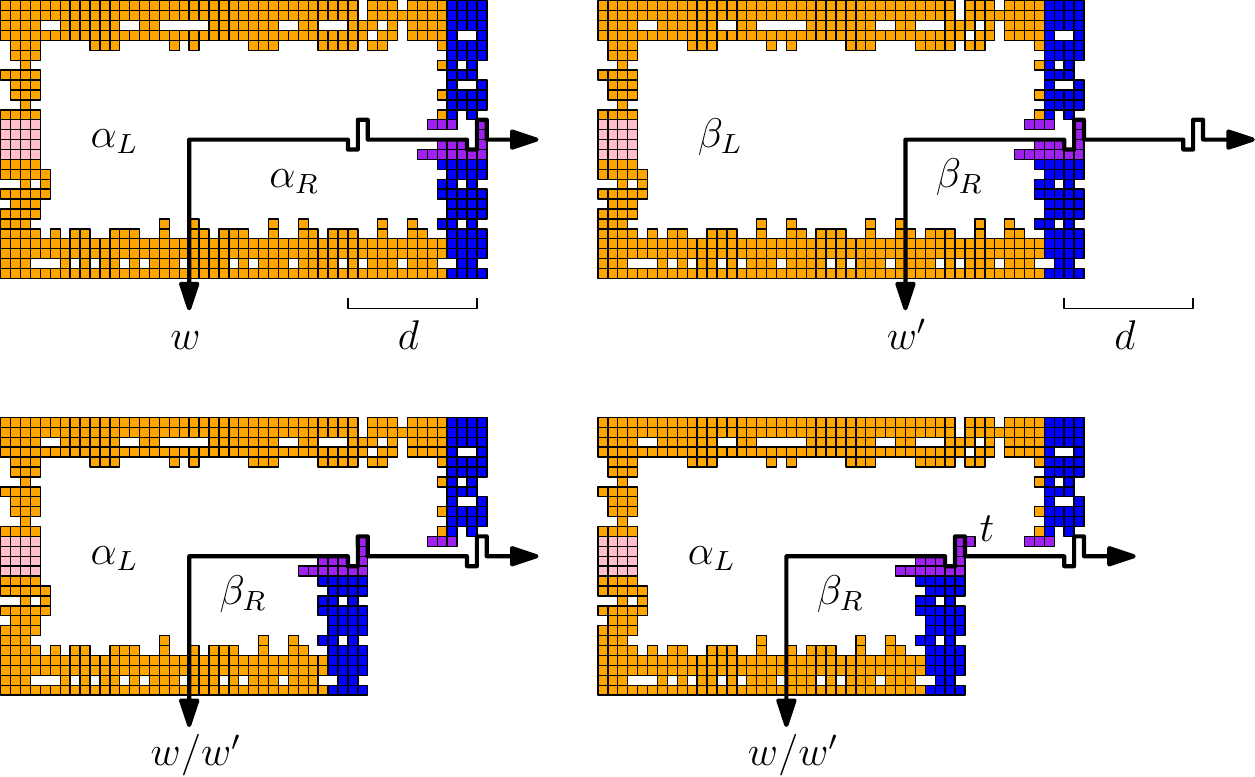}
\caption{An example of an assembly formed by $U$ simulating $\mathcal{T}$ and the identical bond-forming submoviews $w$ and $w'$ (top), the resulting producible assembly constructed via Corollary~\ref{cor:windowmovie} (bottom left), and producing an invalid simulation assembly by the valid placement of  a single tile $t$ (bottom right).}
\label{fig:bad-sim-overview}
\end{figure}

At this point, we have two assemblies $\alpha = \gamma'_{i-1}$, $\beta = \gamma'_{i-1}$, with assembly sequences $\vec{\alpha} = \vec{\beta} = \left(\gamma'_0, \ldots, \gamma'_{i-1}\right)$ and two identical bond-forming submovies ${\cal B}\left(M_{\vec{\alpha}, w}\right)$, ${\cal B}\left(M_{\vec{\beta}, w'}\right)$ for the assembly sequences of $\vec{\alpha}$ and $\vec{\beta}$ (see Figure~\ref{fig:bad-sim-overview}).
Then by Corollary~\ref{cor:windowmovie}, the assembly formed by taking the union of the assemblies consisting of 1. the part of $\gamma'_{i-1}$ partitioned by $w$ and containing the seed ($\alpha_L$), and 2. the part of $\gamma'_{i-1}$ partitioned by $w'$ and not containing the seed ($\beta_R$), denoted as $\alpha_L\beta_R$, is also a producible assembly of the simulation, i.e., $\alpha_L\beta_R \in \mathcal{A}[\mathcal{U}]$.
This assembly has a top arm of length $((g+1)^{6m} \cdot (6m)! + 1) \cdot 3 + 6$ supertiles and a bottom arm of length at least $6$ and at most $((g+1)^{6m} \cdot (6m)! + 1) \cdot 3 + 3$ supertiles.

Finally, we use information about which condition occurs in step $i$ of the simulation to construct an invalid assembly.
From conditions (1) and (2) above we know that $t$ binds to one of $\alpha_L$ or $\beta_R$.
Let $\hat{\gamma} = \alpha_L\beta_R + \left( \left(\vec{x}-(d,0)\right) \mapsto t \right)$, i.e., the addition of $t$ to $\alpha_L\beta_R$ at the relevant location.

If condition (2) holds (flagpole), then $t$ is placed in a region in which no tile should exist in a simulation with arms not aligned (fuzz in this region is not permitted, by the definition of (diagonal) fuzz in  Section~\ref{sec:simulation_def}). If condition~(1) holds, then $t$ was originally placed to complete a path between the tips of the top and bottom fingers through the keystone region in $\gamma'_i$.
So from $\hat{\gamma}$ we continue placing tiles found on the portion of this path from $t$ to the (here, nonexistent) top or bottom finger, so that we are recreating exactly the path between finger tips found in~$\gamma_i$.
Note that these new tiles are all placed within the keystone region and the supertile immediately to the west of the keystone, with the exception of exactly one tile placed either in the $m \times 1$ row of tile locations directly above the keystone above the (shorter) bottom arm (if $t$ was bound to $\beta_R$), or in the $m \times 1$ row of tile locations directly below the keystone below the (longer) top arm (if $t$ was bound to $\beta_R$).
In either case, Definition~\ref{def-t-follows-s} says that the placement of this particular tile implies an invalid simulation by a producible assembly.

To conclude the proof of Theorem~\ref{thm:3d-not-iu}, any claimed universal tile set~$U$ simulating~$\mathcal{T}$ (and in particular the assembly processes with very long but equal-length arms) produces assemblies that do not  correspond to a simulation of any assembly produced by $\mathcal{T}$.
That is, $U$ does not correctly simulate the production of $\mathcal{T}$ (Definition~\ref{def-equiv-prod}), hence $U$ does not correctly simulate~$\mathcal{T}$ (Definition~\ref{def-s-simulates-t}), and since nothing was assumed about~$U$ other than its existance, no such universal tile set  exists. \qed

\section{3D temperature-1 aTAM simulates 2D \\ temperature-1 aTAM}
\label{sec:3D_temp1_simulates_2D_temp1}

In this section, we give a proof of Theorem~\ref{thm:3D-sim-2D-simple-statement}. Formally, we  show that there exists a 3D aTAM tile set $\mathcal{U}$ such that, given an arbitrary 2D aTAM tile system $\mathcal{T} = (T,\sigma,1)$, where $|\sigma|=1$, there exists an appropriately initialized seed assembly $\sigma_T$, which depends on $T$, such that $\mathcal{U} = (U,\sigma_T,1)$ simulates $\mathcal{T}$ at scale factor $c$, for some $c \in \mathbb{N}$.

Our construction makes use of several of the techniques from \cite{USA}.  The basic idea is to use the tiles of $\mathcal{U}$ to assemble three-dimensional volumes, called \emph{supertiles}, each of which represent a single tile from $T$.  The dimensions of each supertile are $c \times c \times 6$.  The initial supertile which represents $\sigma$ contains an encoding of the entire tile set $T$. This encoding is ``passed'' from each supertile to each newly forming supertile and is used by each supertile to determine the tile type of $T$ that the supertile is supposed to simulate. The encoding of $T$ is also used by each supertile to determine any ``output'' glues, which may contribute to, if not initiate, the growth of neighboring supertiles.

Before presenting our construction, we first define a useful self-assembly gadget for reading geometrically-specified input.

\subsection{Read-write gadgets}

In temperature $2$ tile assembly systems, a tile attachment can be the result of the binding of two strength $1$ glues on different sides of the tile.  We call this \emph{cooperative} binding since the two tiles to which the new tile is binding are ``cooperating'' to allow for its attachment by each sharing a glue, and thus the information encoded in that glue.  However, in temperature $1$ systems such behavior cannot be enforced because either glue of the pair is sufficient to allow a new tile to bind, possibly ignoring the second glue.  This means that if, in order for the correct tile to be placed, it must ``collect'' information from more than one adjacent tile, then this information cannot be transmitted strictly via glues interacting.

One solution to this problem, in 3D, is to grow a path of tiles, which can potentially split into two (or more) branches, and use a previously placed tile to block the growth of one branch but allow further growth of another branch.  The path allowed to continue is thus explicitly provided with information from the glues along the path, as well as implicitly from the fact that it gets to continue. This is a  method to handle the fact that we can not do  cooperative binding at temperature 1: it uses geometry to transmit the ``second'' piece of information that must be used to make a decision.  In order to ensure that such information is deterministically provided to the growing path, the tiles which block one branch from completing must be guaranteed to have been placed prior to the growth of the path. Since it is possible for any branching paths at temperature $1$ to grow independently of each other, with either branch growing arbitrarily far before the other is extended by even a single tile, it is necessary to force the growth of portions of the assembly, which require such behavior, to be restricted to be a single-tile-wide path. Such a path will zig-zag back and forth in order to ``read'' the information previously encoded in the  geometric placement of blocking tiles.

See Figures~\ref{fig:3D-gadget-arm-write0}-\ref{fig:3D-arm} for examples of a path encoding each of two possible values, which are read by a later portion of the same path.

\begin{figure}
\begin{center}
\includegraphics[width=4.5in]{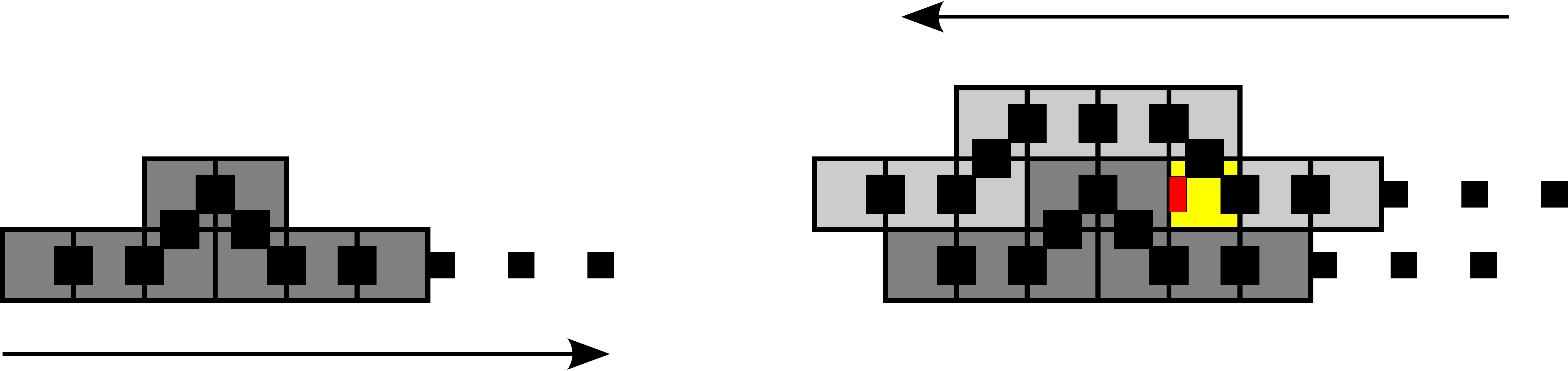}
\caption{An example of a (2D) path of tiles (left) growing to the right and encoding a `$0$', and (right) growing back to the left and reading the `$0$'. The reading path can potentially branch at the location denoted by the yellow tile.  However, one possible branch is blocked, with the mismatched and blocked glue shown in red.}
\label{fig:3D-gadget-arm-write0}
\end{center}
\end{figure}

\begin{figure}[t]
\begin{center}
\includegraphics[width=4.5in]{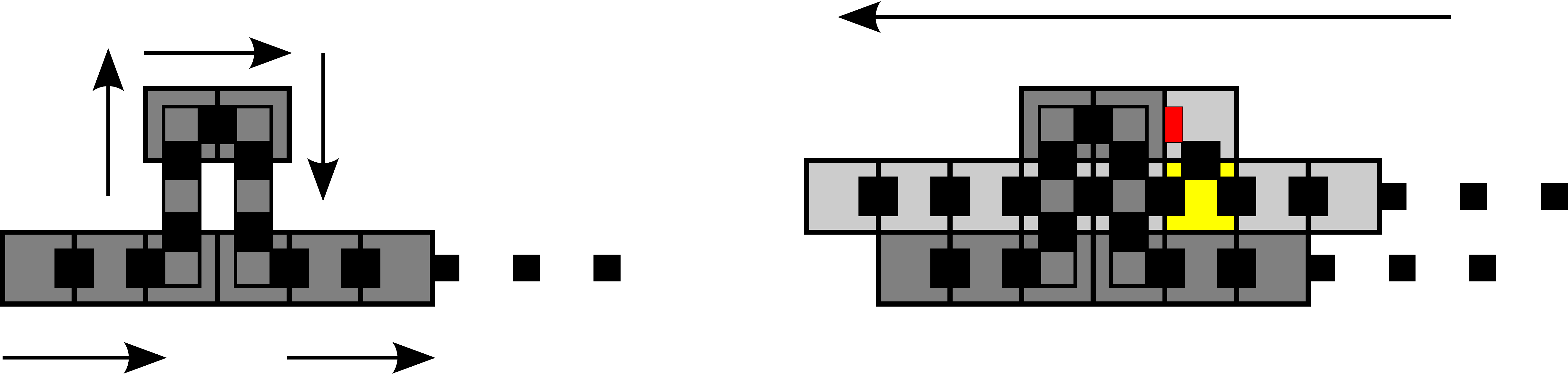}
\caption{An example (in 3D) of a path of tiles (left) growing to the right and encoding a `$1$', and (right) growing back to the left and reading the `$1$'. See Figure~\ref{fig:3D-arm} for a 3D view of the portion of the path encoding the `$1$'.  Note that the smaller grey squares denote tiles which are located in the $z=1$ plane, while the others are located in $z=0$.  See also Figure~\ref{fig:3D-gadget-arm-write0} for more explanation.}
\label{fig:3D-gadget-arm-write1}
\end{center}
\end{figure}

\begin{figure}[t]
\centering
    \subfloat[3D view of path encoding value `$1$']{
        \label{fig:3D-arm1}
        \includegraphics[width=1.5in]{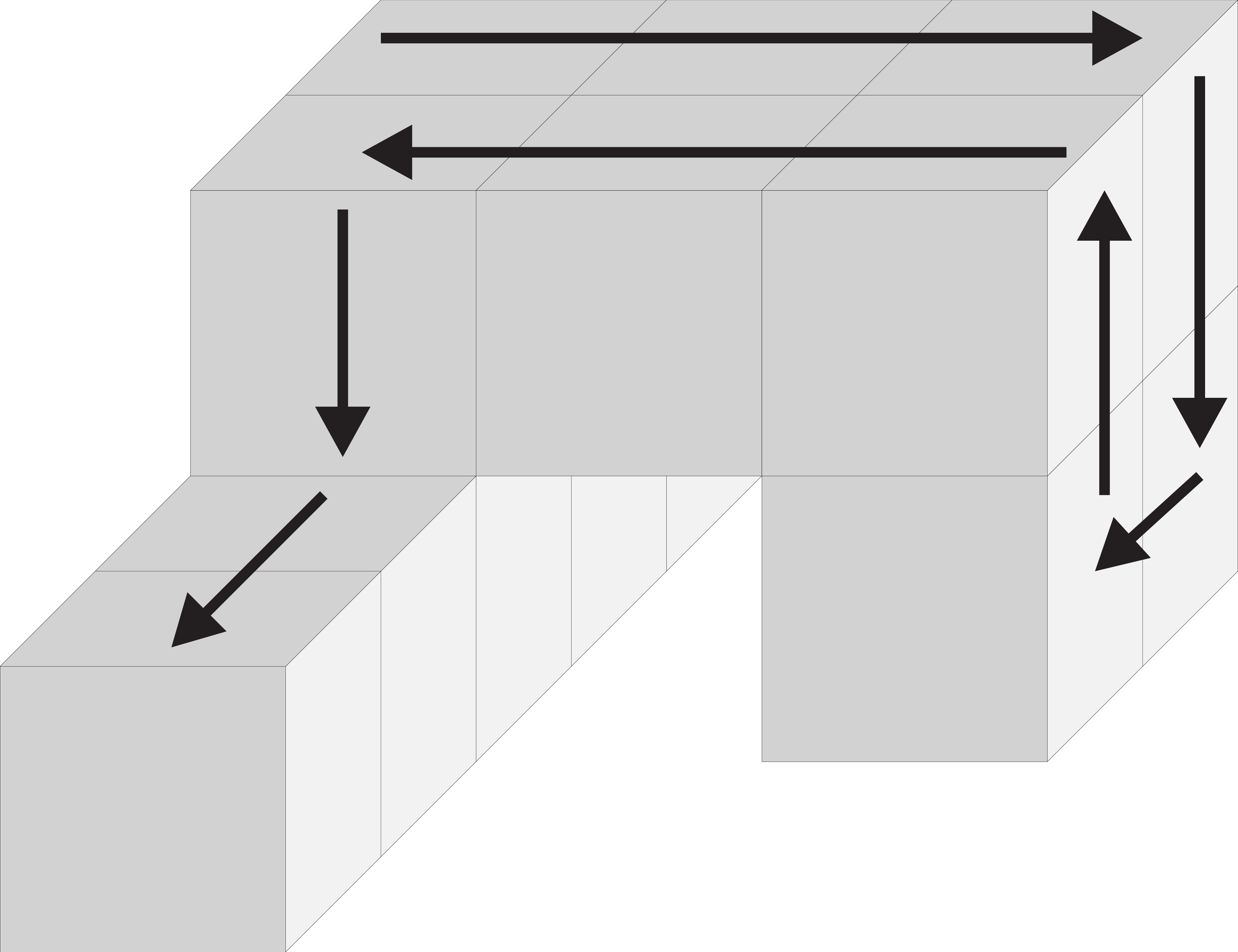}}
    \quad\quad\quad
    \subfloat[Rotated 3D view of path encoding value `$1$']{
        \label{fig:3D-arm2}
        \includegraphics[width=2.2in]{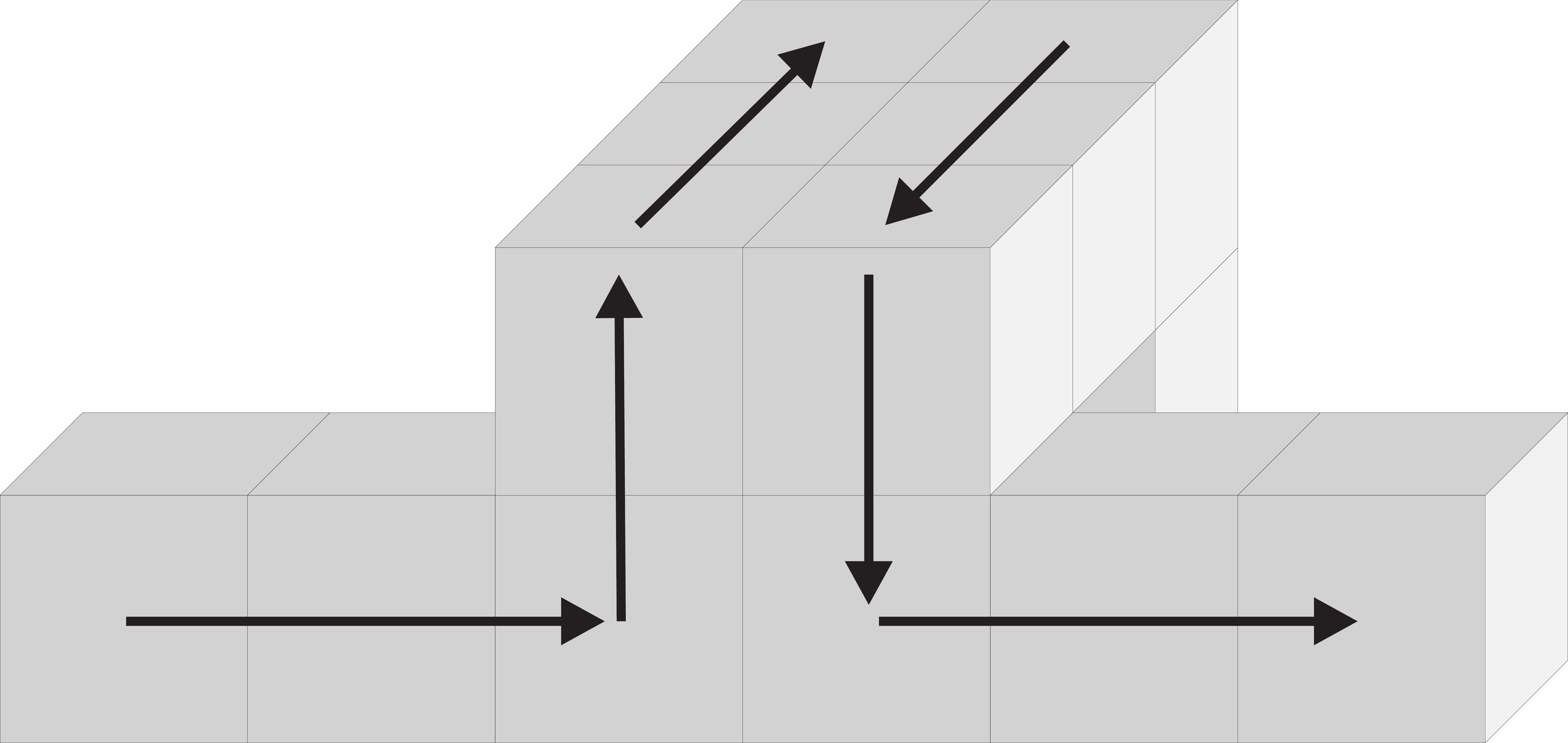} }
    \caption{\small 3D view of a path encoding a value by stepping up into the third dimension then across and back down to place a blocking tile, then growing back to continue along the original direction.  The arrows show the order of tile additions.  See Figure~\ref{fig:3D-gadget-arm-write1} for more details on the growth of the path.}
\label{fig:3D-arm}
\end{figure}

While the examples of Figures~\ref{fig:3D-gadget-arm-write0}-\ref{fig:3D-arm} demonstrate the ability of a short path of tiles to read one of two possible values (a single bit), we can combine these gadgets to read a sequence of values.

For example, suppose we encode an input string $w = w_0\cdots w_{l-1}$ as a series of geometric ``bumps'' and ``dents'' along a path, then the glues (that connect the tiles) of a path that ultimately navigates these geometric obstacles can effectively read each bit $w_i$, such that after the ``reader'' path finishes scanning all of the bumps and dents, the input $w$ is stored in the most-recently-added tile $t$ in the reader path. Then it is possible to use $t$ to compute some function $z = f(w)$ of those bits. Then the growth of a final ``output'' path can be initiated which goes go and builds a path representing the correct pattern of bumps and dents corresponding to the value $z$. These output bumps and dents can then be used as input for a subsequent ``reader'' path.

In order to modularize such functionality, we now define a \emph{read-write gadget}, which is a block of depth 2 or 4 that can be used in any of the three layers $L_0$, $L_1$ and $L_2$ by simply translating it to the planes $z=0$, $z=2$ or $z=4$, respectively.  A read-write gadget is a $g \times g \times d$ region, for some $g \in \mathbb{N}$ and $d \in \{2,4\}$, with (1) an \emph{entrance} location, (2) a \emph{reading} region, (3) at least one \emph{output} region and (4) zero or more \emph{exit} locations. In the reading region, a path implicitly reads the geometry of a previously-assembled path of tiles via a series of branching points at which the path may branch one of two possible ways depending on a bit value specified geometrically. An output region is where the path travels after it has finished collecting the input bits specified by the geometry of the reading region, and a single read-write gadget may have output regions on up to 2 different planes (i.e. 0 and 2, or 2 and 4), which is the reason that they may be of depth either 2 or 4, and this is the way that we will transfer information among different levels of the construction.  An output region of one read-write gadget may overlap with neighboring read-write gadgets to so that the output of one read-write gadget can serve as the input for the reading section of another read-write gadget. An exit location is where the a path exits the gadget. Note that, after read-write gadget completes its reading phase, its reading path may branch into multiple output paths, whence a read-write gadget may have more than one exit location.  See Figure~\ref{fig:3D-sim-2D-gadget-example} for an example of a read-write gadget, which reads a series of bits $A$, $B$, and $C$ (specifically, $A=0$, $B=1$, and $C=0$), and then outputs the bits ($A=1$, $B=0$, $C=0$) before exiting.  Note that the input and output are both located on plane $z=0$, while both require the placement of some tiles into plane $z=1$.

\begin{figure}
\begin{center}
\includegraphics[width=5.0in]{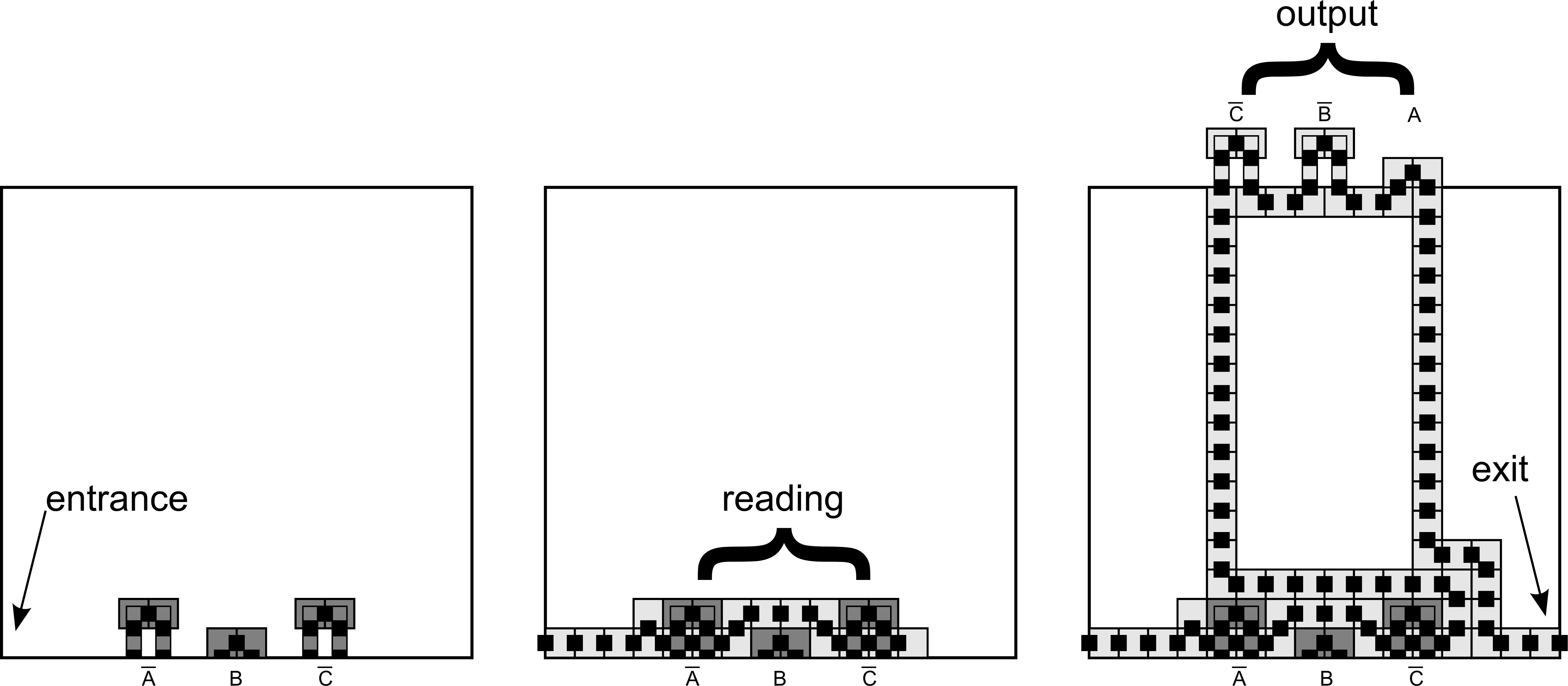}
\caption{An example of a read-write gadget which reads the bits $A=0$, $B=1$, and $C=0$ and outputs $A=1$, $B=0$, $C=0$.  The tiles attached to those providing the input but outside of the gadget are not shown.  The gadget requires the use of planes $z=0$ and $z=1$, but both input and output are located in plane $z=0$.}
\label{fig:3D-sim-2D-gadget-example}
\end{center}
\end{figure}

Given a constant sized tile set, the number of input values which can be read within a read-write gadget is bounded by a constant, since we propagate information about each branch solely by the glues along the path. In other words, after a reading path navigates each geometrically-specified bit, that bit is concatenated to each of the subsequent glues along the path. Interestingly, this idea can be carried out in very much the same spirit at temperature 1 in 2D if one negative-strength glue is allowed (i.e., a glue that potentially subtracts from--instead of adds to--the total strength with which a tile may bind) \cite{Patitz-2011}.

In our construction, we will make the following simplifying assumptions: 1) all read-write gadgets have input on no more than two sides (with respect to the $x$ and $y$-axes) and located in a single plane, 2) all reading and output regions are on planes $z=0,2,$ or $4$, with planes $z=1,3,$ and $5$ reserved for the paths needed to ``reach over'' and construct output regions, 3. entrance and exit regions are never on the same side and plane, and 4) reading and output regions are never on the same side and plane. 

\subsection{Construction details}

First, we divide the 3D space into \emph{layers}, each of which consist of two consecutive planes.  We define layer $L_0$ as planes $z=0,1$, $L_1$ as planes $z=2,3$, and $L_2$ as planes $z=4,5$.  %
We call $L_0$ the \emph{competition layer}, and it is used to ``decide'' which input superside is responsible for choosing the tile from $T$ to be represented and for creating the output supersides.  We call $L_1$ the \emph{information layer}, and it is used to help propagate the information to and from input and output supersides. $L_2$ is the \emph{output layer} and it selects and distributes the necessary output information for each superside, in effect arranging the ``output'' glues for each simulated tile (as well as the full definition of $T$) into the proper locations to serve as inputs for subsequent supertile formation.

Note that $6$ planes in $z$ are not strictly necessary for this construction, and although it can be made to work in $3$ (or perhaps even a minimum of $2$), modifying the construction to use fewer than $6$ planes makes it more complicated and more difficult than it already is to present: therefore, we choose $6$ for clarity of presentation.  In an effort to simplify the construction for presentation, we describe it in such a way that we subdivide each supertile into a grid of read-write gadgets (all of the same dimensions and with read and output locations for the same set of variables) rather than individual tiles. This will come at the cost of a larger overall scale factor for the simulation, but only by a constant independent of the tile set being simulated.

To help describe our construction, we make use of an example throughout.  The tile set used for the example can be seen in Figure~\ref{fig:3D-sim-2D-example-tile-set}.  The first aspect of the construction which we will explain is the encoding of $\mathcal{T}$ by the tiles of $\mathcal{U}$.

\begin{figure}
\begin{center}
\includegraphics[width=1.5in]{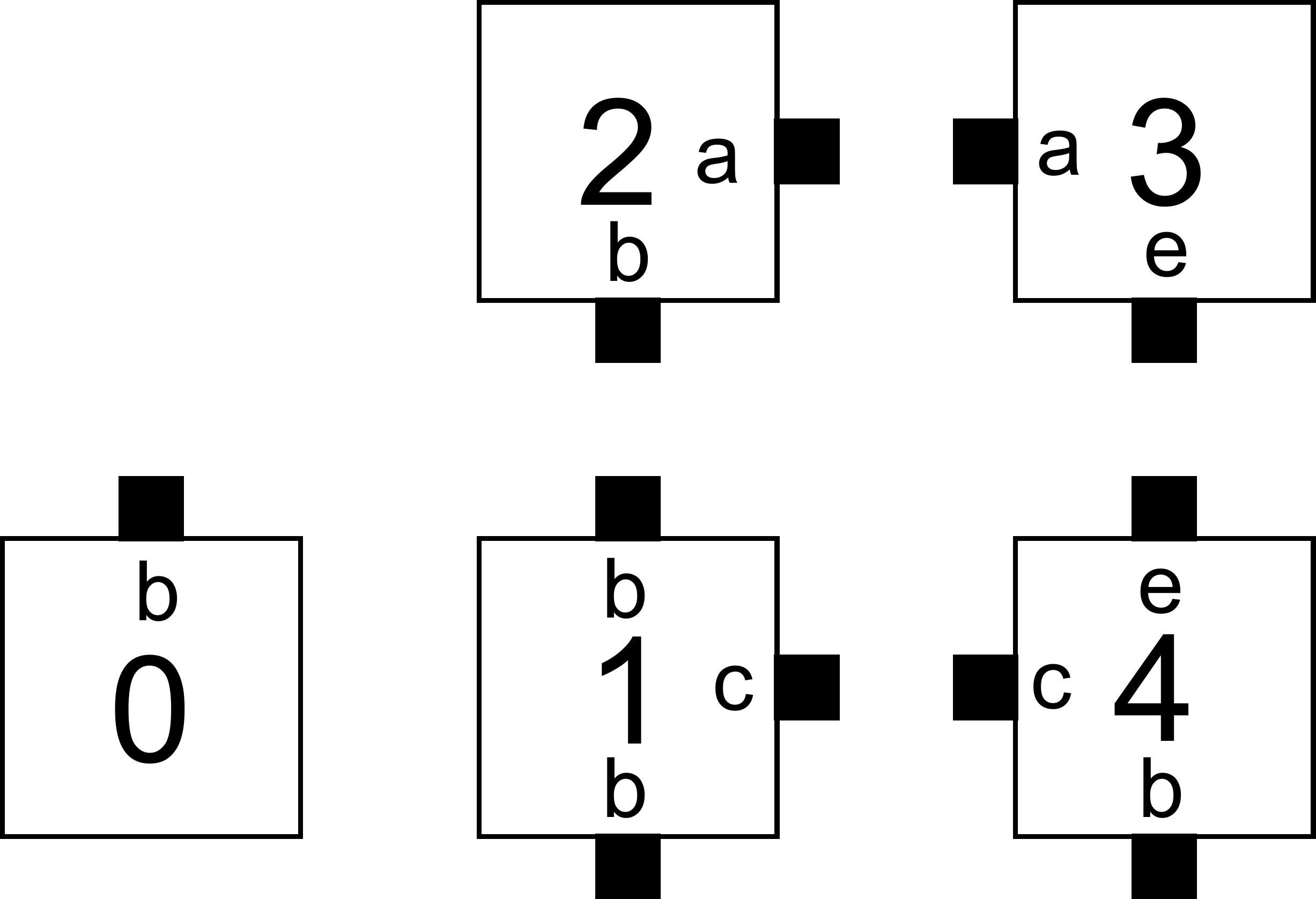}
\caption{An example tile set used to describe the construction for Theorem~\ref{thm:3D-sim-2D-simple-statement}.}
\label{fig:3D-sim-2D-example-tile-set}
\end{center}
\end{figure}

\subsubsection{Encoding $T$}\label{sec:3D-sim-2D-T-encoding}

To encode $T$, we make use of the fact that, at temperature $1$, the glue on the edge of a tile characterizes the set of all tiles capable of binding to that edge of the tile in any producible assembly (this stands in contrast to temperature $2$ systems in which a single strength $1$ glue only specifies half of the information for a potential binding event).  Therefore, rather than encode any information about the specific glues in $T$, we simply keep track of all tiles with the ability to bind to each side of each given tile.  To do so, let $t_0, t_1, \ldots, t_{|T|-1}$ be an enumeration of the tile types in $T$. We then place read-write gadgets composed of tiles in $\mathcal{U}$ in a line so that the edges on a given side output the  pattern defined by the algorithm in Figure~\ref{fig:encodingalgo}.

\begin{figure}[t]
\begin{algorithm}[H]
  Print $`B'$ %

  \For{$0 \leq i < |T|$}{
    \tcc{print $t_i$ as follows:}
    Print the binary number $i$, padded with 0's to length $\lceil \log |T| \rceil + 1$
    \For{$d \in \{N,E,S,W\}$}{
      \tcc{print side $d$ of tile $t_i$ as follows:}
      \tcc{print the special character `$N$',`$E$',`$S$', or `$W$' corresponding to the side}
      Print $d$

      \For{$0 \leq j < |T|$}{
        \If{side $d$ of $t_i$ binds to the opposite side of $t_j$}{
          $last \gets True$

          \For{$j < k < |T|$}{
            \tcc{determine if this is the final tile that binds}
            \If{side $d$ of $t_i$ binds to the opposite side of $t_k$}{
              $last \gets False$
            }
          }
          \eIf{$last == True$}{
            Print `$f$'
          }{
            Print `$y$'
          }
        }
        Print $`n'$%
      }
      Print the binary number $j$ padded to length $\lceil \log |T| \rceil$
    }
    Print $`D'$%
  }
  Print $`F'$%
\end{algorithm}
\caption{Algorithm that describes an encoding of tiles as strings.}
\label{fig:encodingalgo}
\end{figure}

This encoding is simply a listing of each tile type $t \in T$ which includes, for each tile and each direction, a full list including the number of each tile type $t' \in T$ and a `$y$' if $t$ and $t'$ bind along that edge of $t$ (i.e.\ their glues match)  and a `$n$' if they don't bind.  Further, if $t'$ is the last tile type in the list which does bind, instead of a `$y$', it is prefaced with an `$f$'.  (Note that the `$y$', `$f$', or `$n$' come before the number of the tile type in the enumeration of $T$.)  Thus, the encoding (with numbers written in decimal rather than binary and spaces added to make it easier to read) of the example $T$ from Figure~\ref{fig:3D-sim-2D-example-tile-set} is as follows:

B 0 Nn0y1y2n3f4 En0n1n2n3n4 Sn0n1n2n3n4 Wn0n1n2n3n4 D

  1 Nn0y1y2n3f4 En0n1n2n3f4 Sy0f1n2n3n4 Wn0n1n2n3n4 D

  2 Nn0n1n2n3n4 En0n1n2f3n4 Sn0n1n2n3f4 Wn0n1n2n3n4 D

  3 Nn0n1n2n3n4 En0n1n2n3n4 Sn0n1n2n3f4 Wn0n1f2n3n4 D

  4 Nn0n1n2f3n4 En0n1n2n3n4 Sy0f1n2n3n4 Wn0f1n2n3n4 D F

\subsubsection{Supersides}

We define a \emph{superside} to be the outermost row of read-write gadgets along the perimeter of one side of a $c \times c$ supertile in the construction.

Every supertile, other than the seed (see Section~\ref{sec:3D-sim-2D-seed-structure} for the structure of the seed) grows from an \emph{input} superside, which grows from the adjacent \emph{output} superside of a neighboring supertile.  An input superside for a supertile consists of the following components:

\begin{enumerate}
\item A binary string $h$, which encodes the height of the \emph{probe} (to be defined later),

\item A list $S$ containing the number of each tile type that could
  bind to the output superside, which placed this input superside
  (i.e.\ adjacent to this superside), and

\item The encoding of $T$ (previously discussed).
\end{enumerate}

The list $S$ is simply the list of tile types with each preceded by a `$y$', `$f$', or `$n$', corresponding to whether or not the supertile could grow to represent a tile of that type.  (Yes for those preceded with `$y$' or `$f$', no otherwise, with `$f$' marking the last one.)  See Section~\ref{sec:3D-sim-2D-T-encoding} for more detail. For example, an input superside for a supertile north of a supertile representing tile type~$1$ would have $S$ encoded as follows (with the numbers represented in binary): ``Nn0y1y2n3f4'', thus denoting that a tile of type $1,2,$ or $4$ could bind to the north of a 1 tile.

See Figure~\ref{fig:3D-sim-2D-competition-level} for an example of a supertile with all $4$ input supersides represented (along with the probes reaching toward the center of the supertile to be described in Section~\ref{sec:3D-sim-2D-competition-layer}).

\subsubsection{The competition layer}\label{sec:3D-sim-2D-competition-layer}

\begin{figure}
\begin{center}
\includegraphics[width=4.0in]{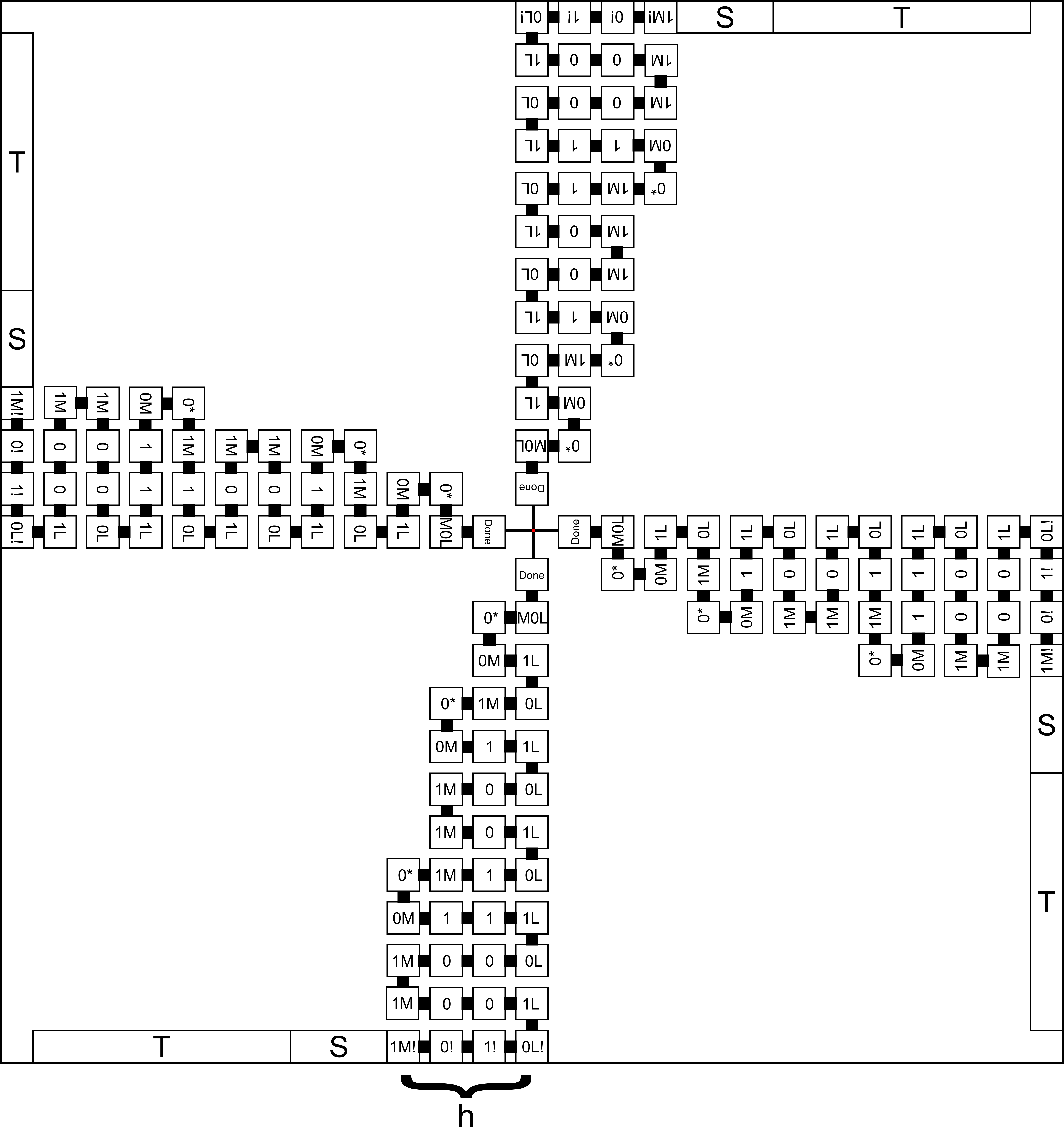}
\caption{A high-level depiction of the competition layer (not to scale).  The white squares logically represent tiles as read-write gadgets (rather than individual tiles), and the black squares are used to show the path of glue connectivity between consecutive gadgets and thus the zig-zag growth pattern.  Note that, although there is no glue binding between most gadgets in adjacent rows, the information is passed from a previous row to a subsequent row of read-write gadgets via the geometry of the output regions from the read-write gadgets in the previous row.}
\label{fig:3D-sim-2D-competition-level}
\end{center}
\end{figure}
The competition layer, $L_0$, is the arena in which a battle ensues (between competing probes) to determine the type of tile to be simulated by the newly-forming supertile. Assume that one or more input supersides for a supertile have formed (the seed supertile will have at least one output superside to be used as an input superside for supertile that represents a tile capable of binding to the seed in the simulated system).  Each such superside will begin the growth of a log-width binary counter  that counts down, beginning from the value $h$, encoded in the region dednoted by $h$ to $0$.  This pattern of growth is called a \emph{probe}, and grows to the location immediately adjacent to the center location of the supertile (the reader should consult the references \cite{IUSA,USA} for 2D simulation constructions implementing probes as decreasing binary counters). The center location of a supertile is not formed as a read-write gadget, but instead each probe attempts to grow a single-tile-wide path of tiles from the adjacent read-write gadget to place a tile in the center of the supertile.  Exactly one probe will win the competition to reach that center location first and be able to place a tile in that center position, thus ``winning the competition'' to determine what type the supertile will be. Note that this ``competition'' does not determine which tile type is to be simulated by this supertile. At this point, we only know the new supertile will grow from the winning superside.
After its victory, the supertile will then be able to form the output supersides. Growth of all other (losing) probes is halted by them being blocked from the winning (center) position. Thus, losing probes never leave the competition layer.

\def\hyph{-\penalty0\hskip0pt\relax}

Note that as a probe grows, all of the read-write gadgets along its counter-clockwise-most side, other than at the very base of the probe, present a special marker value. The read-write gadget closest to the input superside presents another special marker value denoting the end of the probe. The latter marker will be used by the path growing back along the counterclockwise-most side of the winning probe from the (winning) center to the superside from which the winning probe originated as a halting signal.  See Figure~\ref{fig:3D-sim-2D-probe} for an example of a probe winning the competition and then growing a path back down to the superside from which it originated.

\begin{figure}
\centering
    \subfloat[Probe growing upward as a log-width binary counter (that counts down, not up), where each labelled white square is a read-write gadget.  After counting down all the way to $0$, a single row of tiles (as opposed to a read-write gadget) grows toward the center location.  If it is able to place a tile there, the probe ``wins'' the competition thus determining the identity of the newly-forming supertile.]{
        \label{fig:3D-sim-2D-probe-win}
       \hspace{4ex} \includegraphics[height=5.0in]{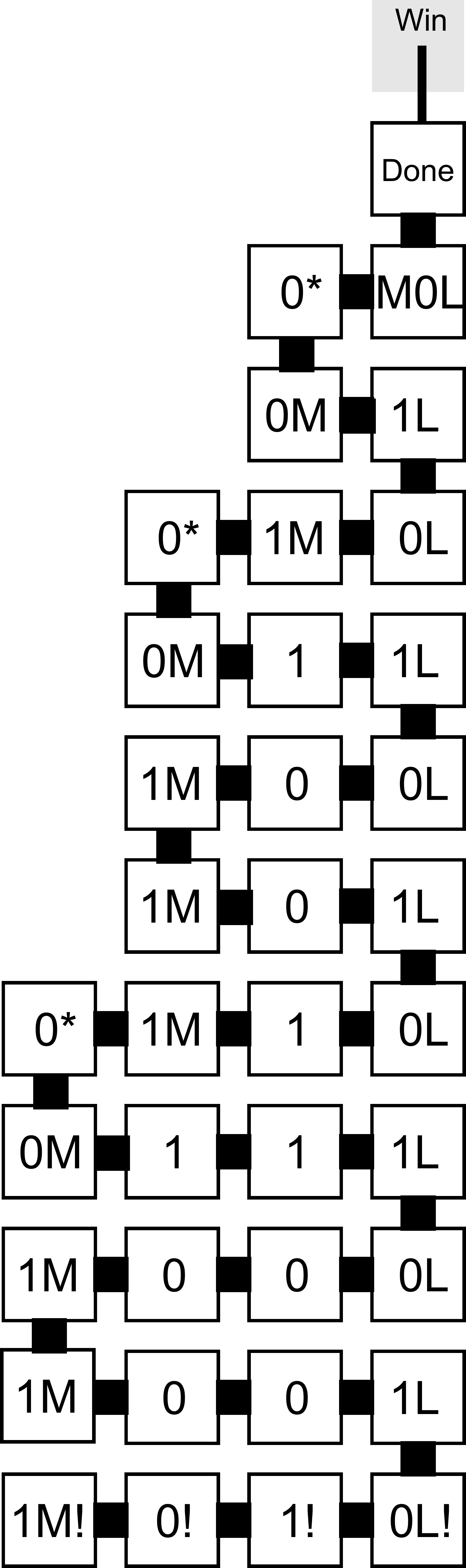}\hspace{4ex}}
    \quad\quad\quad
    \subfloat[After a probe wins, it grows a path back down, along its counterclockwise-most (here, rightmost) side using the information on its east to find the bottom position.  It then grows up into layer $L_1$ and over to the original row of the input superside and reads the information from that side (which was output to both $L_0$ and $L_1$) to begin forming the output layer.]{
        \label{fig:3D-sim-2D-probe-return}
        \hspace{1ex}\includegraphics[height=5.0in]{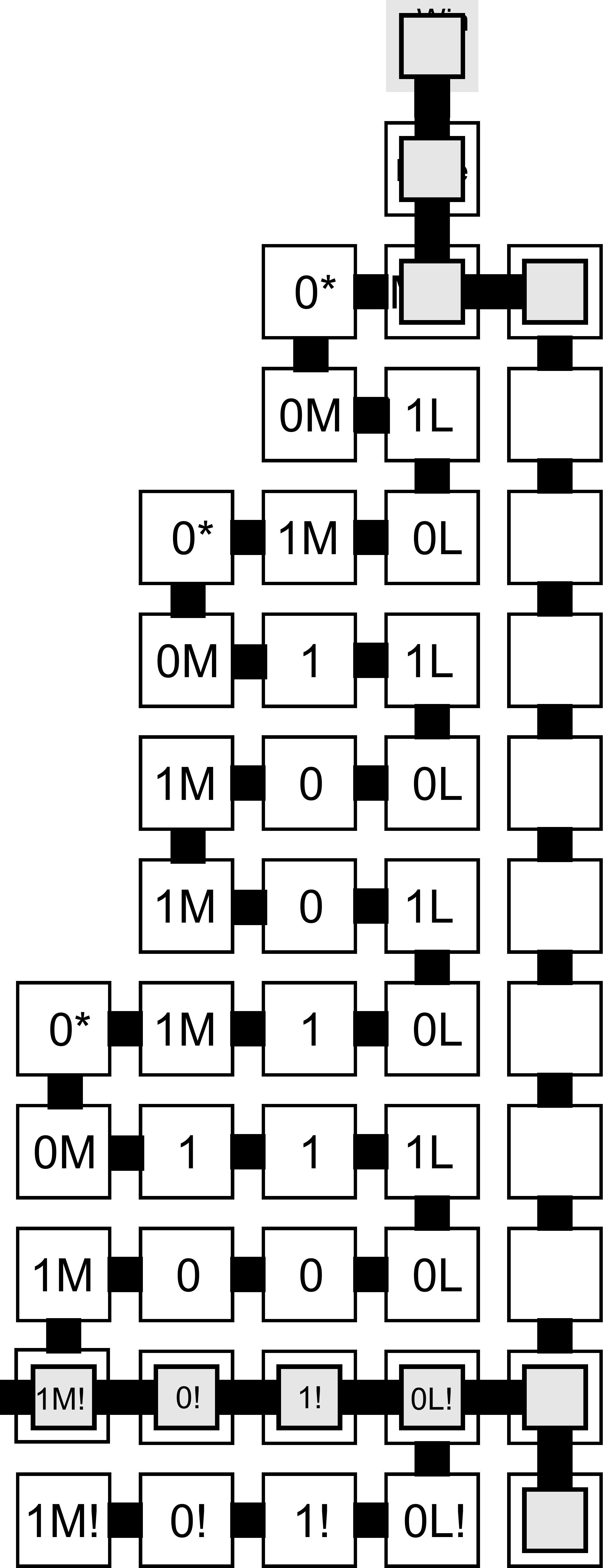}\hspace{1ex} }
    \caption{\small Example of a probe which begins with $h = 10$.  Each white and grey square represents a read-write gadget, and all of which output to the north of $L_0$, while those in the least significant bit position (with labels including `$L$') also output to the east of $L_0$.}
\label{fig:3D-sim-2D-probe}
\end{figure}

\subsubsection{The information layer}

The main purpose of the information layer, $L_1$, is to facilitate the transfer of information between the two other layers of the construction. Figures~\ref{fig:3D-sim-2D-output-level1}-\ref{fig:3D-sim-2D-output-level4} show in yellow the portions of a supertile that grow in into this layer.  Other than essentially ``bridging the gap'' between layers $L_0$ and $L_2$, growth in $L_1$ is required at the tip of the victorious probe in order to climb over the winning probe in case there are other, losing, probes surrounding the tile placed in the center position of the supertile.

\subsubsection{The output layer}

After a superside has won the competition via its probe, a path grows back down along the probe until reaching the base, at which point it begins growth in the clockwise direction.
It grows in a zig-zag path which rotates the encodings of $T$, $h$, and the appropriate new value for the set $S$ on that side, into position to create an output superside (see Figure~\ref{fig:3D-sim-2D-output-creation} for an extremely high-level sketch of the process, and see Figure~\ref{fig:3D-sim-2D-zig-zig-rotate-split} for a basic example of how the zig-zag pattern of growth of read-write gadgets can be used to perform operations such as rotation and splitting of information into two directions). The first step is to select a tile type (by its number) from the set $S$, which is represented in the input superside.  This is done by the first row to grow across the superside.

After reaching the beginning of $S$, at every position where a character~$y$ is encountered, the tile-selection row can nondeterministically choose to select the tile number $t$ immediately following the $y$.  If it chooses $t$, then the bits of $t$ are marked as selected and the selection is complete.  Otherwise, the same choice is possible for each $y$ encountered.  If (the number of) no tile has been selected when the $f$ symbol is encountered (there is guaranteed to be exactly one $f$, otherwise an input superside would not have been created), then this last tile type immediately following the $f$ marker is forced to be selected since it is the last valid choice and a choice must be made. Note that there could be multiple tiles to choose during this process. All entries marked with $y$, in the case that the system being simulated is nondeterministic, could be selected to attach at this step. Note that this type of nondeterministic selection of tile types does not fairly choose between all choices with equal probability (for the sake of discussion, assigning equal probability $1/k$ for each of $k$ nondeterministic tile choices for a given binding event).  This method is used for simplicity of discussion, but more complex selection methods, which choose options with closer to uniform probability, could be utilized.  The reader is encouraged to consult \cite{RNSSA} for a discussion of such ``random number selection'' techniques. These techniques can be implemented using zig-zag growth patterns of read-write gadgets.

Before selecting the bits of the tile type to be simulated, the supertile represents the empty space, i.e., a point in the simulated system that has yet to receive a tile type. However, once the tile-selection row has selected all the bits, we know what tile type it simulates. For a complete description of the representation function, see section \ref{sect:repr}. For the selected bits of $t$, since the encoding of $T$ gets rotated and continues to move upward, the encoding of $T$ is available to have the bits of $t$ pass through it.  The bits identifying the number for each tile type encoded in $T$ are marked if they match the bits of $t$, and after all of the bits of $t$ have passed through the uniquely matching tile type number is identified.  This provides subsequent rows of growth the ability to select the encoding of the appropriate side (for the about-to-be-formed superside) for tile type $t$, so that they can then be rotated into position to become the set $S$ for the output superside.

Since only one input superside can possibly win the competition, and all growth initiated from a side which lost the competition remains in the competition and information layers ($L_0$ and $L_1$), it is guaranteed that the output layer is completely available for use by the winning superside to grow clockwise around the supertile and create the necessary output supersides.  It is important to note that one zig-zagging, one-tile-wide path of tiles is responsible for the growth originating from an input superside, growing the probe, claiming the center position of the supertile, growing back down to the input superside, selecting which tile to represent, moving and rotating the information for new output supersides around the supertile. We must use a single path in order to ensure that any information, which is implicitly represented by the geometry of the read-write gadgets, is in place before needing to be read said information.

At the point when the information necessary to form a new output superside is fully rotated and in position, which may then grow into an input superside for an adjacent supertile, then the single path splits into two paths.  The original path continues to transfer the information around the supertile for all other output supersides (terminating after placing the information for the third output superside), while the new branch is free to potentially begin the growth of and win the competition for the new supertile. However, the new path first grows along the gadgets representing the information for the new superside and checks the values of the new set $S$.  If the location for every tile in $S$ is marked with an $n$, then there is no tile in $T$ that can attach to this side of the tile being represented by the current supertile and the path building this output superside terminates before it starts to build a corresponding input superside. Otherwise, in the case where there is a tile that could attach to the newly formed output side, the new path continues by growing another row which copies the information for the output superside down to level $L_0$. If it is able to complete the growth of the new output superside, then an input superside in the region for an adjacent supertile assembles and begins the growth of the probe for that new supertile. However, if a supertile already exists in that neighboring position and had already placed an input superside into this side of the current supertile (which must have lost the competition for this supertile, else it would be the one creating the output supersides), then the path creating the new output superside will be blocked and will terminate. This is because the slight overlap of the regions representing $h$ (the positions for the least significant bits of each copy of $h$ are in the same location, which puts them in the correct alignment to grow probes directly toward the center of each supertile; see the west side of Figure~\ref{fig:3D-sim-2D-output-level2} for a depiction of how the locations for the two encodings of $h$, north (input) and south (output), overlap).
This correctly models simulation since a supertile must already exist in the adjacent position to have placed an input superside here, and therefore it is unnecessary to attempt to grow into that location.  Furthermore, if a supertile already exists in the adjacent location but has yet to place an input superside that will prevent the growth of this new superside, that will cause no problem either because the completion of the new superside will only result in a probe which grows toward the center of the adjacent supertile but fails to win the competition.  The resulting assembly will not break the simulation of the adjacent supertile (of course, if no probe has yet claimed the center position to win the competition, this superside has a valid chance at doing so).

As the information in the output layer grows clockwise around the supertile, the spacing is designed so that each completed rotation of the information for a superside provides an implicit ``counter'' that provides the output layer with the information necessary to know when to stop and deposit an output side.  Thus, no other counter values need to be encoded and the rotations of the information can provide all of the necessary spacing information for correct growth.

See Figures~\ref{fig:3D-sim-2D-output-level1}-\ref{fig:3D-sim-2D-output-level4} for an example of how an output layer grows.  The grey regions represent tiles in the competition layer, $L_0$, the yellow those in the information layer, $L_1$, and the blue those in the output layer, $L_2$.  In this series of figures, a scenario is shown where there are four input supersides, all vying for the center position of the supertile, with the southern probe winning that competition.  For the sake of depicting the full flow of information and location of information in all input and output supersides, the overlapping positions of encodings of $h$, which would prevent output supersides from forming where completed input supersides already exist, are ignored.  However, the fact that encodings from $h$'s of input supersides use space needed by the encodings of $h$ for the output supersides would actually prevent them from completing since they are unnecessary.

\begin{figure}
\begin{center}
\includegraphics[width=2.5in]{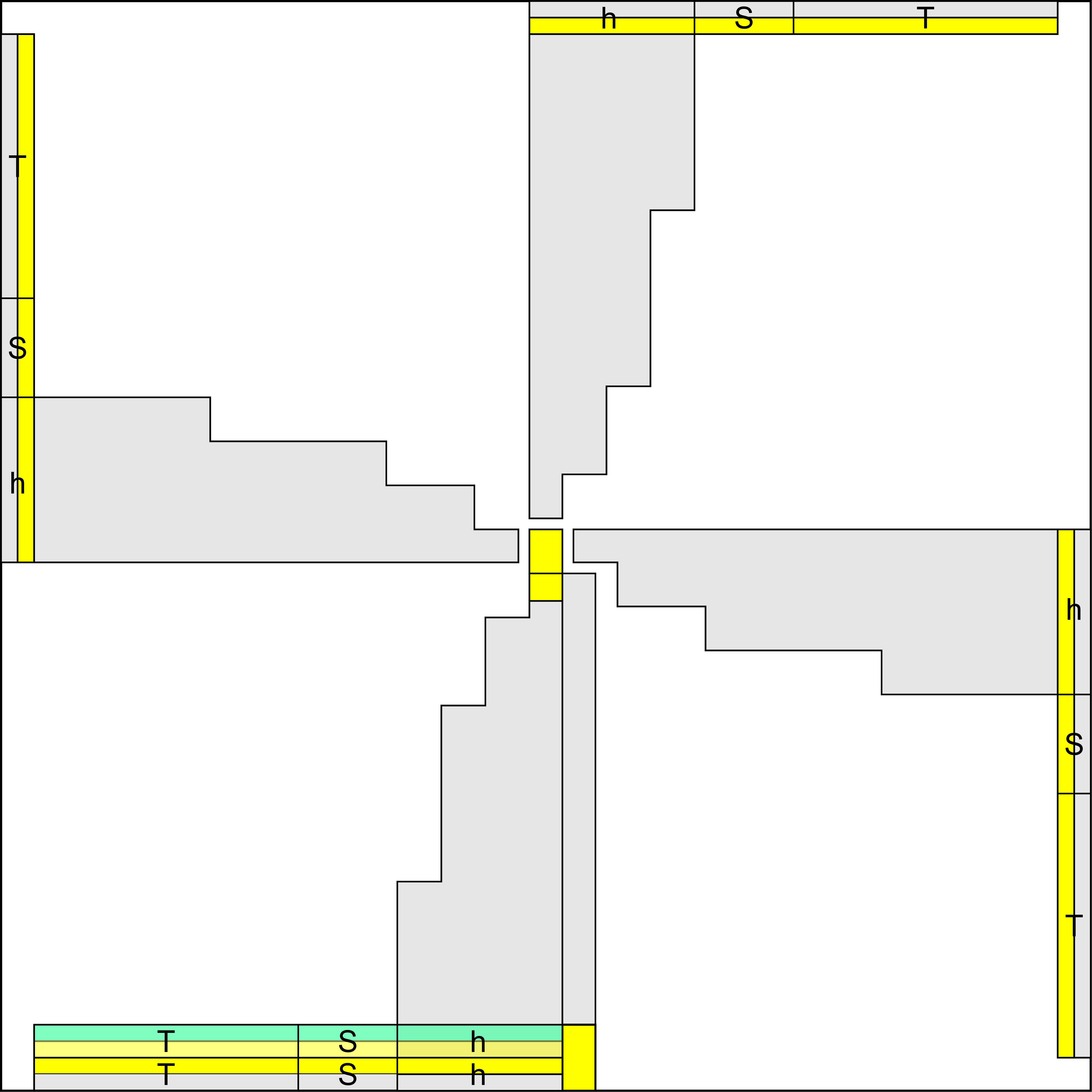}
\caption{A high-level depiction of the output layer (not to scale), part 1/4.  Grey regions represent portions in layer $L_0$, yellow those in $L_1$, and blue in~$L_2$.}
\label{fig:3D-sim-2D-output-level1}
\end{center}
\end{figure}

\begin{figure}
\begin{center}
\includegraphics[width=2.5in]{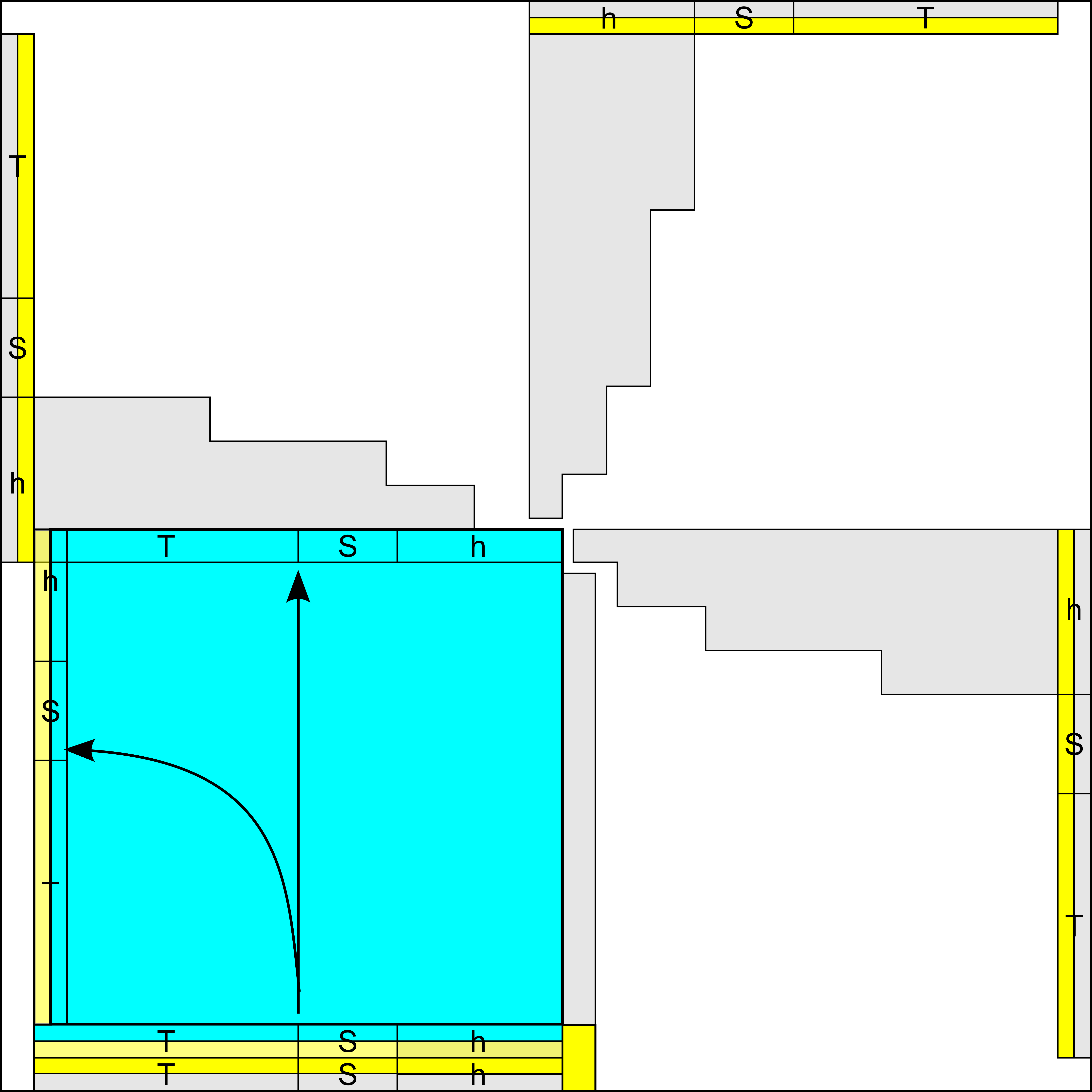}
\caption{A high-level depiction of the output layer shown in blue (not to scale), part 2/4.}
\label{fig:3D-sim-2D-output-level2}
\end{center}
\end{figure}

\begin{figure}
\begin{center}
\includegraphics[width=2.5in]{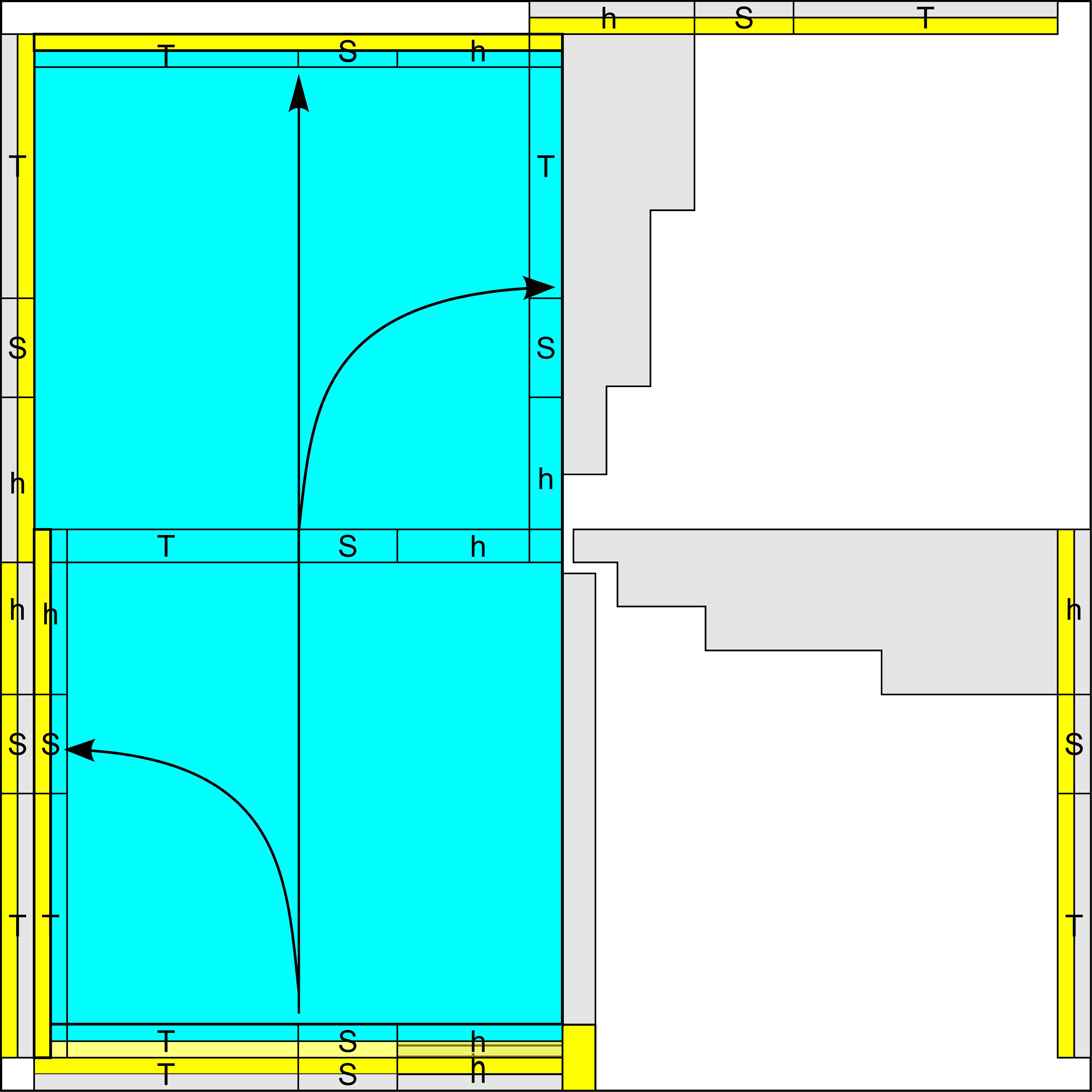}
\caption{A high-level depiction of the output layer (not to scale), part 3/4.}
\label{fig:3D-sim-2D-output-level3}
\end{center}
\end{figure}

\begin{figure}
\begin{center}
\includegraphics[width=2.5in]{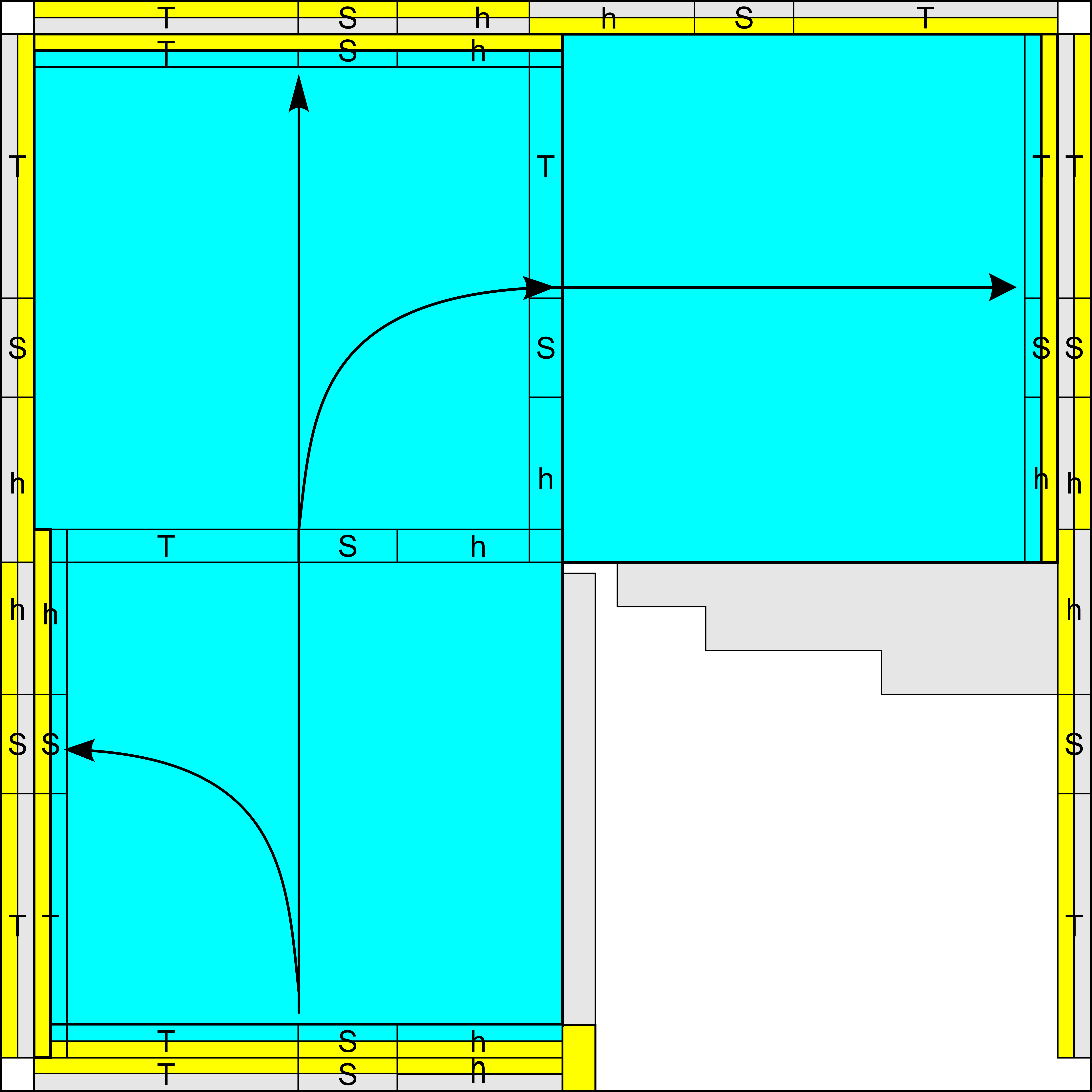}
\caption{A high-level depiction of the output layer (not to scale), part 4/4.}
\label{fig:3D-sim-2D-output-level4}
\end{center}
\end{figure}

\begin{figure}
\begin{center}
\includegraphics[width=3.0in]{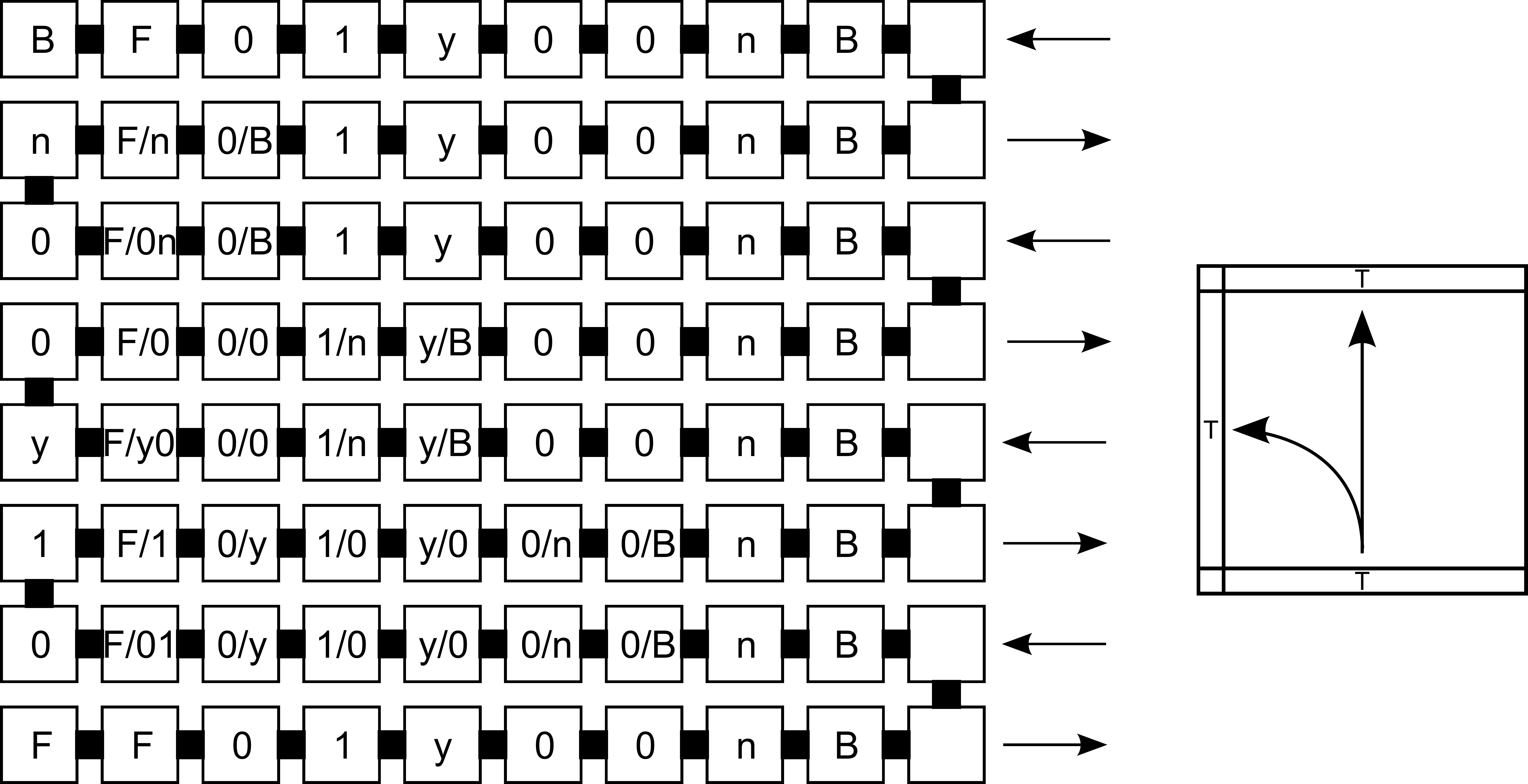}
\caption{A basic depiction of how information can be rotated and simultaneously carried forward by a pattern of zig-zag growth.}
\label{fig:3D-sim-2D-zig-zig-rotate-split}
\end{center}
\end{figure}

\begin{figure}
\begin{center}
\includegraphics[width=3.5in]{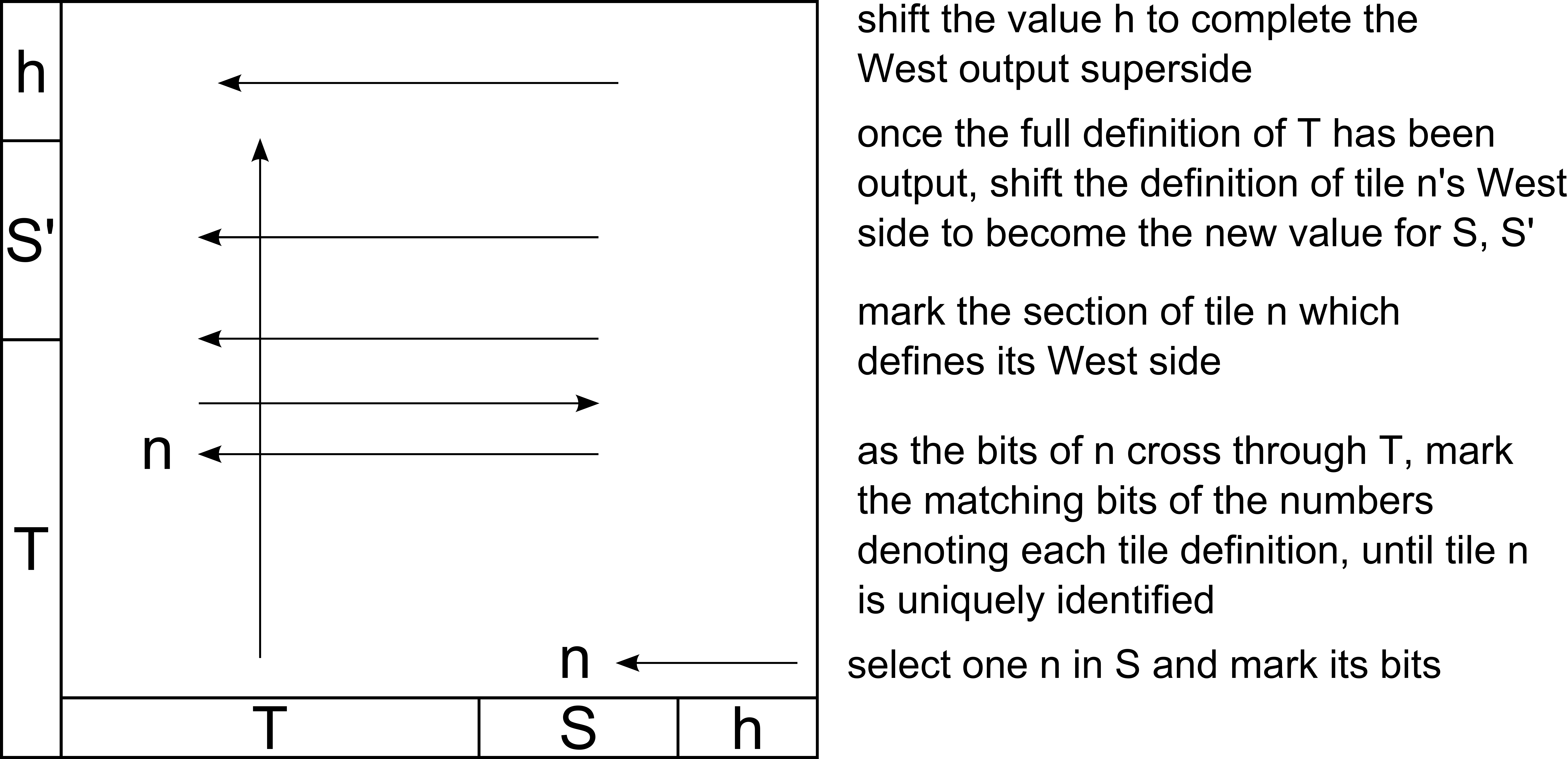}
\caption{A very high-level description of what occurs as the information from a superside propagates forward (upward in this figure) and is simultaneously rotated while the necessary information for the specific output side is selected and rotated into the correct position.}
\label{fig:3D-sim-2D-output-creation}
\end{center}
\end{figure}

\subsubsection{Seed structure}\label{sec:3D-sim-2D-seed-structure}

The seed structure $\sigma_T$ is a single supertile which maps to the seed tile $s \in T$.  It is the only supertile which has no input supersides.  Instead, it has one output superside corresponding to each side of $s$ which has a glue, with the structure of the output superside being identical to the structure of all other output supersides.  Specifically, the output superside consists of the outermost read-write gadgets along the perimeter of a given side, which would normally grow from an output layer.  In order to provide a connected seed structure, each output superside is connected to the center position of $\sigma_T$ by a single-tile-wide path of tiles, and the center position has a tile type unique to the central position of the seed tile.  See Figure~\ref{fig:3D-sim-2D-seed-structure} for an example.

\begin{figure}
\begin{center}
\includegraphics[width=2.0in]{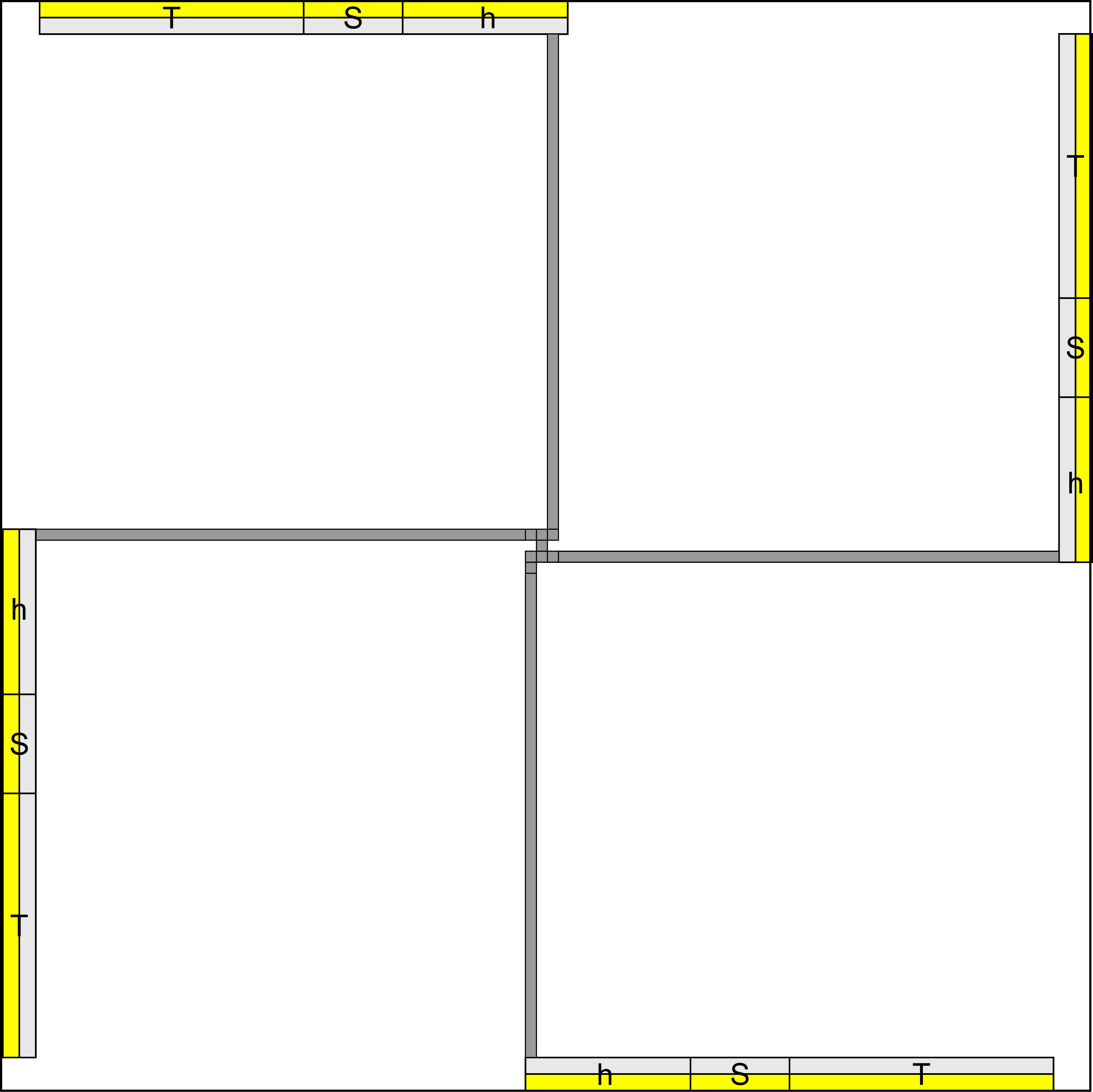}
\caption{The structure of the seed assembly $\sigma_T$.  Note that there would only be an output superside for each side of the seed tile which has a non-zero strength glue.  The other sides would have no tiles present in the seed structure.}
\label{fig:3D-sim-2D-seed-structure}
\end{center}
\end{figure}

\subsubsection{Scale factor of the simulation}

Here we present the scale factor for the simulation of $T$.  Note that when we discuss the amount of space required for the various encodings, each bit or character, rather than being represented by a single tile, is represented by a read-write gadget.  The read-write gadgets used by this construction are constant size, regardless of $T$, with widths and heights equivalent to the total number of different symbols used in the construction, and the total depth of the simulation is always $6$.  Since the increase in the scale caused by the use of read-write gadgets is constant, we ignore it for the rest of this discussion.

First, we discuss the length of the encoding of $T$.  For each $t \in T$, we encode the number assigned to $t$, which is of length $O(\log |T|)$.  Then, for each of the $4$ sides of $t$ we encode a single character ($N$,$E$,$S$,or $W$) and a list which contains the number assigned to each tile type and a single character ($y$, $f$, or $n$), for a length of $4 (1 + |T|(O(\log |T|)+1)) = O(|T| \log |T|)$.  Finally, for each tile there is a single character, $D$, at the end.  This results in a total encoding of size $O(|T| \log |T|)$ for each tile type.  Given the encoding of $|T|$ tile types and two more characters ($B$ and $F$), the full encoding of $T$ requires space $O(|T|^2 \log |T|)$.

The scale factor of the simulation is determined by the length of the sides of the supertiles.  Each side must be sized that that it can contain  1. a constant sized gap at each corner (size $O(1)$), 2. two copies of the encoding of~$T$ (size $O(|T|^2 \log |T|)$), 3. two encodings of $S$, each of which are the encoding of a single side of one tile type (size $O(|T| \log |T|)$), and 4. two copies of $h$, which is the $\log$ of the probe height. To determine the height to which a probe must grow, we first assume that the sides of each supertile contain only the copies of $T$ and $S$, which makes each side of width $2 O(|T|^2 \log |T|) + 2 O(|T| \log |T|) = O(|T|^2 \log |T|)$.  If all sides were of that length, to get to the center, a probe would need to grow to height $h' = O(|T|^2 \log |T|)/2 = O(|T|^2 \log |T|)$, which can be encoded in $\log h' = O(\log (|T|^2 \log |T|))$ space.  We then let $h = h' + (2 \log h')/2$ to account for the additional distance a probe must grow to account for the two copies of $h$ encoded in each side and note that $\log h = O(\log (|T|^2 \log |T|))$. (Note that to get side lengths and a value for $h$ which cause the probes to all grow to within exactly one distance of a read-write gadget from the exact center of the supertile, some ``padding'' of up to width $\log h$ may be added between the encodings of $S$ and $h$).  Thus the size of each side, and therefore the scale factor for the simulation, is $O(|T|^2 \log |T|)$.

\subsubsection{Representation function}\label{sect:repr}

The representation function $R$ for the simulation of a tile set $\mathcal{T}$ maps supertiles over $U$ to tiles of $T$ as follows.  For a supertile $s$, if there is no tile in the center location, it maps to an empty location.  If there is a tile in the center, if it is the special center tile for the seed, $s$ maps to the seed $\sigma$, otherwise $R$ follows the path back down the probe and to the point that a tile number is selected from the set $S$.  The tile number uniquely identifies the tile $t \in T$ that $s$ represents.

\section*{Acknowledgement}
We thank Robert Schweller for discussions on Theorem 3.1 of~\cite{Aggarwal-2005}.

\bibliographystyle{plain}
\bibliography{t1IU,tam}

\end{document}